\newlength{\eqboxstorage}
\newcounter{theorem-counter}
\theoremstyle{plain}
\newtheorem*{theorem*}{Theorem}
\newtheorem*{corollary*}{Corollary}
\newtheorem*{lemma*}{Lemma}
\newtheorem{theorem}[theorem-counter]{Theorem}
\newtheorem{lemma}{Lemma}
\newtheorem{proposition}{Proposition}
\newtheorem{example}{Example}
\newtheorem*{example*}{Example}
\theoremstyle{definition}
\newtheorem{definition}{Definition}
\theoremstyle{remark}
\newtheorem{remark}{Remark}
\DeclareMathOperator{\argmin}{argmin}
\newcommand{\da}{\downarrow}
\newcommand{\tran}[0]{\text{tran}}
\newcommand{\ti}{\tilde}
\newcommand{\bs}[1]{\boldsymbol{#1}}
\newcommand{\tbs}[1]{\tilde{\bs{#1}}}
\newcommand{\hbs}[1]{\hat{\bs{#1}}}
\newcommand{\bbs}[1]{\bar{\bs{#1}}}
\newcommand{\swaps}[0]{\text{Swaps}}
\newcommand{\ls}[0]{\text{LS}}
\newcommand{\rlp}[0]{\mathbb{R}_+^L}
\setlist{nolistsep}
\title{ Goodness-of-fit and utility estimation: \\ what's possible and what's not}
\date{\vspace{-5ex}}
\author{Yujian Chen, Joshua Lanier, and John K.-H. Quah\footnote{Chen: School of Economics and Center for Intelligence Economic Science, Southwestern University of Finance and Economics; Lanier: Division of Economics, Nanyang Technological University; Quah: Department of Economics, National University of Singapore. Lanier gratefully acknowledges financial support from the National Natural Science Foundation of China (W2433187).}}
\begin{document}
\maketitle


\begin{abstract}
A goodness-of-fit index measures the consistency of consumption data with a given model of utility-maximization.  We show that for the class of well-behaved (i.e., continuous and increasing) utility functions there is no goodness-of-fit index that is continuous and {\em accurate}, where the latter means that a perfect score is obtained if and only if a dataset can be rationalized by a well-behaved utility function. While many standard goodness-of-fit indices are inaccurate we show that these indices are (in a sense we make precise) \emph{essentially accurate}. Goodness-of-fit indices are typically generated by loss functions and we find that standard loss functions usually do not yield a best-fitting utility function when they are minimized. Nonetheless, welfare comparisons can be made by working out a {\em robust preference} relation from the data.
\end{abstract}

Keywords: revealed preferences, best-fitting utility functions, Afriat index, Varian index, Swaps index, Houtman-Maks index, welfare criterion, bounded rationality

\section{Introduction}

The well-known theorem of Afriat (see \citet{afriat67}, \citet{diewert73}, and \citet{varian82}) provides economists with a way to test whether a dataset consisting of purchasing decisions from a consumer is consistent with utility-maximization. This test, while easily implementable and widely applied, can be overly demanding. Indeed, researchers have found that both observational and experimental datasets often fail the test in its exact form.  In other words, it is common for consumers or experimental subjects to fall short of exact utility-maximizing behavior. For this reason, researchers have developed various goodness-of-fit indices that measure the severity of a dataset's departure from utility-maximization. In this paper, we provide a systematic examination of goodness-of-fit indices and their close relation, loss functions.  We also propose a criterion for welfare comparisons based on loss functions, and provide an empirical demonstration of our welfare criterion using supermarket scanner data. 

Let $D$ be a dataset consisting of a consumer's purchasing decisions. $D$ consists of $T$ observations where, at observation $t$, the consumer is observed to buy a bundle $\bs{q}^t\in\mathbb{R}_+^L$ of $L$ goods at prices $\bs{p}^t\in\mathbb{R}^L_{++}$; thus, we may write $D$ as $( \bs{q}^t, \bs{p}^t )_{1\leq t \leq T}$.  We say that $D$ can be rationalized by, or is consistent with, the maximization of a well-behaved utility function $U:\mathbb{R}^L_+\to\mathbb{R}$ (where {\em well-behaved} means that $U$ is continuous and increasing) if the function $U$ generates $\bs{q}^t$ as demand at prices $\bs{p}^t$, for every observation $t$; more formally, at each observation $t$, we have  $U(\bs{q}^t)\geq U(\bs{q})$ for all bundles $\bs{q}$ such that $\bs{p}^t \cdot \bs{q} \leq \bs{p^t}\cdot\bs{q}^t$. Afriat's Theorem states that a dataset $D$ can be rationalized by a well-behaved utility function if and only if the dataset satisfies the generalized axiom of revealed preferences (GARP), where GARP is a condition on chosen bundles that excludes revealed preference cycles (see Section \ref{sec:bad-news} for a precise statement). 

A {\em goodness-of-fit index} measures the severity of a dataset's departure from exact utility-maximization. Such an index $\tau$ is a map from $\cal D$ (the collection of all datasets with finitely many observations) to the non-negative numbers. A lower goodness-of-fit measurement (i.e.\ a lower value of $\tau(D)$) indicates that the dataset is closer to being perfectly consistent with well-behaved utility maximization, with $\tau (D)=0$ whenever $D$ is exactly consistent with well-behaved utility maximization.   


We focus on five popular goodness-of-fit indices in this paper. The Afriat index, from \citet{afriat73}, measures goodness-of-fit by measuring the fraction of the consumer's budget which is misspent. The Varian index, from \citet{varian90}, is similar in spirit to the Afriat index but provides a more nuanced measure of budgetary waste. The Swaps index, from \citet{apesteguia-ballester15}, measures (using the Lebesgue measure) the portion of the budget set which is strictly preferred to the bundle actually purchased. The nonlinear Least Squares index measures the distance between the purchase data and the nearest GARP-satisfying dataset. The Houtman-Maks index, from \citet{houtman1985}, reports the number of observations which must be discarded in order for the purchase data to satisfy GARP.\vspace{0.15in} 

\noindent {\bf Tension Between Accuracy and Continuity.}\; It seems sensible to require the goodness-of-fit index $\tau$ to be a continuous function of the data\footnote{We can think of $D = (\bs{q}^t, \bs{p}^t)_{t \leq T}$ as an element of $\mathbb{R}^{2T}$ and so there is a natural way to define continuity for goodness-of-fit indices (see Definition \ref{def:goodness-of-fit-props}).} and also {\em accurate}, in the sense that the index outputs $0$ if and only if $D$ is consistent with the maximization of some well-behaved utility function. An index which is not continuous can give drastically different reports for nearly identical datasets whereas an index which is not accurate will sometimes give the wrong impression about whether the data can be explained by well-behaved utility-maximization.    

The first result of this paper (Proposition \ref{prop:badnews2}) states that there are no goodness-of-fit indices that are continuous and accurate. Thus, all the goodness-of-fit indices used in the literature are either discontinuous or inaccurate. The reason for this is, in essence, that the datasets which obey GARP do not form a closed set. Indeed, one could construct a sequence of datasets $D_n$ obeying GARP with a limit $\widebar D$ where GARP is violated (see Example \ref{example:bad-news} in the next section); accuracy demands $\tau (D_n)=0$ for all $n$ and $\tau(D) >0$ but continuity requires $\tau(D) = \lim_n \tau (D_n) = 0$ and thus $\tau$ cannot be both accurate and continuous. Let us refer to these GARP-violating datasets which perch on the boundary of the rational datasets as \emph{cusp datasets}. 

A tension exists over how to treat cusp datasets and their neighbors. Continuous indices which assign zero to GARP-satisfying datasets must also assign a zero to a cusp dataset. Moreover, continuity forces datasets near to the cusp to take values close to zero (even if they violate GARP). On the other hand, accurate indices assign a cusp dataset (and nearby datasets that also violate GARP) with a strictly positive value. Now, if an economist is performing some empirical revealed preference analysis and many of the datasets investigated are near the cusp then her results will be quite sensitive to whether she uses an accurate or a continuous index.

Among the commonly used indices, we show that the Afriat, Varian, Swaps, and Least Squares indices are continuous but not accurate while the Houtman-Maks index is accurate but discontinuous. That these goodness-of-fit indices have shortcomings is somewhat understood in the literature. Indeed, \citet{andreoni-miller02} and \citet{murphy-banerjee15} point out that the Afriat index is inaccurate and \citet{halevypersitzzrill18} (in footnote 10 of their paper) point out that the Varian index is inaccurate. Similarly, \citet{fisman07} (in appendix III) point out that the Houtman-Maks index has a continuity problem. What has been less understood is that the inaccuracy of the Afriat and Varian indices and the discontinuity of the Houtman-Maks index is not a defect of these particular goodness-of-fit indices but rather the outcome of an inevitable trade-off between accuracy and continuity.\vspace{0.15in} 


\noindent {\bf Essential Accuracy.}\;  While it is too much to ask that an index be continuous and accurate we show that continuity is compatible with a weakened version of accuracy. We say that a goodness-of-fit measure $\tau$ is {\em essentially accurate} if, whenever $\tau(D)=0$, then the purchased bundles in $D$ can be perturbed by an arbitrarily small amount so that the perturbed dataset is consistent with well-behaved utility maximization. We show that the Afriat, Varian, Swaps, and Least Squares indices all satisfy essential accuracy. While this result can mitigate concerns that these indices are inaccurate it does not negate the central tension faced when choosing an index: datasets near a cusp dataset will be treated in a very different fashion depending on whether a continuous or an accurate index is used and there seems to be no principled way to determine which type of index is preferable.   \vspace{0.15in}


\noindent {\bf Loss functions.}\;  A loss function $Q(U;D)$ assigns a non-negative number to each utility function $U$ and purchase dataset $D$. Intuitively, $Q(U;D)$ reports how well the choices in $D$ can be described as being the outcome of maximizing $U$. Loss functions can be used, as in Halevy et al. (2018), to estimate the utility function of the consumer by using $\argmin_{U} Q(U;D)$.  That is, the estimator is a utility function which provides the best description of the data according to $Q$. There is a natural relationship between loss functions and goodness-of-fit indices: a loss function $Q$ generates a goodness-of-fit index via 
$$\tau(D) = \inf_{U} Q(U;D)$$ 
where the infimum is taken over all well-behaved utility functions. It seems natural to want a loss function $Q$ to (i)\, be {\em accurate} in the sense that it returns a perfect score of $0$ if and only if $D$ is rationalized by $U$; (ii)\, be {\em minimizable}, in the sense that $\argmin_U Q(U;D)$ is non-empty; and (iii)\, generate a continuous goodness-of-fit measure. Proposition \ref{prop:bad-news2} shows that there is no loss function satisfying these properties. 

It turns out that the Afriat, Varian, Swaps, Least squares, and Houtman-Maks indices could all be generated by loss functions. Their corresponding loss functions are accurate but none of them are minimizable, except for the Houtman-Maks loss function.\vspace{0.15in}


\noindent {\bf Robust Preference Relation.}\; The absence of a loss-minimizing utility function leads to complications for welfare analysis. Specifically, we might like to determine if a consumer with dataset $D$ prefers bundle $\tbs{q}$ to $\tbs{q}'$. If the collection of best-fitting utility functions, i.e., the set $\argmin_U Q(U;D)$, is nonempty, then it is natural to reach this conclusion if every $U \in \argmin_U Q(U;D)$ ranks $\tbs{q}$ above $\tbs{q}'$. It is less obvious how one is to make statements of this type when a best-fitting utility function does not exist. 

We think that there is a criterion that both makes sense and is empirically implementable.  We say that the bundle $\tbs{q}$ is {\em robustly preferred} to another bundle $\tbs{q}'$ if there is a well-behaved utility function $\ti{U}$ so that $U(\tbs{q}) \geq U(\tbs{q}')$ for every well-behaved utility function $U$ which provides a better fit than $\ti{U}$ in the sense that $Q(U;D) \leq Q(\ti{U};D)$. We show that this robust preference relation has basic properties we would expect of a good welfare measure: it is reflexive and transitive and is computationally feasible for the Afriat and Varian loss functions.\vspace{0.15in} 

\noindent {\bf Other classes of utility functions.}\; There are other classes of utility functions worth examining besides well-behaved utility functions.  For example, it is worth asking if a dataset $D$ can be rationalized by a homothetic utility function; in the case where consumption is over contingent commodities, it makes sense to ask if $D$ can be rationalized by a utility function with the expected utility form.  And if $D$ cannot be rationalized by a utility function from a given class, we would be interested in measuring the size and nature of its departure from that class. In Proposition \ref{prop:good-news} we characterize those classes of utility functions $\cal U$ that admit accurate and minimizable loss functions which generate accurate and continuous goodness-of-fit indices; in essence the characterizing condition is that the collection of datasets which admit a rationalization by a member of $\cal U$ must form a closed set.  We show that the class of homothetic utility functions and the class of utility functions with the expected utility form (with concave Bernoulli functions) satisfy this property.\vspace{0.15in}  

\noindent {\bf Bounded rationality.}\; In many models of bounded rationality, an agent has a utility function but does not optimize perfectly because of some explicitly modelled cognitive limitation. In this case, a loss function $Q(U;D)$ can be understood as a measure of the extent to which the boundedly rational process must be invoked in order to explain how an agent with utility function $U$ could make the choices observed in $D$. Indeed, it is possible to define a loss function that incorporates a specific boundedly rational model, which we illustrate in the paper with consideration set models (the approach is generalized and applied to other models of bounded rational behavior in the online appendix.).  This loss function can then be employed to calculate robust preferences, etc, in a way that is consistent with that model of bounded rationality. \vspace{15pt}


\noindent {\bf Empirical application.}\; We undertake an empirical demonstration of the robust preference concept by investigating the welfare consequences of a reduction in soda consumption for a panel of US households. In our demonstration we calculate how much non-soda beverage consumption must be increased in order to compensate households for a 25\% decrease in soda consumption. We find broad agreement in the amount of compensation required across the three indices we consider: Afriat, Varian, and Houtman Maks. Nevertheless, we find that the Varian index consistently provides sharper welfare bounds than the other two indices. \vspace{0.15in}

\noindent {\bf Related literature.}\; Apart from the goodness-of-fit indices we have already introduced, other examples include the money pump index of \citet{echenique11}, the minimum cost index of \citet{dean-martin16}, the FOC-departure index of \citet{deClippel23}, and the perturbations index of  \citet{hu25}.  There appears to have been little in the way of a broad investigation of the properties of goodness-of-fit indices along the lines of this paper.  In \citet{apesteguia-ballester15}, the Swaps index is axiomatized in a finite choice space context; two of the axioms presented there are continuity and accuracy (which they refer to as ``rationality"). While the finite choice set version of the Swaps index is accurate, it is inaccurate when the choice space is  $\mathbb{R}_+^L$.\footnote{Another approach to making sense of GARP violations is to test for consistency with the hypothesis of random utility maximization; examples in this literature include \citet{kitamura-stoye18} and \citet{deb-kitamura-quah-stoye22}. However, it makes sense to measure a consumer's rationality even in this model.  A natural idea is that a consumer is closer to well-behaved utility maximization than another if her utility functions are less stochastic (in some sense that needs to be formalized). 

We also mention that revealed preference conditions are used to make counterfactual predictions and welfare statements in \citet{varian82}, \citet{blundell-browning-crawford03}, \citet{cherchye-demuynck-derock19}, and \citet{adams23}.}     

Using the argmin of a loss function as an estimator is a standard procedure. Ordinary least squares, maximum likelihood, and the generalized method of moments are instances of this approach. In \citet{apesteguia-ballester15} the Swaps loss function is introduced as a means to estimate preferences (note that most of this paper assumes a finite choice space). \citet{chambers-echenique-lambert21} consider how to recover preferences from inconsistent binary choice data (i.e.\ agents are given pairs of alternatives and asked to select one of them); their approach uses a version of the Kemeny distance as a loss function. \citet{halevypersitzzrill18} explores using the Afriat, Varian, Least Squares, and Houtman-Maks loss functions to estimate the utility function of a consumer but avoids the problems we point out here by confining utility functions to a parametric family. \vspace{0.15in}


\noindent {\bf Organization of paper.}\;  Section \ref{sec:bad-news} introduces goodness-of-fit indices and explains why these indices for the class of well-behaved  utility functions cannot be both accurate and continuous. Section \ref{sec:bad-news2} focuses on loss functions. We state an impossibility result for loss functions and show that the Afriat, Varian, Swaps, and Least Squares loss functions are not minimizable for the class of well-behaved utility functions. This section also develops the robust preference criterion for welfare analysis and discusses boundedly rational models. Section \ref{sec:other} analyzes other classes of utility functions and characterizes those classes that admit accurate and minimizable loss functions that generate accurate and continuous goodness-of-fit indices. Section \ref{sec:empirical} presents our empirical demonstration and Section \ref{sec:conclusion} concludes. An appendix collects the proofs which do not appear in the body of the paper and provides a method for calculating the Afriat index exactly. An online appendix contains a section which discusses boundedly rational models, a section which discussed how to compute the robust preference relation for the Houtman-Maks loss function, and a section which provides a characterization of the robust preference relation for the Varian loss function. 

\section{Utility maximization and goodness-of-fit indices} \label{sec:bad-news}

We depict consumption bundles of $L$ goods by a vector $\bs{q} = (q_1, q_2, \ldots, q_L) \in \rlp$, where $q_{\ell} \geq 0$ is the amount of good $\ell$ consumed. A vector $\bs{p} = (p_1, p_2, \ldots, p_L) \in \mathbb{R}_{++}^L$ is a price vector where $p_{\ell} > 0$ is the price of good $\ell$. The analyst observes $T$ purchasing decisions of a consumer and assembles these decisions into a \emph{purchase dataset} $D = ( \bs{q}^t, \bs{p}^t )_{t \leq T}$ where $\bs{q}^t \in \rlp \backslash \{ \bs{0} \}$ is the bundle purchased by the consumer in period $t$ when the prices of the $L$ goods were $\bs{p}^t \in \mathbb{R}_{++}^L$. The purchase dataset $D$ is \emph{rationalized} by a utility function $U: \rlp \rightarrow \mathbb{R}$ if $U$ explains the data, in the sense that, at each observation $t$, $U(\bs{q}^t) \geq U(\bs{q})$ for all $\bs{q} \in B(\bs{p}^t, \bs{p}^t \cdot \bs{q}^t)$, where 
\begin{equation} \label{eq:budget-set}
	B(\bs{p}, m) = \{ \bs{q} \in \mathbb{R}_+^L: \bs{p} \cdot \bs{q} \leq m \}
\end{equation}
is the linear budget set arising from price $\bs{p}$ and expenditure level $m\in\mathbb{R}_+$; in other words, $\bs{q}^t$ gives more utility than any other bundle that is weakly cheaper than it at prevailing prices.  A utility function which is increasing\footnote{By increasing we mean $\bs{q} > \bs{q}'$ implies $U(\bs{q}) > U(\bs{q}')$.} and continuous is said to be \emph{well-behaved}; we denote the family of such utility functions by ${\cal U}_{WB}$.  If a purchase dataset can be rationalized by a well-behaved utility function, we refer to such a dataset as \emph{rationalizable}.


Let $D = ( \bs{q}^t, \bs{p}^t )_{t \leq T}$ be a purchase dataset. If $\bs{q}$ was affordable when $\bs{q}^t$ was purchased, i.e.\ $\bs{p}^t \cdot \bs{q}^t \geq \bs{p}^t \cdot \bs{q}$, then we write $\bs{q}^t \ R \ \bs{q}$; further, if $\bs{q}$ was strictly cheaper than $\bs{q}^t$ when $\bs{q}^t$ was purchased (i.e.\ $\bs{p}^t \cdot \bs{q}^t > \bs{p}^t \cdot \bs{q}$) then we write $\bs{q}^t \ P \ \bs{q}$. We refer to the binary relations $R$ and $P$ as the {\em direct} and {\em strict direct revealed preference relations} for $D$. The dataset $D$ satisfies the \emph{generalized axiom of revealed preference} (GARP) if, for all $t_1, t_2, \ldots, t_K$ such that $\bs{q}^{t_1} \ R \ \bs{q}^{t_2} \ R \ \ldots \ R \ \bs{q}^{t_K}$, it is not the case that $\bs{q}^{t_K} \ P \ \bs{q}^{t_1}$. In other words, $D$ violates GARP when it contains a {\em revealed preference-cycle}, or {\em RP-cycle} for short, i.e. a cycle of revealed preference relations
\begin{equation}\label{violateGARP}
	\bs{q}^{t_1} \ R \ \bs{q}^{t_2} \ R \ \ldots \ R \ \bs{q}^{t_K} \ R \ \bs{q}^{t_1}
\end{equation} 
where at least one of the $R$ relations can be replaced with $P$. The following result is due to \citet{afriat67}.

\begin{theorem}[Afriat] \label{theorem:afriat}
	A purchase dataset $D = (\bs{q}^t,\bs{p}^t)_{t \leq T}$ can be rationalized by a well-behaved utility function if and only if $D$ satisfies GARP.
\end{theorem}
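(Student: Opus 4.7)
The plan is to prove the two implications separately, following what is essentially Afriat's classical argument, with Varian's modern streamlining.

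First I would dispatch the easy direction: if $D$ is rationalized by a well-behaved $U$, then $D$ satisfies GARP. Suppose toward contradiction that $D$ contains an RP-cycle $\bs{q}^{t_1} R \bs{q}^{t_2} R \cdots R \bs{q}^{t_K} R \bs{q}^{t_1}$ with at least one link strict ($P$). Each $R$ relation $\bs{q}^{t_i} R \bs{q}^{t_{i+1}}$ means $\bs{q}^{t_{i+1}} \in B(\bs{p}^{t_i}, \bs{p}^{t_i}\cdot\bs{q}^{t_i})$, so rationalization yields $U(\bs{q}^{t_i}) \geq U(\bs{q}^{t_{i+1}})$. For the strict link, the inequality $\bs{p}^{t_i}\cdot\bs{q}^{t_i} > \bs{p}^{t_i}\cdot\bs{q}^{t_{i+1}}$ leaves room to perturb $\bs{q}^{t_{i+1}}$ upward to a bundle $\bs{q}'$ that is still affordable; monotonicity of $U$ then gives $U(\bs{q}^{t_i}) \geq U(\bs{q}') > U(\bs{q}^{t_{i+1}})$. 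Chaining around the cycle yields $U(\bs{q}^{t_1}) > U(\bs{q}^{t_1})$, a contradiction.

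For the hard direction I would use the Afriat inequalities. The goal is to produce scalars $\phi^t \in \mathbb{R}$ and $\lambda^t > 0$ for $t = 1, \ldots, T$ satisfying
\begin{equation*}
\phi^t \leq \phi^s + \lambda^s\, \bs{p}^s \cdot (\bs{q}^t - \bs{q}^s) \qquad \text{for all } s, t.
\end{equation*}
Given such numbers, the function $U(\bs{q}) = \min_{t \leq T}\bigl\{\phi^t + \lambda^t\, \bs{p}^t\cdot(\bs{q} - \bs{q}^t)\bigr\}$ is a minimum of finitely many affine functions with strictly positive gradients $\lambda^t \bs{p}^t$, hence continuous, concave, and (strictly) increasing; evaluating at $\bs{q}^t$ shows that the $t$-th affine piece attains the minimum at $\bs{q}^t$, and the Afriat inequality applied with any $\bs{q} \in B(\bs{p}^t, \bs{p}^t\cdot\bs{q}^t)$ combined with $\lambda^t > 0$ yields $U(\bs{q}^t) \geq U(\bs{q})$. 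Thus $U \in \mathcal{U}_{WB}$ rationalizes $D$.

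The main obstacle is therefore the combinatorial step of extracting Afriat numbers from GARP. I would argue as follows: form the weighted directed graph on vertices $\{1, \ldots, T\}$ with edge weight $w(s,t) = \bs{p}^s\cdot(\bs{q}^t - \bs{q}^s)$ from $s$ to $t$. The desired inequalities can be rewritten, after setting $\phi^t = \lambda^t\psi^t$ and rescaling, as a system that admits a solution whenever the graph has no ``improving'' cycle of negative length in a suitable sense. The key lemma is: if $D$ satisfies GARP, then one can choose $\lambda^t > 0$ so that every cycle in this graph has nonnegative weight, and strictly positive whenever any edge is strict (this is where GARP's exclusion of $P$-edges in cycles is used). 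Once this is established, the shortest-path distances from any fixed base vertex provide the $\phi^t$'s. I would verify the lemma by induction on $T$, successively identifying an observation $t^*$ that is maximal in the transitive closure of $R$ (such a $t^*$ exists by GARP), peeling it off, applying the inductive hypothesis to the remaining $T-1$ observations, and then choosing $\lambda^{t^*}$ large enough to make all new inequalities involving $t^*$ hold. This inductive step is where most of the real work lies, and it is the only place the strict inequality hidden inside GARP is genuinely exploited.
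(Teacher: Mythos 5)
The paper does not prove this theorem at all: it is stated as a classical result and attributed to \citet{afriat67}, so there is no in-paper proof to compare against. Your reconstruction is the standard argument (Afriat's inequalities plus the min-of-affine-functions utility, with the combinatorial lemma handled by induction in the style of Varian and of Fostel--Scarf--Todd). The easy direction is complete and correct -- using the strict budget inequality to perturb the bundle upward and invoke monotonicity is exactly how the strict link forces a strict utility drop. The construction $U(\bs{q}) = \min_t\{\phi^t + \lambda^t \bs{p}^t\cdot(\bs{q}-\bs{q}^t)\}$ and the verification that the Afriat inequalities make it rationalize $D$ are also correct, and it delivers more than the theorem asks for (concavity as well as continuity and monotonicity).

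The one place your sketch is imprecise is the inductive extraction of the Afriat numbers. First, the pairing of ``maximal element'' with ``choose $\lambda^{t^*}$ large'' is backwards: the observation you can peel off with a \emph{large} multiplier is one that is not strictly revealed preferred to anything (a \emph{minimal} one, for which $\bs{p}^{t^*}\cdot(\bs{q}^s-\bs{q}^{t^*})\geq 0$ for all $s$, so that the outgoing inequalities $\phi^s \leq \phi^{t^*}+\lambda^{t^*}\bs{p}^{t^*}\cdot(\bs{q}^s-\bs{q}^{t^*})$ can be satisfied by inflating $\lambda^{t^*}$); for a maximal element the relevant constraints bound $\phi^{t^*}$ from above and you want $\lambda^{t^*}$ small. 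Second, GARP permits weak revealed-preference cycles among distinct observations, so a single extremal observation need not be peelable on its own: the induction has to remove an entire equivalence class of observations linked by zero-cost edges and assign them a common $\phi$ level, which is precisely where the published proofs spend their effort. These are repairable details of a well-known argument rather than a wrong approach, but as written the inductive step would not go through for datasets containing weak cycles.
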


\subsection{Goodness-of-fit indices} \label{sec:goodness-of-fit}

A \emph{goodness-of-fit index} is a function $\tau$ which maps each purchase dataset to a non-negative number. Intuitively, a smaller value of $\tau(D)$ means that the behavior exhibited in the dataset is more in line with the model of utility maximization being tested. Below we define a goodness-of-fit index and identify two of its desirable properties.

\begin{definition} \label{def:goodness-of-fit-props}
	Let $\mathcal{U}$ be some non-empty collection of utility functions. A map $\tau$ from purchase datasets to $\mathbb{R}_+$ is a {\em goodness-of-fit index for $\cal U$} if $\tau (D)=0$ whenever $D$ can be rationalized by some element of $\cal U$.  The index $\tau$ is \emph{accurate for $\mathcal{U}$} if  $\tau(D) = 0$ if and only if $D$ can be rationalized by some $U \in \mathcal{U}$ and $\tau$ is continuous if, for all $T$, the expression $\tau( (\bs{q}^t, \bs{p}^t)_{t \leq T} )$ is continuous as a function of the data $(\bs{q}^t, \bs{p}^t)_{t \leq T}$.\footnote{In other words, for each $T$ the goodness-of-fit index $\tau$ is continuous when treated as a function which maps $( ( \rlp \backslash \{ \bs{0} \} ) \times \mathbb{R}_{++}^L)^T$ into $\mathbb{R}_{+}$. Note that a purchase dataset $D = (\bs{q}^t, \bs{p}^t)_{t \leq T}$ has, by definition, $\bs{q}^t \neq \bs{0}$ for each $t$ which is why a continuous $\tau$ is thought of as a mapping from $( ( \rlp \backslash \{ \bs{0} \} ) \times \mathbb{R}_{++}^L)^T$ into $\mathbb{R}_{+}$ instead of $( \rlp \times \mathbb{R}_{++}^L)^T$ into $\mathbb{R}_{+}$.} 
\end{definition}

Accurate goodness-of-fit indices correctly report when a dataset is consistent with the tested hypothesis and correctly report when a dataset is not consistent. Continuous goodness-of-fit indices do not report drastically different numbers for infinitesimal changes in the data.  Unfortunately, the  following example demonstrates that continuity and accuracy are not compatible properties for a goodness-of-fit index, in the case where ${\cal U}={\cal U}_{WB}$.\footnote{For other class of utilty functions, these properties can be compatible; see Section \ref{sec:other}.}  

\begin{figure}[h]
	\centering
	\begin{subfigure}[b]{0.45\textwidth}
		\centering
		\includegraphics[width=\textwidth]{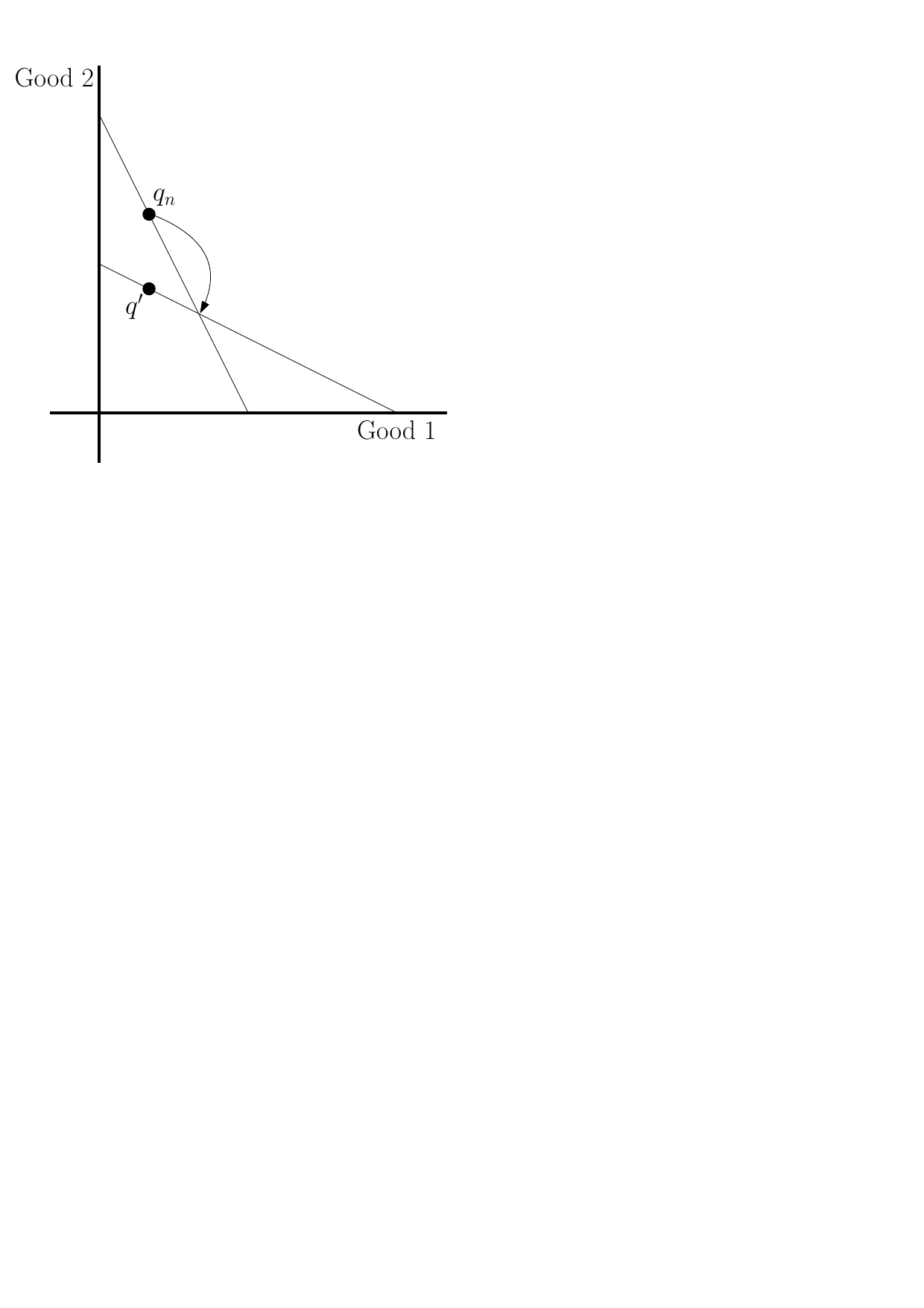}
		\caption{Dataset $D_1$ satisfies GARP.}
		\label{fig:D_1}
	\end{subfigure}
	\hfill
	\begin{subfigure}[b]{0.45\textwidth}
		\centering
		\includegraphics[width=\textwidth]{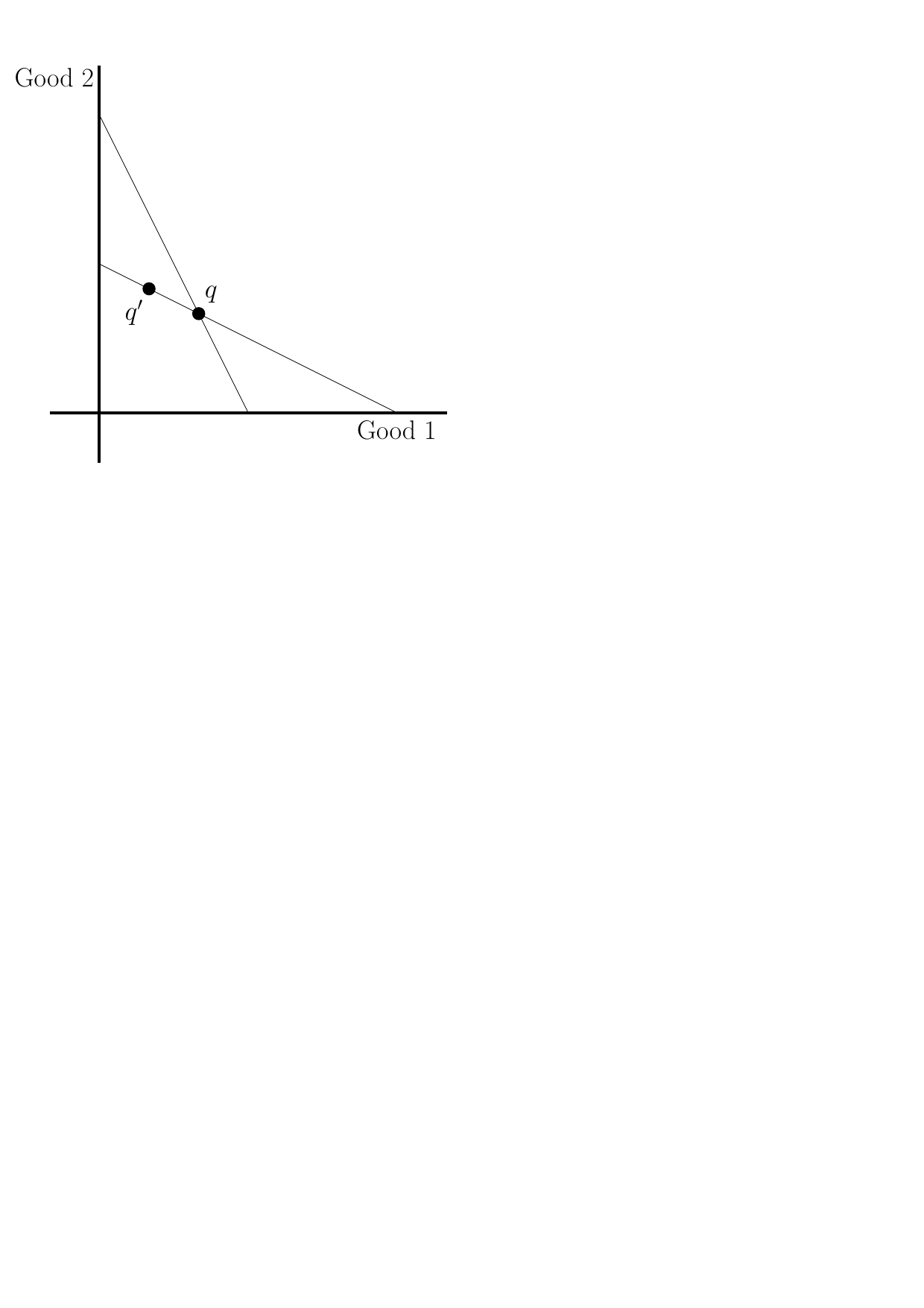}
		\caption{Dataset $\widebar D$ does not satisfy GARP.}
		\label{fig:D}
	\end{subfigure}
	\caption{The dataset on the left, referred to as $D_1$ in Example \ref{example:bad-news}, satisfies GARP. The sequence $\bs{q}_n = (4 - \tfrac{1}{3n}, 4 + \tfrac{2}{3n})$ tends to the location where the two budget lines cross (i.e.\ where the arrow is pointing in the figure to the left). The dataset which contains this limiting bundle, depicted on the right, does not satisfy GARP.}
	\label{fig:bad_limit}
\end{figure}

\begin{example} \label{example:bad-news}
	For each $n$ let $\bs{q}_n = ( 4 - \tfrac{1}{3n}, 4 + \tfrac{2}{3n} )$ and $\bs{q} = \lim_n \bs{q}_n = ( 4,4 )$. Also, let $\bs{q}' = ( 2, 5 )$, $\bs{p} = ( 2,1 )$, and $\bs{p}' = (1,2)$. Assemble these objects into purchase datasets $D_n = (( \bs{q}_n, \bs{p} ), ( \bs{q}', \bs{p}' ) )$ and $\widebar D = ( (\bs{q}, \bs{p}), (\bs{q}', \bs{p}') )$. The datasets $D_1$ and $\widebar D$ are pictured in Figure \ref{fig:bad_limit}. Note that $\bs{p} \cdot \bs{q} = \bs{p} \cdot \bs{q}_n = \bs{p}' \cdot \bs{q}' = 12$, $\bs{p} \cdot \bs{q}' = 9$, and $\bs{p}' \cdot \bs{q}_n = 12 + \tfrac{1}{n}$. Thus, $\bs{q}_n \ P \ \bs{q}'$ but $\bs{q}' \ \cancel{R} \ \bs{q}_n$ and so $D_n$ satisfies GARP. But, $\bs{q} \ P \ \bs{q}'$ and $\bs{q}' \ R \ \bs{q}$ and so $\widebar D$ violates GARP. 
	
	Suppose that $\tau$ is a goodness-of-fit index which is continuous and accurate for the family of well-behaved utility functions. As each $D_n$ satisfies GARP we have $\tau(D_n) = 0$ for each $n$. Since $\bs{q}_n \rightarrow \bs{q}$ and because $\tau$ is continuous we must have $\tau(\widebar D)= \lim_n \tau(D_n) = 0$ which means that $\tau$ is not accurate since $\widebar D$ violates GARP and is not rationalizable by a well-behaved utility function.\footnote{\label{footnote:SARP}The datasets $D_n$ not only satisfy GARP but they also satisfy the strong axiom of revealed preferences (SARP): a dataset satisfies SARP if, for all $t_1, t_2, \ldots, t_K$ we have $\bs{q}^{t_1} \ R \ \bs{q}^{t_2} \ R \ \ldots \ R \ \bs{q}^{t_K} \ R \ \bs{q}^{t_1}$ implies $\bs{q}^{t_1} = \bs{q}^{t_2} = \ldots = \bs{q}^{t_K}$. \citet{matzkin-richter91} show that a dataset satisfies SARP if and only if it can be rationalized by a continuous, increasing, and strictly concave utility function. As the datasets $D_n$ in our example satisfy SARP the example also shows that there are {\em no goodness-of-fit indices which are continuous and accurate for the collection of continuous, increasing, and strictly concave utility functions.}}  
\end{example}

This example exploits the fact that the collection of datasets which obey GARP is not closed, in the sense that there could be a sequence of GARP-obeying datasets $D_n$ with a limit $\bar D$ that violates GARP, and if the sequence of datasets have an index of zero, then continuity will force its limit to also have an index of zero.  The index is then inaccurate in the sense that it fails to identify a dataset violating GARP.  

The dataset $\widebar D$ in Example \ref{example:bad-news} belongs to a class of datasets which any continuous goodness-of-fit index will have difficulty identifying as not rationalizable. To be more precise, we refer to an RP-cycle in a dataset $D$ as {\em weak} if there is at least one link in the cycle (\ref{violateGARP}) that {\em cannot} be replaced by $P$; otherwise, we refer to it as a {\em strong} RP-cycle.  The RP-cycle in dataset $D$ in Example \ref{example:bad-news} is a weak RP-cycle, since $\bs{q} \ P \ \bs{q}'$ and $\bs{q}' \ R \ \bs{q}$ but we cannot replace $R$ with $P$ in the latter relation.    

The next proposition generalizes Example \ref{example:bad-news} by showing that the the issue it highlights holds for all datasets which only contain weak cycles.  

\begin{proposition} \label{prop:badnews2}
	Suppose the purchase dataset $D = (\bs{q}^t, \bs{p}^t)_{t \leq T}$ violates GARP but has only weak RP-cycles. Then there is a sequence of GARP-obeying datasets $D_n=(\bs{q}_n^t, \bs{p})_{t \leq T}$ with $\bs{p}^t \cdot \bs{q}_n^t = \bs{p}^t \cdot \bs{q}^t$ for all $n$ and $t$ and $\lim_{n\to\infty} \bs{q}^t_n=\bs{q}^t$ for all $t$.
	
	Thus, if a goodness-of-fit index $\tau$ for ${\cal U}_{WB}$ is continuous, then $\tau (D)=0$, which means that $\tau$ is not accurate.  It follows that  there are no continuous and accurate goodness-of-fit indices for $\mathcal{U}_{WB}$. 
\end{proposition}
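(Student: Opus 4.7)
The plan is to construct the perturbed datasets by moving each bundle $\bs{q}^t$ along its own budget hyperplane, i.e.\ $\bs{q}_n^t = \bs{q}^t + \tfrac{1}{n}\bs{v}^t$ where $\bs{v}^t \in \mathbb{R}^L$ satisfies $\bs{p}^t \cdot \bs{v}^t = 0$. This automatically ensures $\bs{p}^t \cdot \bs{q}_n^t = \bs{p}^t \cdot \bs{q}^t$ and $\bs{q}_n^t \to \bs{q}^t$; the directions $\bs{v}^t$ must be chosen so that $D_n$ satisfies GARP for large $n$. Given the construction, the final ``no continuous and accurate $\tau$'' statement follows immediately: for any such $D$ (which exists by Example~\ref{example:bad-news}), accuracy forces $\tau(D_n) = 0$, continuity forces $\tau(D) = \lim_n \tau(D_n) = 0$, and accuracy also forces $\tau(D) > 0$, a contradiction.

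The technical engine is a dichotomy on weak direct revealed preferences. I call a weak link $\bs{q}^s R \bs{q}^t$ (i.e.\ $\bs{p}^s \cdot \bs{q}^s = \bs{p}^s \cdot \bs{q}^t$, with $R$ not replaceable by $P$) \emph{proportional} if $\bs{p}^s \propto \bs{p}^t$ and \emph{non-proportional} otherwise. Proportional weak links are invariant under the perturbation, because $\bs{p}^t\cdot\bs{v}^t=0$ forces $\bs{p}^s\cdot\bs{v}^t=0$ by proportionality. Non-proportional weak links are breakable: choosing $\bs{v}^t$ with $\bs{p}^s\cdot\bs{v}^t > 0$ makes $\bs{p}^s \cdot \bs{q}_n^t > \bs{p}^s \cdot \bs{q}^s$, killing the $R$-relation. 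The central combinatorial lemma is: \emph{under the hypothesis, every RP-cycle in $D$ contains at least one non-proportional weak link.} I would prove this by contrapositive using the substitution rule
\[
  \bs{q}^a R^{w\text{-prop}} \bs{q}^b \;\text{and}\; \bs{q}^b P \bs{q}^c \;\Longrightarrow\; \bs{q}^a P \bs{q}^c,
\]
which holds because $\bs{p}^a\cdot\bs{q}^a = \bs{p}^a\cdot\bs{q}^b$ (weak link scaled by the proportionality constant) and $\bs{p}^a \cdot \bs{q}^b > \bs{p}^a \cdot \bs{q}^c$ (rescaling $\bs{p}^b\cdot\bs{q}^b > \bs{p}^b\cdot\bs{q}^c$). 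Iterating this identity along a cycle whose weak links are all proportional collapses it into a pure $P$-cycle --- a strong RP-cycle --- contradicting the hypothesis.

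For each $t$, let $S_t = \{s : (s,t) \text{ is a non-proportional weak link}\}$. I would then choose $\bs{v}^t$ with $\bs{p}^t \cdot \bs{v}^t = 0$ and $\bs{p}^s \cdot \bs{v}^t > 0$ for every $s \in S_t$; by Gordan's theorem existence reduces to the cone condition $\bs{p}^t \notin \operatorname{cone}\{\bs{p}^s : s \in S_t\}$. Given such $\bs{v}^t$, one checks that for large $n$ every strict $P$-relation of $D$ is preserved, every non-$R$ relation is preserved, every proportional weak link survives as a weak $R_n$, and every non-proportional weak link is broken. Hence $R_n$ restricted to distinct indices equals (original $P$) $\cup$ (proportional weak links of $D$) and $P_n$ equals the original $P$. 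Any hypothetical RP-cycle in $D_n$ would then be an RP-cycle of $D$ built only from $P$-links and proportional weak links --- forbidden by the combinatorial lemma. Therefore $D_n$ satisfies GARP.

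The hard part is the cone condition $\bs{p}^t \notin \operatorname{cone}\{\bs{p}^s : s \in S_t\}$. The plan here is to argue by contradiction: a nonnegative representation $\bs{p}^t = \sum_{s \in S_t} \lambda_s \bs{p}^s$ combined with the weak-link equalities $\bs{p}^s\cdot\bs{q}^s = \bs{p}^s\cdot\bs{q}^t$ for $s \in S_t$ should be manipulable into an explicit strong RP-cycle in $D$, generalizing the ``alternating'' substitution of the combinatorial lemma to the conic setting. It may be cleaner to formulate the perturbation as a single Farkas feasibility problem across all $t$ simultaneously, so that the dual multipliers have enough graph-theoretic structure to assemble into a genuine $P$-cycle. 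This translation from dual-infeasibility to a concrete strong RP-cycle is the most delicate step of the proof, and where I expect the bulk of the technical work to lie.
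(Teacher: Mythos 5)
Your overall architecture is sound, and two of your ingredients are correct and genuinely different from the paper's: the derivation of the impossibility statement from the perturbation claim matches the paper's logic, and your combinatorial lemma (every RP-cycle in a dataset with only weak RP-cycles contains a non-proportional weak link) is true, with exactly the substitution argument you sketch. The paper, however, proves the perturbation claim by a completely different route: induction on $T$. It locates an observation $s$ with $\bs{q}^s\ \cancel{P}\ \bs{q}^t$ for all $t$ (such an $s$ exists, else one could chain $P$-links into a strong cycle), removes it, perturbs the remaining $T$ observations by the induction hypothesis into a dataset rationalized by some well-behaved $U$, and then re-inserts observation $s$ with its bundle moved along its budget hyperplane toward the $U$-optimal bundle on $B(\bs{p}^s,\bs{p}^s\cdot\bs{q}^s)$. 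This sidesteps any need to choose perturbation directions simultaneously for all observations, which is precisely where your proof is incomplete.

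The gap you flag at the end is real, and the cone condition as you state it is in fact \emph{false}: with $\bs{p}^t=(1,1)$, $\bs{q}^t=(1,1)$, $\bs{p}^{s_1}=(2,1)$, $\bs{q}^{s_1}=(0,3)$, $\bs{p}^{s_2}=(1,2)$, $\bs{q}^{s_2}=(3,0)$, both $(s_1,t)$ and $(s_2,t)$ are non-proportional weak links, yet $\bs{p}^t=\tfrac13\bs{p}^{s_1}+\tfrac13\bs{p}^{s_2}\in\operatorname{cone}\{\bs{p}^{s_1},\bs{p}^{s_2}\}$, and every $\bs{v}^t=(a,-a)$ orthogonal to $\bs{p}^t$ satisfies $\bs{p}^{s_1}\cdot\bs{v}^t=-\bs{p}^{s_2}\cdot\bs{v}^t$, so the two links cannot be broken simultaneously. (Embedding this configuration into a larger dataset that violates GARP with only weak cycles elsewhere shows the hypothesis of the proposition does not rescue the condition.) The reason is that ``only weak RP-cycles'' says nothing about weak links that do not lie on any cycle, so $S_t$ as you define it is too large. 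Repairing this requires (i) selecting a \emph{subset} of non-proportional weak links whose removal kills every cycle --- a feedback-arc-set-type choice --- and proving the cone condition only for that subset, and (ii) handling the side effect that any unbroken weak link $(s,t)$ with $\bs{p}^s\cdot\bs{v}^t<0$ is converted into a \emph{strict} relation $\bs{q}^s\ P_n\ \bs{q}^t_n$, which can create new strong cycles in $D_n$; your claim that $P_n$ equals the original $P$ fails in exactly this case. Both issues live in the step you explicitly deferred, so the proof as written does not go through without substantial additional work.
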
  


The proof of this result is in the appendix.  Note that the existence of the sequence of GARP-obeying datasets $D_n$ claimed in this proposition is obvious when there are just two observations (such as in Example \ref{example:bad-news}). The difficulty in establishing this proposition lies in showing that the desired sequence exists no matter how many observations are present in the data. Our proof uses an induction argument on the number of observations.

\subsection{Essential accuracy}

While there are no goodness-of-fit indices which are both continuous and accurate, the news is not all bad, because there are indices (some of which we introduce in Section \ref{sec:popular-indices}) where the inaccurary is confined to datasets with only weak RP-cycles.  To be specific, these indices have the following property:
\begin{equation}\label{strongRP}
\tau (D)>0\:\: \mbox{whenever $D$ contains a strong RP cycle.}
\end{equation}
In other words, if for some dataset $D$ we have $\tau (D)=0$, then either $D$ obeys GARP or $D$ contains only weak RP-cycles; when the latter holds, we know from Proposition \ref{prop:badnews2} that there are datasets close to $D$ that do satisfy GARP (and are therefore rationalizable).  These indices are thus {\em essentially accurate} in the sense we define below.  

\begin{definition} \label{def:weak-accuracy}
	A goodness-of-fit index $\tau$ for $\cal U$ is \emph{essentially accurate} if, for any purchase dataset $D = (\bs{q}^t, \bs{p}^t)_{t \leq T}$ for which $\tau(D) = 0$, there is a sequence of datasets $D_n=(\bs{q}_n^t, \bs{p})_{t \leq T}$ with $\bs{p}^t \cdot \bs{q}_n^t = \bs{p}^t \cdot \bs{q}^t$ for all $n$ and $\lim_{n\to\infty} \bs{q}^t_n=\bs{q}^t$, such that $D_n$ admits a rationalization in $\cal U$. 
\end{definition}

It is obvious that if a dataset contains a strong RP cycle, then any dataset arising from small perturbation of the bundles will also contain a strong RP cycle and therefore cannot be rationalized. The following result summarizes our observations.

\begin{proposition} \label{prop:weak-accuracy2}
	A goodness-of-fit index $\tau$ is essentially accurate for $\mathcal{U}_{WB}$ if and only if it satisfies \eqref{strongRP}.  
\end{proposition}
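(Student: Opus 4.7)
The plan is to establish the equivalence by handling each direction separately, leaning on Proposition \ref{prop:badnews2} for the nontrivial direction. Fix throughout a dataset $D = (\bs{q}^t, \bs{p}^t)_{t \leq T}$.

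For the \emph{if} direction, suppose $\tau$ satisfies \eqref{strongRP}, and consider any $D$ with $\tau(D) = 0$. Then $D$ has no strong RP-cycle. If $D$ satisfies GARP, the constant sequence $D_n = D$ trivially demonstrates essential accuracy. If $D$ violates GARP, then every RP-cycle in $D$ must be weak, so Proposition \ref{prop:badnews2} directly produces a sequence of GARP-obeying (hence rationalizable) datasets $D_n$ with expenditure-preserving perturbed bundles converging to $D$. Thus $\tau$ is essentially accurate.

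For the \emph{only if} direction, I would prove the contrapositive: if $\tau$ fails \eqref{strongRP}, then $\tau$ fails to be essentially accurate. So suppose there is a dataset $D$ containing a strong RP-cycle $\bs{q}^{t_1}\,P\,\bs{q}^{t_2}\,P\,\cdots\,P\,\bs{q}^{t_K}\,P\,\bs{q}^{t_1}$ with $\tau(D) = 0$. Assume for contradiction that $\tau$ is essentially accurate, giving a sequence $D_n = (\bs{q}^t_n, \bs{p}^t)_{t \leq T}$ of rationalizable datasets with $\bs{p}^t \cdot \bs{q}^t_n = \bs{p}^t \cdot \bs{q}^t$ and $\bs{q}^t_n \to \bs{q}^t$. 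By the strictness of $P$, each link satisfies $\bs{p}^{t_i} \cdot \bs{q}^{t_i} > \bs{p}^{t_i} \cdot \bs{q}^{t_{i+1}}$. Since expenditures are preserved and $\bs{p}^{t_i} \cdot \bs{q}^{t_{i+1}}_n \to \bs{p}^{t_i} \cdot \bs{q}^{t_{i+1}}$, for all sufficiently large $n$ we have
$$\bs{p}^{t_i} \cdot \bs{q}^{t_i}_n = \bs{p}^{t_i} \cdot \bs{q}^{t_i} > \bs{p}^{t_i} \cdot \bs{q}^{t_{i+1}}_n$$
for every $i$. Hence $D_n$ inherits the same strong RP-cycle and therefore violates GARP, contradicting its rationalizability.

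There is no serious obstacle here: the forward direction reduces immediately to the substantive result (Proposition \ref{prop:badnews2}), and the reverse direction is a routine continuity argument exploiting the fact that strict inequalities are preserved under small perturbations when expenditures are held fixed. The only thing to be careful about is confirming that the expenditure-preservation condition in Definition \ref{def:weak-accuracy} aligns exactly with what is needed to carry the strict inequality $\bs{p}^{t_i} \cdot \bs{q}^{t_i} > \bs{p}^{t_i} \cdot \bs{q}^{t_{i+1}}$ from $D$ to $D_n$, which it does.
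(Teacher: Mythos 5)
Your proof is correct and follows essentially the same route as the paper, which establishes the ``if'' direction by invoking Proposition \ref{prop:badnews2} for datasets with only weak RP-cycles and the ``only if'' direction by observing that a strong RP-cycle (being defined by strict inequalities) persists under small perturbations of the bundles, so no nearby dataset can be rationalizable. Your write-up merely makes explicit the finite intersection over the cycle's links that the paper leaves as ``obvious.''
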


\subsection{Accuracy versus continuity}

Given that accuracy and continuity are incompatible, we are obliged to choose a goodness-of-fit index with either one property or the other.  We either embrace accuracy and assign {\em all} datasets with GARP violations positive index values or we embrace continuity and lump datasets with only weak RP cycles  along with those which are rationalizable. We assert that how this trade-off is resolved affects more than those datasets with weak RP cycles. 

For instance, consider a new dataset $D^*$ formed by moving the bundle $\bs{q}$ southeast along its budget line in dataset $\widebar D$ depicted in Figure \ref{fig:bad_limit}; then, since we have created a strong RP cycle, a reasonable index $\tau$ should assign $\tau(D^*) \geq \tau(\widebar D)$.  Next consider a dataset $D^{**}$ created by altering $\widebar{D}$ by moving $\bs{q}$ northwest along its budget line. Such a dataset obeys GARP. If $\tau$ is some goodness-of-fit index for $\mathcal{U}_{WB}$ then $\tau (D^{**})=0$ and if $\tau$ is accurate then the gap $\tau (D^{*})-\tau (D^{**})$ will be bounded away from zero, even if $D^{**}$ and $D^*$ are arbitrarily close.  On the other hand, if $\tau $ is continuous, this gap can be made arbitrarily small.  It is clear from these observations, that if there is an empirical analysis on a population of consumers, and many of them have purchasing datasets which are close to $\widebar D$ (whether rationalizable such as $D^{**}$ or not such as $D^*$), then the outcome of the analysis will be sensitive to the index used.\footnote{For example, suppose one would like to use the goodness-of-fit index as a variable to explain wealth or other household characteristics (such as in \citet{choi-kariv14}). Then if the datasets from many households is clustered near $\widebar D$, the analysis could be very sensitive to whether an accurate but dis-continuous or a continuous but inaccurate index was utilized.}


Continuous and essentially accurate indices such as Afriat's index are widely adopted by researchers.  These indices conflate datasets that obey GARP with those that violate GARP but have only weak RP cycles. While we are {\em not} arguing against the use of these indices, it is worth emphasizing that these two types of datasets have important differences.  

Firstly, weak RP cycles {\em are} violations of GARP, which means that there is no well-behaved utility function that rationalizes the data. Consequentially, for any well-behaved $U$ there must be some observation wherein the consumer fails to maximize $U$ or equivalently there is some $t$ wherein the consumer could have obtained utility $U(\bs{q}^t)$ for less money. While it may be true that the analyst cannot form a nontrivial lower bound on the money lost,\footnote{A violation of utility-maximization implies a violation of cost efficiency; see the discussion of the Afriat index in Section \ref{sec:popular-indices} and also the further discussion of Example \ref{example:bad-news} in that section.}  nonetheless there is still a loss.  And if purchasing decisions such as the ones in the dataset $D$ in Example \ref{example:bad-news}  are made by the consumer repeatedly, then these losses add up to a larger amount (for any {\em fixed} utility function). 

This brings us to the second issue which is that there is heterogeneity even among datasets that have only weak RP cycles. When a dataset has many weak RP cycles, it seems reasonable to consider this a more severe GARP violation than a dataset with just one weak RP cycle or even one strong RP cycle, but an essentially accurate (but inaccurate) index is unable to discrminate among such datasets.  (The Houtman-Maks index, explained in Section \ref{sec:popular-indices}, is an example of an accurate but discontinuous index than {\em can} distinguish between datasets with single and multiple weak RP cycles.)   

Lastly, welfare predictions in datasets that violate GARP are less clear cut and this is true even if there are only weak RP cycles.   Observe that, in Example \ref{example:bad-news},  it is uncontroversial to conclude that the consumer prefers $\bs{q}_n$ to $\bs{q}'$, since $\bs{q}_n$ is revealed preferred to $\bs{q}'$ in $D_n$, but the welfare comparison between $\bs{q}'$ and $\bs{q}$ in the dataset $\bar{D}$ is not clear. In fact, that conclusion is sensitive to the index used, even when both indices are continuous and essentially accurate (see Example \ref{example:robust-prefs} in Section \ref{sec:robust-prefs}).

\subsection{Loss Functions and Goodness-of-fit Indices} \label{sec:popular-indices}

In this section we survey five popular goodness-of-fit indices: the Afriat index (from \citet{afriat73}); the Varian index (from \citet{varian90}); the Swaps index (from \citet{apesteguia-ballester15}); the Houtman-Maks index (from \citet{houtman1985}); and the non-linear least squares (LS) index. In Proposition \ref{prop:popular-indices} below we show that the Afriat, Varian, Swaps, and LS indices are continuous but inaccurate while the Houtman-Maks index is accurate but discontinuous. To proceed we first define the concept of a loss function and show how a loss function generates a goodness-of-fit index; each of the popular indices we just listed is generated by some loss function. 

A \emph{loss function}, denoted $Q(U;D)$, is a map from utility functions and purchase datasets to the non-negative real numbers which satisfies $Q(U;D) = 0$ in the event that $U$ rationalizes $D$. Intuitively, lower values of $Q(U; D)$ indicate that $U$ provides a better description of the data. Let $\mathcal{U}$ be some collection of utility functions. The loss function $Q$ \emph{generates} a goodness-of-fit index for $\mathcal{U}$ via
\begin{equation} \label{eq:tau-gen}
	\tau(D) = \inf_{U \in \mathcal{U}} \ Q(U;D).
\end{equation}

\paragraph{The Afriat Index.} A purchase dataset $D = (\bs{q}^t, \bs{p}^t)_{t \leq T}$ is $e$-rationalized (for some $e\in[0,1]$) by utility function $U$ if, for all $t$, we have $U(\bs{q}^t) \geq U(\bs{q})$ for all $\bs{q} \in B(\bs{p}^t, e \bs{p}^t \cdot \bs{q}^t)$ (where $B$ is defined by (\ref{eq:budget-set})).


The \emph{Afriat loss function} is $$A(U;D) = 1 - \sup 
\Big\{ e \in [0,1]: D \text{ is } e\text{-rationalized by } U \Big\}$$ The \emph{Afriat index} for $\mathcal{U}$ is $A(D) = \inf_{U \in \mathcal{U}} A(U;D)$.


Note that if $U$ rationalizes $D$ then $A(U;D)=0$ (since $D$ is 1-rationalized by $U$), so a straightforward interpretation of $A(U;D)$ is that it is a measure of the extent to which bundles in the true budget set $B(\bs{p}^t, \bs{p}^t \cdot \bs{q}^t)$ have to be omitted before $\bs{q}^t$ is optimal among the remaining bundles.  There is also another interpretation of $A(U;D)$ when $\cal U$ is contained in the class of continuous functions (so $U$ is continuous).  In that case, it is straightforward to check that if $A(U;D)=1-e'$, then there must be some observation $\underline t\leq T$ such that 
$$U(\bs{q}^{\underline t})=\max\{U(\bs{q}):\bs{q}\in B(\bs{p}^{\underline t}, e'\bs{p}^{\underline t} \cdot \bs{q}^{\underline t})\}.$$   
This means that there is at least one observation (and possibly more) where the consumer is cost inefficient in the sense that he could have achieved the same utility level with less money ($e'\bs{p}^{\underline t} \cdot \bs{q}^{\underline t}$ rather than $\bs{p}^{\underline t} \cdot \bs{q}^{\underline t}$).    It is for this reason that in \citet{afriat73}, the index $A(D)$ is referred to as the {\em critical cost efficiency index}.


\paragraph{The Varian Index.} A purchase dataset $D = (\bs{q}^t, \bs{p}^t)_{t \leq T}$ is $\bs{e}$-rationalized (for $\bs{e} = (e_1,e_2, \ldots, e_T) \in [0,1]^T$) by utility function $U$ if, for all $t$, we have $U(\bs{q}^t) \geq U(\bs{q})$ for all $\bs{q} \in B(\bs{p}^t, e_t \bs{p}^t \cdot \bs{q}^t)$. Let $f: [0,1]^T \rightarrow \mathbb{R}_+$ be a continuous and decreasing function\footnote{$f$ is decreasing if $\bs{e} > \bs{e}'$ implies $f(\bs{e}) < f(\bs{e}')$.} which satisfies $f( e_1, e_2, \ldots, e_T) = 0$ if and only if $e_1 = e_2 = \ldots = e_T = 1$. The \emph{Varian loss function} is $$V(U;D) = \inf \Big\{ f(\bs{e}): D \text{ is } \bs{e}\text{-rationalized by } U \Big\}$$ The \emph{Varian index} for $\mathcal{U}$ is $V(D) = \inf_{U \in \mathcal{U}} V(U;D)$.\footnote{Of course the values taken by the Varian index depend on one's choice of $f$. All the results we establish hold independent of the $f$ used and so we consider the Varian index for a fixed $f$.}

The concept of an $\bs{e}= (e_1, e_2, \ldots, e_T)$-rationalization has similar logic as the Afriat index's $e$-rationalization. Intuitively, the number $e_t$ represents the level of efficiency exhibited by the consumer in period $t$. The function $f$ serves to aggregate the period specific efficiency numbers into one number representing the aggregate level of waste exhibited by the consumer.

Notice that if we choose $f(\cdot) = 1-\min(\cdot)$ then the Varian index becomes the Afriat index; however, this function $f$ is only weakly decreasing rather than decreasing, and thus the Afriat index is {\em not} a special case of the Varian index as we have defined it.  The requirement that $f$ is decreasing is needed for Proposition \ref{prop:popular-loss-functions} and Proposition 1 in the online appendix to hold.

\paragraph{The Swaps Index.} Let $\mu$ be the Lebesgue measure. The \emph{Swaps loss function} is
\begin{equation*} \label{eq:swaps-def}
	S(U;D) = \sum_{t=1}^T \mu\Big\{ \bs{q} \in B(\bs{p}^t, \bs{p}^t \cdot \bs{q}^t): U(\bs{q}) > U(\bs{q}^t) \Big\}
\end{equation*}
The \emph{Swaps index} for $\mathcal{U}$ is $S(D) = \inf_{U \in \mathcal{U}} S(U;D)$.

For each $t$, the Swaps loss function measures (using the Lebesgue measure) the affordable part of the strict upper contour set of $\bs{q}^t$ and then adds up these measures across $t$. As each bundle in this upper contour set is a bundle that should have been chosen over $\bs{q}^t$ the size of this set represents the degree of mis-optimization exhibited by the consumer. 

\paragraph{Non-Linear Least Squares.} For each purchase dataset $D = (\bs{q}^t, \bs{p}^t)_{t \leq T}$ and each $t \leq T$ let $\bs{W}_D^t$ be an $L \times L$ positive definite matrix whose entries are continuous as a function of the data $D$. The \emph{non-linear least squares (LS) loss function} for dataset $D = (\bs{q}^t, \bs{p}^t)_{t\leq T}$ is
\begin{IEEEeqnarray*}{rl}
	\ls( U; D ) = \inf \bigg\{ \sum_{t=1}^T ( \bs{q}^t - \tbs{q}^t )' \bs{W}_D^t (\bs{q}^t - \tbs{q}^t) : (\tbs{q}^t, \bs{p}^t)_{t \leq T} & \text{ is rationalized by } U \\ & \text{ and } \tbs{q}^t \in B(\bs{p}^t, \bs{p}^t \cdot \bs{q}^t) \bigg\}
\end{IEEEeqnarray*}
The \emph{LS index} for $\mathcal{U}$ is $\ls(D) = \inf_{U \in \mathcal{U}} \ls(U;D)$. 

$\ls(U;D)$ reports the distance (where distance is measured using the weighting matrix $\bs{W}_D^t$) between the purchase dataset $D$ and the collection of datasets which are rationalized by the utility function $U$ and which have the same budget sets observed in $D$. Of course, if the dataset is rationalized by $U$ then this distance is $0$. Note that our definition of the LS loss function includes the case where distance is measured in terms of budget shares. To see this, let $\bs{W}_D^t$ be the diagonal matrix with $(p_{\ell}^t)^2 / (\bs{p}^t \cdot \bs{q}^t )^2$ on diagonal $\ell$. Now, $( \bs{q}^t - \tbs{q}^t )' \bs{W}_D^t (\bs{q}^t - \tbs{q}^t)$ measures the distance between the budget share vector for $(\bs{q}^t, \bs{p}^t)$ and the budget share vector for $(\tbs{q}^t, \bs{p}^t)$.

\paragraph{The Houtman-Maks Index.} The \emph{Houtman-Maks loss function} is
\begin{equation*}
	H\left( U; (\bs{q}^t,  \bs{p}^t )_{t\leq T} \right) = \inf \left\{ | \mathcal{T}| \subseteq \{1,2,\ldots, T\}: ( \bs{q}^t, \bs{p}^t )_{t \notin \mathcal{T}} \text{ is rationalized by } U \right\}
\end{equation*}
where, by convention, the empty set is rationalized by every $U$. The Houtman-Maks index for $\mathcal{U}$ is $H(D) = \inf_{U \in \mathcal{U}} H(U;D)$. The Houtman-Maks loss function reports the number of observations which must be thrown out in order for the purchase data to be rationalized by the utility function. 

We know from Proposition \ref{prop:badnews2} that the aforementioned indices cannot be both continuous and accurate for the class of well-behaved utility functions. The following proposition provides the details.

\begin{proposition} \label{prop:popular-indices}
	The Afriat, Varian, Swaps, and LS indices are continuous but inaccurate for ${\cal U}_{WB}$. The Houtman-Maks index is accurate for ${\cal U}_{WB}$ but discontinuous.\footnote{Lest one think there is something special about Houtman-Maks, we note that the money pump index from \citet{echenique11} is also accurate and discontinuous. This index is based on the idea that to every GARP violation there exists an opportunity for an arbitrageur to extract money from the consumer and its value is obtained by summing up the money which can be extracted in each period. For instance, the dataset $\widebar D$ from Example \ref{example:bad-news} violates GARP but has only one weak RP-cycle; nonetheless, the value of the money pump, given by  
$\bs{p} \cdot (\bs{q}-\bs{q}')+\bs{p}'\cdot (\bs{q}'-\bs{q})$, is strictly positive.}  
\end{proposition}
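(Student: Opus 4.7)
My plan is to reduce inaccuracy of the Afriat, Varian, Swaps, and LS indices to their continuity, using Example \ref{example:bad-news}. Each of these four indices is a goodness-of-fit index that assigns zero to any rationalizable dataset: $A$ and $V$ by the choice $e = 1$ (resp.\ $\bs{e} = (1, \ldots, 1)$); $S$ because the strict upper contour set of $\bs{q}^t$ under $B(\bs{p}^t, \bs{p}^t\cdot\bs{q}^t)$ is empty whenever $U$ rationalizes the data; and $\ls$ via the choice $\tilde{\bs{q}}^t = \bs{q}^t$. Thus for the GARP-satisfying sequence $D_n \to \widebar D$ of Example \ref{example:bad-news}, each of the four indices evaluates to zero on $D_n$; once continuity is established, the value at $\widebar D$ is also forced to be zero, while $\widebar D$ is not rationalizable, so these indices are inaccurate.

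For continuity of $A$ and $V$, I would use the reformulation $A(D) = 1 - e^*(D)$ with $e^*(D) = \sup\{e \in [0,1] : \text{the $e$-scaled dataset satisfies GARP}\}$. Enumerating the finitely many potential RP-cycles on $T$ observations, the critical scaling factor associated with each cycle is the maximum over $i$ of the ratios $\bs{p}^{t_i} \cdot \bs{q}^{t_{i+1}} / (\bs{p}^{t_i} \cdot \bs{q}^{t_i})$, which is a continuous function of $D$; then $e^*(D)$ is the minimum of these critical factors over cycles, so both $e^*$ and $A$ are continuous. The Varian index is handled analogously with per-observation scaling and the continuity of $f$. For $\ls$, continuity follows from a Berge-type argument: the squared-distance objective is jointly continuous in $D$ and $\tilde{D}$, and the feasible correspondence---datasets $\tilde{D}$ with $\tilde{\bs{q}}^t \in B(\bs{p}^t, \bs{p}^t\cdot\bs{q}^t)$ satisfying GARP---varies continuously in $D$ because the budgets vary continuously and nearby $\tilde{D}$ can be obtained by projecting onto the new budgets.

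Continuity of the Swaps index is the most technical step. Upper semi-continuity is straightforward: for each fixed $U \in \mathcal{U}_{WB}$, $S(U; D)$ is continuous in $D$ (both the budget set and the level set $\{U > U(\bs{q}^t)\}$ vary continuously), and the infimum of a family of continuous functions is upper semi-continuous. For lower semi-continuity, given $D_n \to D$ and near-optimal $U_n$ for each $D_n$, I would construct a utility $\tilde{U}_n$ obtained by composing $U_n$ with a small smooth homeomorphism of $\mathbb{R}_+^L$ that approximately aligns the budgets of $D_n$ with those of $D$, so that $S(\tilde{U}_n; D)$ approximates $S(U_n; D_n)$; the key point is that the budgets of $D_n$ converge to those of $D$, so such a perturbation can be taken arbitrarily small as $n \to \infty$.

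For the Houtman-Maks index, accuracy is immediate from the definition: $H(D) = 0$ iff there exists $\mathcal{T}$ with $|\mathcal{T}| = 0$ (i.e., $\mathcal{T} = \emptyset$) such that $(\bs{q}^t, \bs{p}^t)_{t \notin \mathcal{T}} = D$ is rationalizable, which happens iff $D$ itself is rationalizable. For discontinuity, $H$ is integer-valued (with values in $\{0, 1, \ldots, T\}$), so any continuous $H$ would have to be locally constant. In Example \ref{example:bad-news}, $H(D_n) = 0$ for all $n$ since each $D_n$ is rationalizable, while $H(\widebar D) \geq 1$ since $\widebar D$ is not; as $D_n \to \widebar D$, this establishes discontinuity of $H$ at $\widebar D$. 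The main obstacle in the overall proof is the lower semi-continuity of the Swaps index, which requires transferring near-optimal utilities between nearby datasets while carefully controlling the Lebesgue measure of the upper contour sets.
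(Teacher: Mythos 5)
Your overall architecture is the same as the paper's: establish continuity of the Afriat, Varian, Swaps, and LS indices, deduce inaccuracy from the GARP-satisfying sequence $D_n \to \widebar D$ of Example \ref{example:bad-news} (the paper packages this step as an appeal to Proposition \ref{prop:badnews2}), and handle Houtman--Maks by noting accuracy is definitional and discontinuity follows from integer-valuedness along the same sequence. Those parts, and your cycle-enumeration formula $e^*(D)=\min\bigl(1,\min_{\text{cycles}}\max_i \bs{p}^{t_i}\cdot\bs{q}^{t_{i+1}}/(\bs{p}^{t_i}\cdot\bs{q}^{t_i})\bigr)$ for the Afriat index, are fine. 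The gaps are in the remaining continuity arguments. For the Varian index, ``handled analogously'' does not work as stated: there is no single scalar $e^*$, and the feasible set of efficiency vectors $\bs{e}$ is not indexed by cycles in the same way. The paper's key device, which you are missing, is Lemma \ref{lemma:gof-rep}: the Afriat, Varian, and Swaps indices are each a \emph{finite} minimum, over complete transitive orders $\succeq$ on the observed bundles, of a quantity (e.g.\ $f(\bs{e}^{\succeq})$ with $e_t^{\succeq}=\min_{s:\bs{q}^s\succeq\bs{q}^t}\bs{p}^t\cdot\bs{q}^s/(\bs{p}^t\cdot\bs{q}^t)$) that is continuous in the data; continuity of all three indices then follows at once, with no need to transport utility functions between datasets.

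The Swaps lower semicontinuity step is where your proposal would actually fail: composing $U_n$ with a homeomorphism $\phi_n$ that realigns budgets does not in general produce an \emph{increasing} function, so $U_n\circ\phi_n$ need not lie in $\mathcal{U}_{WB}$ and cannot be used to bound $\inf_{U\in\mathcal{U}_{WB}}S(U;D)$; you would also have to control how $\phi_n$ distorts the Lebesgue measure of upper contour sets. The finite-preorder representation avoids this entirely. For LS there is a similar, concrete problem: your constraint correspondence is the set of GARP-satisfying datasets on the observed budgets, and the whole point of the paper is that this set is \emph{not closed}, so it is not compact-valued and the Berge Maximum Theorem does not apply as you invoke it. The paper instead takes $\Gamma(D)$ to be the intersection of the budget constraints with the \emph{closure} $\bar{\mathcal{D}}$ of the GARP set; this leaves the infimum unchanged but yields a continuous, compact-valued correspondence to which Berge applies. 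Finally, note that your Swaps upper semicontinuity claim does require the observation that level sets of an increasing $U$ are Lebesgue-null (they are antichains), so that $\mu\{\bs{q}\in B:U(\bs{q})>c\}$ is continuous in $c$; this is true but should be said.
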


To give an illustration of this proposition we introduce a well-known result on the Afriat index. Let $D = (\bs{q}^t, \bs{p}^t)_{t \leq T}$ be a purchase dataset and let $e \in [0,1]$. Let $\bs{q}^t \ R_e \ \bs{q}$ mean $e \bs{p}^t \cdot \bs{q}^t \geq \bs{p}^t \cdot \bs{q}$ and let $\bs{q}^t \ P_e \ \bs{q}$ mean $e \bs{p}^t \cdot \bs{q}^t > \bs{p}^t \cdot \bs{q}$. We refer to $R_e$ and $P_e$ as the $e$ direct and strict direct revealed preference relations for $D$. We say that $D$ satisfies $e$-GARP if, for all $t_1, t_2, \ldots, t_K$
\begin{equation*}
	\bs{q}^{t_1} \ R_e \ \bs{q}^{t_2} \ R_e \ \ldots \ R_e \ \bs{q}^{t_K} \qquad \implies \qquad \bs{q}^{t_K} \ \cancel{P_e} \ \bs{q}^{t_1}
\end{equation*}
Recall that $A$ is the Afriat index. The following is a consequence of Theorem 1 in \citet{halevypersitzzrill18}.
\begin{proposition} \label{prop:AEI}
	For any purchase dataset $D$
	\begin{equation} \label{eq:VEI}
		A(D) = 1 - \sup \Big\{ e \in [0,1]: D \text{ satisfies } e\text{-GARP} \Big\} 
	\end{equation}
\end{proposition}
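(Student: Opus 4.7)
The plan is to unpack the definition of the Afriat index and reduce the proposition to the equivalence: $D$ is $e$-rationalizable by some well-behaved utility function if and only if $D$ satisfies $e$-GARP. From the definitions,
$$A(D) \;=\; \inf_{U\in\mathcal{U}_{WB}} A(U;D) \;=\; 1-\sup\Big\{e\in[0,1]: D \text{ is } e\text{-rationalized by some } U\in\mathcal{U}_{WB}\Big\},$$
so proving \eqref{eq:VEI} amounts to showing that $\{e:D\text{ is }e\text{-rationalizable in }\mathcal{U}_{WB}\}$ and $\{e:D\text{ satisfies }e\text{-GARP}\}$ coincide. Each direction is then handled separately.

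For the forward direction, I would suppose $U\in\mathcal{U}_{WB}$ $e$-rationalizes $D$, and consider any chain $\bs{q}^{t_1}\,R_e\,\bs{q}^{t_2}\,R_e\,\ldots\,R_e\,\bs{q}^{t_K}$. Because each $\bs{q}^{t_{i+1}}$ lies in $B(\bs{p}^{t_i},e\bs{p}^{t_i}\cdot\bs{q}^{t_i})$, $e$-rationalization yields $U(\bs{q}^{t_i})\geq U(\bs{q}^{t_{i+1}})$, hence $U(\bs{q}^{t_1})\geq U(\bs{q}^{t_K})$. If, contrary to $e$-GARP, we had $\bs{q}^{t_K}\,P_e\,\bs{q}^{t_1}$, then $\bs{q}^{t_1}$ would lie strictly inside $B(\bs{p}^{t_K},e\bs{p}^{t_K}\cdot\bs{q}^{t_K})$, so by adding a small vector to $\bs{q}^{t_1}$ we could produce $\bs{q}'$ still in that $e$-budget with $U(\bs{q}')>U(\bs{q}^{t_1})\geq U(\bs{q}^{t_K})$ (using monotonicity of $U$), contradicting $e$-rationalization at $t_K$.

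For the backward direction, the key trick is to pass to the auxiliary dataset $D_e=(e\bs{q}^t,\bs{p}^t)_{t\leq T}$. Its direct revealed preference relations on the bundles $\{e\bs{q}^t\}$ are exactly $R_e$ and $P_e$ translated onto the scaled bundles, so $D$ satisfies $e$-GARP if and only if $D_e$ satisfies standard GARP. Afriat's Theorem then furnishes a well-behaved $\tilde U$ that rationalizes $D_e$, giving $\tilde U(e\bs{q}^t)\geq \tilde U(\bs{q})$ for every $\bs{q}$ with $\bs{p}^t\cdot\bs{q}\leq e\bs{p}^t\cdot\bs{q}^t$. Since $e\leq 1$ so that $\bs{q}^t\geq e\bs{q}^t$, monotonicity of $\tilde U$ gives $\tilde U(\bs{q}^t)\geq \tilde U(e\bs{q}^t)\geq \tilde U(\bs{q})$ for every such $\bs{q}$, showing that $\tilde U$ $e$-rationalizes the original dataset $D$.

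Combining the two directions yields the desired equality of suprema, and hence \eqref{eq:VEI}. The main obstacle is the backward direction, because one must carefully verify that scaling the chosen bundles preserves the cycle structure needed for standard GARP and that the Afriat-produced rationalizer lifts back to an $e$-rationalizer of $D$ via monotonicity; both points are straightforward once the auxiliary dataset $D_e$ is identified as the right object to which Afriat's Theorem is applied.
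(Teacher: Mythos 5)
Your reduction of the proposition to the equivalence ``$D$ is $e$-rationalizable by some well-behaved utility function if and only if $D$ satisfies $e$-GARP'' is the right one, and your forward direction is correct. The backward direction, however, contains a genuine error: the auxiliary dataset $D_e=(e\bs{q}^t,\bs{p}^t)_{t\leq T}$ does \emph{not} have $R_e$ and $P_e$ as its revealed preference relations. In $D_e$ the bundle $e\bs{q}^t$ is revealed preferred to $e\bs{q}^s$ exactly when $\bs{p}^t\cdot(e\bs{q}^s)\leq \bs{p}^t\cdot(e\bs{q}^t)$, i.e.\ when $\bs{p}^t\cdot\bs{q}^s\leq\bs{p}^t\cdot\bs{q}^t$; the factor $e$ cancels, so the revealed preference structure of $D_e$ is identical to that of the original $D$, not to $(R_e,P_e)$, which require $\bs{p}^t\cdot\bs{q}^s\leq e\,\bs{p}^t\cdot\bs{q}^t$ with the scaling applied to only one side of the inequality. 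Consequently $D_e$ satisfies GARP if and only if $D$ does. For a dataset such as $\widebar D$ in Example \ref{example:bad-news}, which satisfies $e$-GARP for every $e<1$ yet violates GARP, your $D_e$ violates GARP for every $e$, so Afriat's Theorem produces no rationalizer and the argument collapses.

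The underlying difficulty is that $e$-rationalization asks for $\bs{q}^t$ to be optimal over the shrunken budget $B(\bs{p}^t, e\bs{p}^t\cdot\bs{q}^t)$, a set that does not contain $\bs{q}^t$ when $e<1$; this is inherently a nonlinear-budget choice problem, and no rescaling into a linear purchase dataset will encode it, so Afriat's Theorem is not the right tool. The paper's argument (via Lemmas \ref{lemma:e-garp-to-artificial-data} and \ref{lemma:afriat-e-garp1}) instead forms the general choice dataset $(\bs{q}^t,\, B(\bs{p}^t,e\bs{p}^t\cdot\bs{q}^t)\cup\{\bs{q}^t\})_{t\leq T}$, shows that $e$-GARP implies cyclical consistency of this dataset, and invokes the Nishimura--Ok--Quah theorem (Lemma \ref{lemma:NOQ}) to obtain a well-behaved utility function that $e$-rationalizes $D$. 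Your monotonicity-based lifting step would be sound if you had a rationalizer of the correct choice problem; the gap lies entirely in identifying that choice problem, and repairing it requires replacing Afriat's Theorem with a rationalizability result for general (non-budget-set) constraint sets.
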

A similar result could be stated for the Varian index but this proposition suffices for our purposes. We now return to Example \ref{example:bad-news} to elaborate on Proposition \ref{prop:popular-indices}. 

\addtocounter{example}{-1}
\begin{example}[continued]
	Let $\widebar D = ( (\bs{q}, \bs{p}), (\bs{q}', \bs{p}') )$ be the dataset introduced in Example \ref{example:bad-news} and pictured in Figure \ref{fig:D}. Clearly, $\bs{q}' \ \cancel{R_e} \ \bs{q}$ for all $e < 1$ and, consequently, $\widebar D$ satisfies $e$-GARP for all $e < 1$. From Proposition \ref{prop:AEI} we see that $A(\widebar  D) = 0$. Since $D$ violates GARP, we may conclude that $A$ is inaccurate.

Notice that the fact that $\widebar D$ violates GARP also means that $A(U; \widebar D)>0$ for any $U\in{\cal U}_{WB}$.\footnote{We consider two cases (a) $U( \bs{q}' ) \geq U( \bs{q} )$ and (b) $U( \bs{q} ) > U(\bs{q}')$. Under case (a) we have $A(U;\widebar D) \geq 1- \bs{p} \cdot \bs{q}' / \bs{p} \cdot \bs{q} > 0$. Under case (b) because $U$ is well-behaved there exists some bundle $\tbs{q}$ so that $\bs{q} > \tbs{q}$ and $U(\tbs{q}) \geq U(\bs{q}')$. Consequently, $A(U;\widebar D) \geq 1- \bs{p}' \cdot \tbs{q} / \bs{p}' \cdot \bs{q}' > 0$.}   Therefore, there must be a sequence of $U_n\in {\cal U}_{WB}$ such that $A(U_n;\widebar D)>0$ but $\lim_{n\to\infty}A(U_n;\widebar D)= 0$, i.e., there is a sequence of utility functions for which the agent's cost inefficiency is strictly positive but arbitrarily small.   

With the Houtman-Maks index, it is easy to see that $H(\widebar D) = 1$, so the index does not assign $\widebar D$ the inaccurate value of $0$. But note that $H(D_n) = 0$ for all $n$ where $D_n$ is the dataset defined in Example \ref{example:bad-news} (dataset $D_1$ is depicted in Figure \ref{fig:D_1}). Thus, $H(D_n) \rightarrow 0 \neq H(\widebar D)$ and so $H$ is not continuous as $D_n \rightarrow \widebar D$.
\end{example}

While the Afriat, Varian, Swaps and LS indices are inaccurate for the class of well-behaved utility functions, these indices are all essentially accurate in the sense of Definition \ref{def:weak-accuracy}.  

\begin{proposition}  \label{prop:essacc-indices}
	The Afriat, Varian, Swaps, and LS indices are essentially accurate for ${\cal U}_{WB}$.
\end{proposition}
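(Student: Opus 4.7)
The plan is to invoke Proposition \ref{prop:weak-accuracy2}, which reduces essential accuracy to showing the index is strictly positive on every dataset containing a strong RP-cycle. So fix a dataset $D = (\bs{q}^t, \bs{p}^t)_{t \leq T}$ harboring a strong RP-cycle $\bs{q}^{t_1} \, P \, \bs{q}^{t_2} \, P \, \cdots \, P \, \bs{q}^{t_K} \, P \, \bs{q}^{t_1}$, and for each $k$ (indices mod $K$) set $r_k = \bs{p}^{t_k} \cdot \bs{q}^{t_{k+1}}/(\bs{p}^{t_k} \cdot \bs{q}^{t_k}) < 1$ and $r = \max_k r_k < 1$. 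This uniform slack is the engine for every lower bound below.

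For the Afriat index, Proposition \ref{prop:AEI} does the work at once: for any $e \in (r,1]$ the cycle remains a strong $e$-cycle so $D$ fails $e$-GARP, giving $A(D) \geq 1-r > 0$. For the Varian index, the strategy is to show that if $D$ is $\bs{e}$-rationalized by some $U \in \mathcal{U}_{WB}$ and $e_{t_k} > r_k$ for every $k$, then $\bs{q}^{t_{k+1}}$ lies in the strict interior of $B(\bs{p}^{t_k}, e_{t_k}\bs{p}^{t_k}\cdot \bs{q}^{t_k})$; monotonicity of $U$ produces some $\bs{q} > \bs{q}^{t_{k+1}}$ still in that set, and the $\bs{e}$-rationalization condition forces $U(\bs{q}^{t_k}) \geq U(\bs{q}) > U(\bs{q}^{t_{k+1}})$ for every $k$, yielding the impossibility $U(\bs{q}^{t_1}) > U(\bs{q}^{t_1})$ around the cycle. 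Hence any $\bs{e}$-rationalization has $e_{t_k} \leq r_k$ for some $k$, so $\bs{e} \leq \bs{e}^{(k)}$ componentwise, where $\bs{e}^{(k)}$ has $t_k$-th coordinate $r_k$ and all others equal to $1$. Monotonicity of $f$ together with $f(\bs{e}^{(k)}) > 0$ (since $\bs{e}^{(k)} \neq \bs{1}$) then yields $V(D) \geq \min_k f(\bs{e}^{(k)}) > 0$.

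For the Swaps index, I will choose $\epsilon > 0$ small enough (uniformly in $k$, which the slack $r<1$ permits) that for every $k$ the cube $C_k = \prod_\ell [q^{t_{k+1}}_\ell + \epsilon/2, q^{t_{k+1}}_\ell + \epsilon]$ sits inside $B(\bs{p}^{t_k}, \bs{p}^{t_k}\cdot \bs{q}^{t_k})$. For any $U \in \mathcal{U}_{WB}$ the cyclic comparison forces some $k_0$ to satisfy $U(\bs{q}^{t_{k_0+1}}) \geq U(\bs{q}^{t_{k_0}})$ (otherwise the strict cyclic inequalities give $U(\bs{q}^{t_1}) > U(\bs{q}^{t_1})$), and since $U$ is increasing every bundle in $C_{k_0}$ strictly dominates $\bs{q}^{t_{k_0+1}}$ componentwise and hence lies in the affordable strict upper contour of $\bs{q}^{t_{k_0}}$ at observation $t_{k_0}$, yielding $S(U;D) \geq (\epsilon/2)^L$ uniformly in $U$. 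For the LS index, the argument is a perturbation one: because the strong cycle inequalities are open conditions on the chosen bundles, there is $\delta > 0$ such that any perturbed data $(\tbs{q}^t, \bs{p}^t)_{t\leq T}$ with $\|\tbs{q}^t - \bs{q}^t\| < \delta$ for every $t$ in the cycle still harbors a strong RP-cycle and thus cannot be rationalized; so any rationalized perturbation must differ from $D$ by at least $\delta$ in some cycle observation, and positive-definiteness of the fixed weights $\bs{W}_D^t$ supplies a uniform eigenvalue bound $\lambda > 0$, giving $\ls(U;D) \geq \lambda \delta^2 > 0$ independent of $U$.

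I expect the Varian case to be the main obstacle, since one must track per-observation slack rather than a single scalar and argue that every $\bs{e}$-rationalization is dominated componentwise by a member of a finite family of vectors on which $f$ is bounded away from zero. The Afriat, Swaps, and LS bounds are direct applications of the uniform slack that a strong RP-cycle supplies.
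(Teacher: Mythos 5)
Your proposal is correct and follows essentially the same route as the paper: reduce via Proposition \ref{prop:weak-accuracy2} to showing each index is strictly positive on any dataset with a strong RP-cycle, then exploit the fact that no $U$ can satisfy the full chain of strict inequalities around the cycle, together with the uniform slack $r<1$, to get a lower bound independent of $U$. Your write-up is in fact somewhat more explicit than the paper's in the Varian and LS cases (where the paper says ``a similar argument'' and ``essentially true by definition''), and your Afriat bound, obtained via Proposition \ref{prop:AEI}, coincides with the paper's $1-e^*$.
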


Recall that an essentially accurate index $\tau$ is such that if $\tau (D)=0$ then $D$ must either be rationalizable or be the limit of rationalizable datasets. Given Proposition \ref{prop:badnews2}, to prove Proposition \ref{prop:essacc-indices} we need only show that those indices take on a strictly positive value whenever a dataset $D$ has a strong GARP violation.  We show this in the appendix.




\section{Loss functions to estimate utility} \label{sec:bad-news2}

The following definition gives two desirable properties for a loss function.   

\begin{definition} \label{def:objective-functions}
	Let $\mathcal{U}$ be a collection of utility functions and let $Q$ be a loss function. We say that $Q$ is \emph{accurate for $\mathcal{U}$} if, for each $U \in \mathcal{U}$ we have $Q( U; D ) = 0$ if and only if $D$ is rationalized by $U$. We say $Q$ is \emph{minimizable for $\mathcal{U}$} if, for all datasets $D$, the set  
	\begin{equation*} \label{eq:extremum-est}
		\underset{ U \in \mathcal{U}}{\argmin} \ Q(U;D)
	\end{equation*}
	is nonempty.  
\end{definition}

Accuracy says that the loss function attains a perfect score of $0$ if and only if the data is perfectly explained by the utility function.  Minimizability requires that best-fitting utility functions exist. This is a desirable property to have since, if $\argmin_{U\in\mathcal{U}}Q(U;D)$ is nonempty, then it is natural to consider its elements as estimators of the consumer's utility function. The following result relates the properties of a loss function just introduced with the accuracy of its associated index.   

\begin{proposition} \label{prop:accurate}
	Let $\mathcal{U}$ be a collection of utility functions, let $Q$ be a loss function, and let $\tau$ be defined by \eqref{eq:tau-gen}. If $Q$ is accurate and minimizable for $\mathcal{U}$ then $\tau$ is accurate for $\mathcal{U}$.
\end{proposition}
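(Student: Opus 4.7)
The plan is to verify the two directions of the biconditional defining accuracy of $\tau$, namely that $\tau(D)=0$ if and only if $D$ is rationalized by some $U\in\mathcal{U}$.

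For the easy direction, suppose $D$ is rationalized by some $U^{\dagger}\in\mathcal{U}$. By the definition of a loss function (which, by convention in the paper, sets $Q(U;D)=0$ whenever $U$ rationalizes $D$), we obtain $Q(U^{\dagger};D)=0$. Taking the infimum over $\mathcal{U}$ in \eqref{eq:tau-gen} and using $Q\geq 0$ throughout, this pins down $\tau(D)=0$. Notice this direction does not require either accuracy or minimizability of $Q$.

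For the reverse direction, suppose $\tau(D)=0$. This is where the two hypotheses do real work. By minimizability of $Q$ for $\mathcal{U}$, the set $\argmin_{U\in\mathcal{U}} Q(U;D)$ is nonempty, so we may pick some $U^{*}$ in it. Then $Q(U^{*};D)=\inf_{U\in\mathcal{U}} Q(U;D)=\tau(D)=0$. Accuracy of $Q$ for $\mathcal{U}$ now converts the equality $Q(U^{*};D)=0$ into the statement that $U^{*}$ rationalizes $D$, giving the desired conclusion.

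There is no real obstacle here; the proof is essentially a one-line application of the definitions once one sees how the pieces fit together. The only subtle point worth flagging is that minimizability is genuinely needed: without it, one could have $\tau(D)=\inf_{U}Q(U;D)=0$ via a minimizing sequence along which $Q(U_n;D)\downarrow 0$ with no attaining $U^{*}$, in which case accuracy of $Q$ would not license the conclusion that any single utility function in $\mathcal{U}$ rationalizes $D$. This observation also explains why the proposition stops at a one-way implication: the converse (that $\tau$ accurate implies $Q$ is minimizable) fails in general, since the infimum in \eqref{eq:tau-gen} may equal $0$ only for rationalizable $D$ yet still not be attained at any $U$ for non-rationalizable $D$.
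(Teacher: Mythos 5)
Your proof is correct and takes essentially the same route as the paper: both directions use only the definitions, with minimizability supplying an attaining $U^{*}$ and accuracy of $Q$ converting $Q(U^{*};D)=0$ into rationalizability (the paper merely phrases this second step in contrapositive form). Your closing remarks on why minimizability is needed and why the converse fails are accurate but not part of the proof itself.
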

\begin{proof}
	Suppose $Q$ is accurate and minimizable. Suppose $D$ is rationalized by some $U \in \mathcal{U}$. Then, $Q(U, D) = 0$ and thus $\tau(D) = 0$. On the other hand, suppose $D$ is not rationalized by any $U \in \mathcal{U}$. As the infimum in \eqref{eq:tau-gen} is attained there exists some $U \in \mathcal{U}$ so that $Q(U;D) = \tau(D)$. As $U$ does not rationalize $D$ we have $0 < Q(U;D) = \tau(D)$ and so $\tau$ is accurate. 
\end{proof}

An immediate consequence of Proposition \ref{prop:accurate} is the following impossibility result on loss functions.   

\begin{proposition} \label{prop:bad-news2}
	There does not exist a loss function which is accurate and minimizable for ${\cal U}_{WB}$ and which generates a continuous goodness-of-fit index via \eqref{eq:tau-gen}.
\end{proposition}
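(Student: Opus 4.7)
The plan is to obtain Proposition \ref{prop:bad-news2} as a direct corollary of two results already established in the excerpt: Proposition \ref{prop:accurate}, which says that accuracy plus minimizability of a loss function $Q$ transfers to accuracy of the generated index $\tau$, and Proposition \ref{prop:badnews2}, which says that no goodness-of-fit index for $\mathcal{U}_{WB}$ is simultaneously continuous and accurate. The proof is therefore a short chain of implications by contradiction.

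Concretely, I would argue as follows. Suppose, toward contradiction, that $Q$ is a loss function that is accurate and minimizable for $\mathcal{U}_{WB}$ and such that the induced index
\[
\tau(D) \;=\; \inf_{U \in \mathcal{U}_{WB}} Q(U;D)
\]
is continuous in the data $D$. Because $Q$ is accurate and minimizable for $\mathcal{U}_{WB}$, Proposition \ref{prop:accurate} applies directly and yields that $\tau$ is accurate for $\mathcal{U}_{WB}$: $\tau(D)=0$ if and only if $D$ is rationalized by some $U \in \mathcal{U}_{WB}$. Combined with the hypothesized continuity of $\tau$, this would exhibit a continuous and accurate goodness-of-fit index for $\mathcal{U}_{WB}$, which is exactly what Proposition \ref{prop:badnews2} rules out. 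The contradiction establishes the claim.

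Since both ingredients are already proved (Proposition \ref{prop:accurate} is proved in the excerpt, and Proposition \ref{prop:badnews2} is treated as given), there is essentially no obstacle beyond assembling the two statements; the substantive content sits entirely in Proposition \ref{prop:badnews2}, whose induction on the number of observations produces the sequence of GARP-obeying perturbations of a weakly-cycling dataset. All that remains here is to verify that the hypotheses of Proposition \ref{prop:accurate} are indeed the two conditions on $Q$ assumed in the statement, which they are verbatim.
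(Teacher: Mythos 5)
Your proposal is correct and follows exactly the same route as the paper's own proof: invoke Proposition \ref{prop:accurate} to conclude that the generated index is accurate, then appeal to Proposition \ref{prop:badnews2} to rule out a continuous and accurate index for $\mathcal{U}_{WB}$. Nothing is missing.
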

\begin{proof}
By Proposition \ref{prop:accurate}, a loss function which is accurate and minimizable for some $\mathcal{U}$ generates an accurate goodness-of-fit index via \eqref{eq:tau-gen}. However, Proposition \ref{prop:badnews2} tells us that an accurate and continuous goodness-of-fit index does not exist for ${\cal U}_{WB}$.
\end{proof}


\subsection{Popular Loss Functions} \label{sec:popular-loss-functions}

From Propositions \ref{prop:popular-indices} and \ref{prop:bad-news2} we know that the Afriat, Varian, Swaps, and LS loss functions cannot be accurate and minimizable for well-behaved utility functions.  Recall that we have already shown that the Afriat loss function is not minimizable: in our discussion of Example \ref{example:bad-news} in Section \ref{sec:popular-indices}, we have shown that, for the dataset $D$ in that example, $A(D)=0$ but $A(U;D)>0$ for all $U\in{\cal U}_{WB}$.  In fact, that example is typical; there is a precise sense in which the Afriat loss function is not minimizable for all but a rare set of datasets.

The next result gives the precise ways in which each loss function falls short.   

\begin{proposition} \label{prop:popular-loss-functions}
	The Afriat, Varian, Swaps, and LS loss functions are accurate but not minimizable for $\mathcal{U}_{WB}$.  More specifically, the following holds.
	\begin{enumerate}
		\item Let $Q$ be either the Varian, Swaps, or LS loss function. If $D$ does not satisfy GARP then $\argmin_{U \in \mathcal{U}_{WB}} Q(U; D)$ is empty.
		\item Let $T \in \mathbb{N}$ and let $\mathcal{D}$ be the collection of $T$ observation purchase datasets which do not satisfy GARP. We regard $\cal D$ as a subset of $( (\mathbb{R}_+^L \backslash \bs{0}) \times \mathbb{R}_{++}^L )^T$. For the Afriat loss function $A$, the set $$\{ D \in \mathcal{D}: \argmin_{U \in \mathcal{U}_{WB}} A(U; D) \neq \emptyset \}$$ 
is rare, i.e., its closure has an empty interior in $\cal D$.     
	\end{enumerate}
\end{proposition}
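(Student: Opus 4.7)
The plan is to handle this proposition in three pieces: accuracy of each loss function, non-minimizability for Varian, Swaps, and LS, and the rarity statement for Afriat.

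For accuracy, I would verify each case directly from the definitions. For Afriat, $A(U;D)=0$ iff $D$ is $1$-rationalized by $U$, i.e., rationalized by $U$. For Varian, the hypothesis that $f(\bs{e})=0$ only at $\bs{e}=(1,\ldots,1)$ reduces the question to the Afriat case. For Swaps, continuity of $U$ makes the strict upper contour set open in the budget, so Lebesgue measure zero forces emptiness and hence rationalization. For LS, the squared distances vanish only when $\tbs{q}^t=\bs{q}^t$ for each $t$, so $U$ must rationalize the original $D$.

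For non-minimizability of Varian, Swaps, and LS, I would split on whether $D$ has a strong RP-cycle. When $D$ has only weak cycles, Proposition \ref{prop:essacc-indices} gives $\inf_U Q(U;D)=0$; any minimizer $U^*$ would satisfy $Q(U^*;D)=0$ and by accuracy rationalize $D$, contradicting the GARP violation. The strong-cycle case is the technical heart. For Varian and LS, my plan is to recast $\inf_U Q(U;D)$ as an infimum over a combinatorial feasibility region---GARP-satisfying $(\tbs{q}^t)$ within the budgets of $D$ for LS, and $\bs{e}$ satisfying $\bs{e}$-GARP for Varian (via the analogue of Proposition \ref{prop:AEI})---and argue that this region is not closed, its frontier populated by weak-cycle (non-GARP) configurations. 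A putative optimizer cannot lie in the topological interior of the region (one could nudge it slightly toward $\bs{q}^t$ and improve), so it must lie on the frontier; a projection-comparison argument then shows that any GARP ``corner'' on the frontier (where several $R$-relations are simultaneously tight) is never strictly closer to $D$ than the cheapest single-tightness (non-GARP) boundary point, so the infimum is not attained inside the region. For Swaps, a direct variational argument plays the same role: given any candidate $U^*$, a localized monotone perturbation of $U^*$ near the bundles involved in the cycle strictly decreases the wastage measure without increasing it elsewhere.

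For the Afriat rarity statement, I would start from Proposition \ref{prop:AEI}: $A(D) = 1 - e^{**}$ where $e^{**} = \sup\{e: D \text{ satisfies } e\text{-GARP}\}$. Combined with Afriat's theorem applied to shrunken budgets, $\argmin_U A(U;D)$ is nonempty iff $e^{**}$ is actually attained, i.e., $D$ itself satisfies $e^{**}$-GARP. For $D\notin\text{GARP}$, attainment requires that at the first cycle which becomes tight as $e\uparrow e^{**}$, no surviving $P_{e^{**}}$-relation remains in that cycle---a codimension-one equality condition on the bundles and prices. A generic arbitrarily small perturbation of $D$ either reactivates a $P_{e^{**}}$ or shifts which cycle is tight, destroying attainment; hence the subset of $\mathcal{D}$ with nonempty Afriat argmin lies in a finite union of codimension-one subsets of $\mathcal{D}$ and so has empty interior. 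The main obstacle is the strong-cycle step of the Varian/Swaps/LS analysis, particularly making the projection-comparison argument rigorous when the dataset contains many interacting cycles.
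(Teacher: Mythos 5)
Your decomposition is reasonable and two of the three pieces are essentially sound: the accuracy verification matches the paper's (which runs a single uniform argument: if $U$ fails to rationalize $D$, continuity of $U$ yields a strictly cheaper bundle with strictly higher utility, forcing $Q(U;D)>0$), and your Afriat rarity argument is in substance the paper's --- the paper makes your ``codimension-one equality condition'' precise by defining $\mathcal{D}'$ as the GARP-violating datasets in which the ratios $\bs{p}^t\cdot\bs{q}^s/(\bs{p}^t\cdot\bs{q}^t)$ are pairwise distinct, checking that $\mathcal{D}'$ is open and dense in $\mathcal{D}$ and that the argmin is empty on $\mathcal{D}'$ (via Proposition \ref{prop:AEI}). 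Your weak-cycle shortcut for part 1 (essential accuracy forces $\inf_U Q(U;D)=0$, so a minimizer would have to rationalize $D$ by accuracy of $Q$) is a clean observation the paper does not need, though note it actually rests on continuity of the indices (Proposition \ref{prop:popular-indices}) together with Proposition \ref{prop:badnews2}, not on essential accuracy alone.

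The genuine gap is the step you yourself flag: non-attainment in the presence of strong cycles for Varian, Swaps, and LS is asserted, not proved. The paper's mechanism here is Lemma \ref{lemma:gof-rep}, which uses the Nishimura--Ok--Quah theorem to rewrite $\inf_{U}Q(U;D)$ as a minimum over the \emph{finite} set of complete preorders on the observed bundles; given a putative minimizer $U^*$, its induced order $\succeq^*$ must contain a pair with either (i) a strict reversal of a revealed preference, in which case continuity of $U^*$ forces $Q(U^*;D)>Q(\succeq^*;D)$, or (ii) a \emph{tie} $U^*(\bs{q}^{\ti{s}})=U^*(\bs{q}^{\ti{t}})$ on a $P$-comparison, in which case refining $\succeq^*$ to a linear order breaking that tie strictly lowers $Q(\succeq;D)$. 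Your ``$\bs{e}$-space / dataset-space frontier'' picture does not engage with case (ii), which is exactly where a naive perturbation argument stalls: a minimizer that creates ties cannot be improved by ``nudging toward $\bs{q}^t$'' without an argument that the tie can be broken without raising the loss at other observations. For LS specifically, the paper does not need any ``corner'' classification: it takes the closest $U^*$-rationalized dataset $\ti{D}$, shows via the segment $\alpha D+(1-\alpha)\ti{D}$ that some relation strict in $D$ is weak in $\ti{D}$, relaxes that single relation toward $D$, and verifies the resulting dataset satisfies $e$-GARP for all $e<1$ and is therefore a limit of rationalizable datasets strictly closer to $D$. Without Lemma \ref{lemma:gof-rep} or an equivalent finite reduction, and without handling the tie case, your proposal does not yet constitute a proof of part 1.
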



The proof of Proposition  \ref{prop:popular-loss-functions} is in the appendix. It is straightforward to show that the listed loss functions are accurate.   The proof that $\argmin_{U \in \mathcal{U}_{WB}} Q(U; D)$ is empty when $Q$ is the Varian or Swaps loss functions makes use of Theorem 2 in \citet{nishimura-oK-quah17} (stated, for convenience, as Lemma \ref{lemma:NOQ} in the appendix). We use this result to provide a formula for $\inf_{U \in \mathcal{U}_{WB}} Q(U;D)$; based on this formula we can prove that no well-behaved utility function $\ti{U}$ satisfies $Q(\ti{U};D) = \inf_U Q(U;D)$.



Proposition \ref{prop:popular-loss-functions} states that $\argmin_{U \in \mathcal{U}_{WB}} Q(U; D)$ does not typically exist for the Afriat, Varian, Swaps, or LS loss functions. This suggests that there could be complications to drawing conclusions about a consumer's preference, since we cannot simply rely on a best-fitting utility function. In the next subsection we propose a method for making welfare comparisons that works even when best-fitting utility functions do not exist. 

\subsection{Welfare when there are no best-fitting utility functions} \label{sec:robust-prefs}

Given a purchase dataset $D = (\bs{q}^t, \bs{p}^t)_{t \leq T}$ and two consumption bundles $\tbs{q}$ and $\tbs{q}'$ (not necessarily among the bundles observed in $D$), when could we conclude that the consumer prefers $\tbs{q}$ to $\tbs{q}'$?  Suppose that $D$ can rationalized by some utility function in the family $\cal U$; then a natural criterion for concluding that $\tbs{q}$ is preferred to $\tbs{q}'$ is to require $U(\tbs{q})\geq U(\tbs{q}')$ for every utility function $U \in \cal U$ that rationalizes $D$.\footnote{This is effectively the criterion proposed in \citet{varian82}, where $\cal U$ is the family of well-behaved and concave utility functions. The approach is also used in \citet{chambers25} (see their Theorem 1).}  When $D$ cannot be (exactly) rationalized by an element in $\cal U$, one could extend the criterion to require that $\tbs{q}$ be preferred to $\tbs{q}'$ for every best-fitting utility function (with respect to some loss function); however, as we have pointed out, best-fitting utility functions may be non-existent for a generic dataset. To overcome this difficulty we propose a criterion which requires that $\tbs{q}$ be preferred to $\tbs{q}'$ for all utility functions that are, in a sense, sufficiently close to minimizing the loss function. Formally, we propose the following binary relation for making welfare statements.    

\begin{definition} \label{def:robust-pref}
	Let $\mathcal{U}$ be a collection of utility functions, $D = (\bs{q}^t, \bs{p}^t)_{t \leq T}$ be some purchase dataset, and $Q$ be a loss function. We say that bundle $\tbs{q}$ is $Q$-\emph{robustly preferred} to $\tbs{q}'$, and we write $\tbs{q} \ \mathcal{R}_{Q} \ \tbs{q}'$, if there exists a utility function $\ti{U} \in \mathcal{U}$ so that $U(\tbs{q}) \geq U( \tbs{q}' )$ for all $U \in \mathcal{U}$ satisfying $Q(U;D) \leq Q(\ti{U}; D)$. 

	We refer to ${\cal R}_Q$ as the {\em robust preference} relation.  
\end{definition}

In other words, $\tbs{q}$ is $Q$-robustly preferred to $\tbs{q}'$ if there exists some $\ti{U} \in \mathcal{U}$ so that for any $U \in \mathcal{U}$ which provides a better fit (according to $Q$) than $\ti{U}$ it happens that $U$ ranks $\tbs{q}$ above $\tbs{q}'$. Notice that in the event that 
$\argmin_{U \in \mathcal{U}} Q(U; D)$ is non-empty and best-fitting utility functions exist, $\ti{U}$ can be chosen to be any best fitting utility function and we obtain the natural criterion that $\tbs{q} \ \mathcal{R}_{Q} \ \tbs{q}'$ if and only if $U(\tbs{q}) \geq U( \tbs{q}' )$ for every best-fitting $U \in \mathcal{U}$. On the other hand, if  $\tbs{q}$ is {\em not} $Q$-robustly preferred to $\tbs{q}'$, then there are utility functions that come arbitrarily close to minimizing the loss function, and with the utility of $\tbs{q}'$ being strictly higher than that of $\tbs{q}$; formally, there is a sequence of utility function $U_n\in\cal U$ such that $\lim_{n\to\infty} Q(U_n;D)= \inf_{U\in{\cal U}}Q(U;D)$ and $U_n(\tbs{q}')>U_n(\tbs{q})$.   

The following result establishes that the robust preference relation satisfies certain natural properties and also inherits certain properties from the collection $\mathcal{U}$.

\begin{proposition} \label{prop:uniform-pref}
	Let $\mathcal{U}$ be a collection of utility functions, $D = (\bs{q}^t, \bs{p}^t)_{t \leq T}$ be some purchase dataset, and $Q$ be a loss function. Then
	\begin{enumerate}
		\item $\mathcal{R}_Q$ is transitive and reflexive.
		\item If, for some $\bs{q}, \tbs{q}$ we have $U(\bs{q}) \geq U(\tbs{q})$ for all $U \in \mathcal{U}$ then we also have $\bs{q} \ \mathcal{R}_{Q} \ \tbs{q}$.
		\item If, for some $\bs{q}, \tbs{q}, \bs{q}', \tbs{q}'$ we have
		\begin{equation*}
			U(\bs{q}) \geq U(\tbs{q}) \qquad \implies \qquad U(\bs{q}') \geq U(\tbs{q}'), \qquad \forall U \in \mathcal{U}
		\end{equation*}
		then
		\begin{equation*}
			\bs{q} \ \mathcal{R}_{Q} \ \tbs{q} \qquad \implies \qquad \bs{q}' \ \mathcal{R}_{Q} \ \tbs{q}'
		\end{equation*}
	\end{enumerate}
\end{proposition}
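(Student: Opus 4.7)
The proof plan is to verify each of the three claims by directly unpacking Definition \ref{def:robust-pref}; each is elementary and rests on manipulating the witness $\ti{U}$ rather than on any deeper properties of $Q$ or $\mathcal{U}$.

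First I would dispose of (1). For reflexivity, pick any $\ti{U} \in \mathcal{U}$ and observe that $U(\tbs{q}) \geq U(\tbs{q})$ holds trivially for every $U$, so $\ti{U}$ witnesses $\tbs{q} \ \mathcal{R}_Q \ \tbs{q}$. For transitivity, suppose $\bs{q}_1 \ \mathcal{R}_Q \ \bs{q}_2$ and $\bs{q}_2 \ \mathcal{R}_Q \ \bs{q}_3$ with respective witnesses $\ti{U}_1, \ti{U}_2 \in \mathcal{U}$. The key move is to take $\ti{U}$ to be whichever of $\ti{U}_1, \ti{U}_2$ has the smaller value of $Q(\cdot;D)$. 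Any $U \in \mathcal{U}$ with $Q(U;D) \leq Q(\ti{U};D)$ then satisfies both $Q(U;D) \leq Q(\ti{U}_1;D)$ and $Q(U;D) \leq Q(\ti{U}_2;D)$, so chaining the implied inequalities yields $U(\bs{q}_1) \geq U(\bs{q}_2) \geq U(\bs{q}_3)$. Hence $\ti{U}$ witnesses $\bs{q}_1 \ \mathcal{R}_Q \ \bs{q}_3$.

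For (2), assume $U(\bs{q}) \geq U(\tbs{q})$ for every $U \in \mathcal{U}$. Any $\ti{U} \in \mathcal{U}$ serves as a witness, since the defining inequality in $\mathcal{R}_Q$ is demanded only on the subset $\{U \in \mathcal{U}: Q(U;D) \leq Q(\ti{U};D)\}$, and by hypothesis it already holds on all of $\mathcal{U}$. For (3), suppose the implication ``$U(\bs{q}) \geq U(\tbs{q})$ implies $U(\bs{q}') \geq U(\tbs{q}')$'' holds for every $U \in \mathcal{U}$, and let $\ti{U}$ witness $\bs{q} \ \mathcal{R}_Q \ \tbs{q}$. Then for every $U \in \mathcal{U}$ with $Q(U;D) \leq Q(\ti{U};D)$ we have $U(\bs{q}) \geq U(\tbs{q})$, whence the standing implication delivers $U(\bs{q}') \geq U(\tbs{q}')$. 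Thus the same $\ti{U}$ witnesses $\bs{q}' \ \mathcal{R}_Q \ \tbs{q}'$.

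Overall there is essentially no obstacle here — the proposition is an almost immediate unpacking of Definition \ref{def:robust-pref}. The one step meriting any comment is the min-of-two-witnesses construction used for transitivity, but even that is routine once one writes out what each hypothesis provides.
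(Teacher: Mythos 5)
Your proof is correct and follows essentially the same route as the paper's: reflexivity is immediate, transitivity uses the same ``take whichever of the two witnesses has the smaller loss'' construction, and parts (2) and (3) are handled by the same direct unpacking of the definition with an arbitrary (resp.\ the same) witness. Nothing is missing.
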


Here are some obvious consequences of this proposition. If $\mathcal{U} = \mathcal{U}_{WB}$ then, by statement 2, we see that $\mathcal{R}_{Q}$ is increasing. If each $\bs{q}$ represents a vector of contingent consumption over $L$ states and each $U \in \mathcal{U}$ is increasing in the sense of first order stochastic dominance then, again by statement 2, we see that $\mathcal{R}_{Q}$ is also increasing in the sense of first order stochastic dominance. If $\mathcal{U}$ is a collection of homothetic functions then statement 3 tells us that $\bs{q} \ \mathcal{R}_{Q} \ \tbs{q}$ implies $k \bs{q} \ \mathcal{R}_{Q} \ k\tbs{q}$ for all $k > 0$. If $\mathcal{U}$ is a collection of quasi-concave functions then, again by statement 3, $\bs{q} \ \mathcal{R}_{Q} \ \tbs{q}$ implies $(k\bs{q} + (1-k) \tbs{q}) \ \mathcal{R}_Q \ \tbs{q}$ for all $k \in (0,1)$.

In the case of the Afriat loss function and the class of well-behaved utility functions, it is possible characterize the robust preference relations in terms of the $e$-direct and strict direct revealed preference relations (denoted $R_e$ and $P_e$) which we introduced in Section \ref{sec:popular-indices}.\footnote{In the online appendix we provide an analogous result for calculating the robust preference relation in the case of the Varian loss function, with ${\cal U}={\cal U}_{WB}$ and we also discuss how to calculate the robust preference relation for the Houtman-Maks loss function. It turns out that both of these robust preference relations can be computed by solving a mixed integer linear programming problem. We put this into practice in Section \ref{sec:empirical}.} First, we need some new definitions. For two binary relations $\succeq$ and $\succeq'$ on $\rlp$ we write $\bs{q} \ (\succeq \cup \succeq') \ \bs{q}'$ to mean either $\bs{q} \succeq \bs{q}'$ or $\bs{q} \succeq' \bs{q}'$. Further, we write $\tran(\succeq)$ to denote the transitive closure of $\succeq$. In other words, $\bs{q} \ \tran(\succeq) \ \bs{q}'$ means there exists some $\bs{q}_1, \bs{q}_2, \ldots, \bs{q}_N$ so that $\bs{q} \succeq \bs{q}_1 \succeq \bs{q}_2 \succeq \ldots \succeq \bs{q}_N \succeq \bs{q}'$. 


\begin{proposition} \label{prop:afriat-welfare}
Given a purchase dataset $D = (\bs{q}^t, \bs{p}^t)_{t \leq T}$, let ${\cal R}_A$ be the robust preference relation for ${\cal U}={\cal U}_{WB}$ and let 
$$e^* = \sup\{ e \in [0,1]: D \text{ satisfies } e\text{-GARP} \}.$$ 
\begin{itemize}
\item[{\em (a)}]  If $D$ satisfies $e^*$-GARP then $\tbs{q} \ \mathcal{R}_{A} \ \tbs{q}'$ if and only if $\tbs{q} \ \tran( \geq \cup \ R_{e^*} ) \ \tbs{q}'$. 
\item[{\em (b)}]  If $D$ does not satisfy $e^*$-GARP then $\tbs{q} \ \mathcal{R}_A \ \tbs{q}'$ if and only if $\tbs{q} \ \tran( \geq \cup \ P_{e^*} ) \ \tbs{q}'$.
\end{itemize}
\end{proposition}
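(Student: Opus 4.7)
My approach rests on a simple reformulation of the Afriat loss function. Let $e(U) := \sup\{e \in [0,1] : D \text{ is } e\text{-rationalized by } U\}$; then $A(U;D) = 1 - e(U)$ and, because $U$ is continuous, the supremum is attained, so the inequality $A(U;D) \leq A(\tilde{U};D)$ reduces to $e(U) \geq e(\tilde{U})$. Any $e$-rationalizing well-behaved $U$ forces $D$ to satisfy $e$-GARP, so $e(U) \leq e^*$; conversely, Afriat's theorem applied to the $e$-scaled budgets guarantees a well-behaved $e$-rationalizing $U$ whenever $D$ satisfies $e$-GARP. Combining these observations, in case (a) the robust preference $\tbs{q}\,\mathcal{R}_A\,\tbs{q}'$ is equivalent to ``$U(\tbs{q})\geq U(\tbs{q}')$ for every well-behaved $e^*$-rationalizing $U$,'' while in case (b), where no $U$ attains $1-e^*$, it is equivalent to ``there exists $\bar{e} < e^*$ such that $U(\tbs{q})\geq U(\tbs{q}')$ for every well-behaved $\bar{e}$-rationalizing $U$.''

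For case (a), the $\Leftarrow$ direction is a chain argument: each $\geq$-link in a $\tran(\geq \cup R_{e^*})$-chain is preserved by monotonicity of $U$, and each $R_{e^*}$-link by $e^*$-rationalization. For the contrapositive of $\Rightarrow$, suppose $\tbs{q}$ is not $\tran(\geq \cup R_{e^*})$-related to $\tbs{q}'$. I would adjoin the strict pair $(\tbs{q}', \tbs{q})$ to the pre-order $\tran(\geq \cup R_{e^*})$; this augmented pre-order remains consistent because any cycle through the new edge would demand a pre-existing chain from $\tbs{q}$ to $\tbs{q}'$, contradicting the hypothesis. Lemma~\ref{lemma:NOQ} then yields a well-behaved $U$ respecting the augmented pre-order with strict inequality at $(\tbs{q}', \tbs{q})$; since $R_{e^*}$ is embedded, this $U$ also $e^*$-rationalizes $D$ yet satisfies $U(\tbs{q}') > U(\tbs{q})$, refuting $\tbs{q}\,\mathcal{R}_A\,\tbs{q}'$. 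Case (b) is analogous but with an extra quantifier on $\bar{e}$. For $\Leftarrow$, each of the finitely many strict inequalities $e^* \bs{p}^{t_i} \cdot \bs{q}^{t_i} > \bs{p}^{t_i}\cdot \bs{q}_i$ in a $\tran(\geq \cup P_{e^*})$-chain lets me set $\bar{e}$ to be the maximum of the ratios $\bs{p}^{t_i}\cdot \bs{q}_i / \bs{p}^{t_i}\cdot \bs{q}^{t_i}$ (still strictly below $e^*$), at which every $P_{e^*}$-link becomes an $R_{\bar{e}}$-link, so any well-behaved $\bar{e}$-rationalizing $U$ (which exists by Afriat's theorem, since $\bar{e} < e^*$ implies $\bar{e}$-GARP) preserves the chain. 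For $\Rightarrow$, fix any $\bar{e} < e^*$; since $R_{\bar{e}} \subseteq P_{e^*}$, the hypothesis passes to the statement that $\tbs{q}$ is not $\tran(\geq \cup R_{\bar{e}})$-related to $\tbs{q}'$, and the NOQ construction from case (a) supplies a $\bar{e}$-rationalizing well-behaved $U$ with $U(\tbs{q}') > U(\tbs{q})$, defeating robust preference at every level $\bar{e} < e^*$.

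I expect the main obstacle to lie in the NOQ step in the backward direction: I must verify that the augmented pre-order meets the hypotheses of Lemma~\ref{lemma:NOQ}, and that the extension produced is continuous, increasing on all of $\mathbb{R}^L_+$, $e^*$-rationalizing (respectively $\bar{e}$-rationalizing) $D$, and \emph{strictly} reverses the preference at $(\tbs{q}', \tbs{q})$. Once Lemma~\ref{lemma:NOQ} applies in this form, the remaining steps amount to chain-following and routine consequences of Afriat's theorem.
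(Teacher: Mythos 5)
Your proposal is correct and follows essentially the same route as the paper: the paper's Lemmas \ref{lemma:afriat-e-garp1} and \ref{lemma:afriat-e-garp2} are exactly your two reductions ($e$-GARP $\Leftrightarrow$ existence of an $e$-rationalizing well-behaved utility, and the characterization of the unanimous ranking over $e$-rationalizing utilities via $\tran(\geq \cup R_e)$), with case (b) handled by the same approximation $\bar{e} \uparrow e^*$ from below. The one step you flag as the main obstacle is resolved in the paper precisely as you anticipate — the ``augmented pre-order'' is encoded as a choice dataset by appending the observation in which $\tbs{q}'$ is chosen from a set containing $\tbs{q} + (\varepsilon,\ldots,\varepsilon)$ for $\varepsilon$ small enough not to create new revealed-preference links, after which Lemma \ref{lemma:NOQ} (rather than plain Afriat, since $\bs{q}^t$ need not lie in the scaled budget $B(\bs{p}^t, e\bs{p}^t\cdot\bs{q}^t)$) delivers the strictly reversing utility.
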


Proposition \ref{prop:afriat-welfare} gives us a way of computing whether or not $\tbs{q} \ \mathcal{R}_{A} \ \tbs{q}'$. Indeed, provided we can calculate $e^*$, it is easy to determine, using standard algorithms for calculating the transitive closure of a graph (for instance Warshall's algorithm), whether or not $\tbs{q} \ \tran( \geq \cup \ R_{e^*} ) \ \tbs{q}'$ and $\tbs{q} \ \tran( \geq \cup \ P_{e^*} ) \ \tbs{q}'$ hold. The value $e^*$ can be calculated exactly by searching over a finite (and relatively small, containing no more than $T^2$ elements) set of possible values. Details on calculating $e^*$ are found in Section \ref{sec:e^*-calc} of the appendix.\footnote{It is common in empirical applications of Afriat's index for $e^*$ to be calculated via a binary search over the entire interval $[0,1]$, with $e$-GARP being checked at each value of $e$.  This approach leads to an approximation of $e^*$ rather than its exact value.}

\begin{figure}[h]
	\centering
	\begin{subfigure}[b]{0.48\textwidth}
		\centering
		\includegraphics[width=\textwidth]{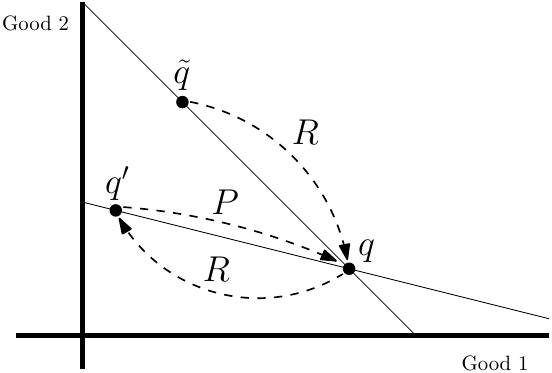}
		\caption{Revealed preference relations.}
		\label{fig:robust-prefs-rp}
	\end{subfigure}
	\hfill
	\begin{subfigure}[b]{0.48\textwidth}
		\centering
		\includegraphics[width=\textwidth]{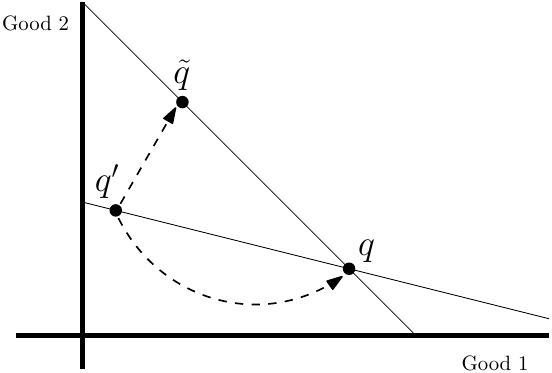}
		\caption{$\mathcal{R}_A$ relations.}
		\label{fig:robust-prefs-A}
	\end{subfigure}
	\vspace{15pt}

	\begin{subfigure}[b]{0.48\textwidth}
		\centering
		\includegraphics[width=\textwidth]{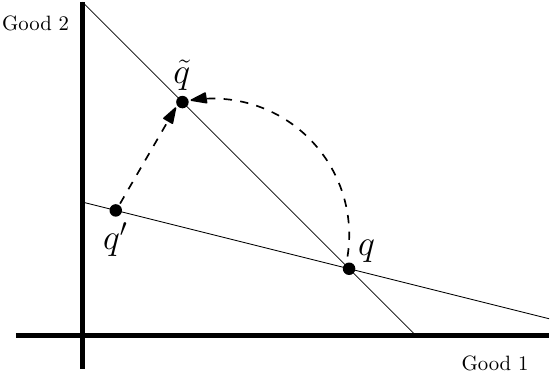}
		\caption{$\mathcal{R}_{\ls}$ relations.}
		\label{fig:robust-prefs-ls}
	\end{subfigure}
	\hfill
	\begin{subfigure}[b]{0.48\textwidth}
		\centering
		\includegraphics[width=\textwidth]{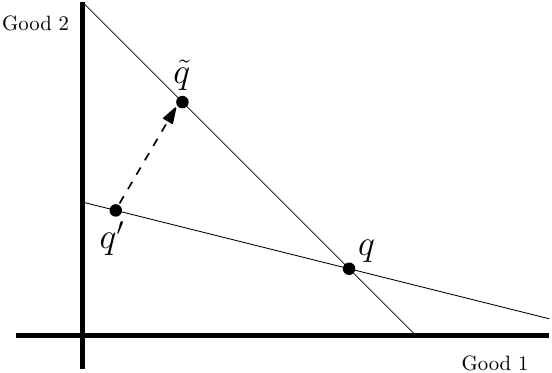}
		\caption{$\mathcal{R}_{HM}$ relations.}
		\label{fig:robust-prefs-HM}
	\end{subfigure}
	\caption{The GARP-violating dataset $\widebar D$. Depicted are various binary relations over the bundles $\bs{q}$, $\bs{q}'$, and $\tbs{q}$.}
	\label{fig:robust-prefs}
\end{figure}

It should not come as a surprise that different loss functions can, at least in principle, lead to different welfare conclusions.  We illustrate this with a dataset $\widebar D$ similar to the one in Example \ref{example:bad-news}.       

\begin{example} \label{example:robust-prefs}
{\em The dataset $\widebar D = ( (\bs{q}, \bs{p}), (\bs{q}', \bs{p}') )$ is depicted in Figure \ref{fig:robust-prefs-rp}.  We have also indicated the revealed preference relations among three bundles, $\bs{q}$, $\bs{q}'$ and $\tbs{q}$; we have $\bs{q} \ P \ \bs{q}'$,  $\bs{q}' \ R \ \bs{q}$, and $\bs{q} \ R \ \tbs{q}$.  We would like to compare how the robust ranking of these bundles change with the Afriat, Least-Squares, and Houtman-Maks loss functions.  Since $\tbs{q} \gg \bs{q}'$ we know (from Proposition \ref{prop:uniform-pref}) that $\tbs{q} \ \mathcal{R}_Q \ \bs{q}'$ for \emph{any} loss function $Q$. A more targeted analysis is needed for the other rankings. 






\vspace{10pt}

\noindent {\bf Afriat Loss Function.}\, As $\bs{q}' \ \cancel{R}_e \ \bs{q}$ for any $e < 1$ we conclude that $1 = \sup \{ e \in [0,1]: {\widebar D} \text{ satisfies } e\text{-GARP} \}$.  Since the dataset does not satisfy GARP we must appeal to part (b) of Proposition \ref{prop:afriat-welfare} to calculate $\mathcal{R}_{A}$, which tells us that $\bs{q} \ \mathcal{R}_A \ \bs{q}'$ since $\bs{q} \ P_1 \ \bs{q}'$. On the other hand, $\bs{q} \ \cancel{P_1} \ \tbs{q}$ and $\bs{q}' \ \cancel{P_1} \ \bs{q}$ and so $\bs{q} \ \cancel{\mathcal{R}_A} \ \tbs{q}$ and $\bs{q} \ \cancel{\mathcal{R}_A} \ \tbs{q}$. The robust relations are depicted in Figure \ref{fig:robust-prefs-A}.  \vspace{10pt}

\noindent {\bf Least Squares Loss Function.}\,  The utility functions that come arbitrarily close to minimizing least squares will rationalize datasets $\ti{D} = ( (\hbs{q}, \bs{p}), (\hbs{q}', \bs{p}') )$ where $\hbs{q}$ is on the budget line and arbitarily close to but northwest of $\bs{q}$ and $\hbs{q}'$ is arbitrarily close to $\bs{q}'$ and thus $\tbs{q}\gg \hbs{q}'$. A utility function $U\in{\cal U}_{WB}$ that rationalizes this perturbed dataset will have $U(\tbs{q})>U(\hbs{q}')\geq U(\bs{q})$.  Therefore, $\tbs{q} \ \mathcal{R}_{\ls} \ \bs{q}$.  On the other hand, it is possible for such a $U$ to satisfy  $U(\bs{q})>U(\bs{q}')$ or $U(\bs{q})< U(\bs{q}')$.  Therefore, $\bs{q} \ \cancel{ \mathcal{R}_{\ls} } \ \bs{q}'$ and $\bs{q} \ \cancel{ \mathcal{R}_{\ls} } \ \tbs{q}'$. The robust relations are depicted in Figure \ref{fig:robust-prefs-ls}; obviously they are different from the ones in Figure \ref{fig:robust-prefs-A}. \vspace{10pt}


\noindent {\bf Houtman-Maks Loss Function.}\, This loss function is minimizable and thus the robust preference relation $\mathcal{R}_{HM}$ satisfies $\bbs{q} \ \mathcal{R}_{H} \ \bbs{q}'$ if and only if $U(\bbs{q}) \geq U(\bbs{q}')$ for all well-behaved $U$ which satisfy $H( U;\widebar D ) = H(\widebar D)$. It is clear that $H(U;\widebar D) = H(\widebar D)$ if and only if $U$ either rationalizes the single observation dataset $( \bs{q}, \bs{p} )$ or the dataset $(\bs{q}', \bs{p}')$. We can find some well-behaved $U$ which rationalizes $(\bs{q}, \bs{p})$ and which satisfies $U(\bs{q}) > U(\tbs{q}) > U(\bs{q}')$ and we can find some well-behaved $\ti{U}$ which rationalizes $(\bs{q}', \bs{p}')$ and which satisfies $\ti{U}(\tbs{q}) > \ti{U}( \bs{q}' ) > \ti{U}( \bs{q} )$ and so $\bs{q} \ \cancel{ \mathcal{R}_{H} } \ \tbs{q}$, $\bs{q} \ \cancel{ \mathcal{R}_{H} } \ \bs{q}'$, and $\bs{q}' \ \cancel{ \mathcal{R}_{H} } \ \bs{q}$. The robust rankings are depicted in Figure \ref{fig:robust-prefs-HM}.} 
\end{example}


\subsection{Loss Functions and Consideration Sets} \label{sec:bounded-rat}


One strand of the literature on bounded rationality emphasizes that consumers may fail to consider all bundles available for purchase and may instead consider only a subset of the affordable bundles called the consideration set (see, for example, \citet{masatlioglu12}). This leads naturally to the problem of measuring the extent to which a consumer is boundedly rational {\em within a particular model of consideration set formation}. In this section we explain how this can be done. 

Let $D=(\bs{q}^t,\bs{p}^t)_{t\leq T}$ be a dataset. The consideration set at observation $t$ is set $B^t$ that contains $\bs{q}^t$ and is contained in the budget set $B(\bs{p}^t , \bs{p}^t \cdot \bs{q}^t)$.  A $T$-tuple of consideration sets $\mathcal{B} = \{ B^t \}_{ t \leq T}$ and a utility function $U$ rationalizes $D$ if $U(\bs{q}^t)\geq U(\bs{q})$ for all $\bs{q}\in B^t$.   


A given model of consideration sets, let us call it $\cal M$, determines $\mathscr{B}_{{\cal M}}(D)$, the collection of $T$-tuples of consideration sets that are permitted by this model.   We assume that full rationality is included as a special case of the model, i.e, the T-tuple ${\cal B}^*=(B(\bs{p}^t , \bs{p}^t \cdot \bs{q}^t))_{t\leq T}$ is always in $\mathscr{B}_{{\cal M}}(D)$.  We say that a utility function $U$ {\em rationalizes $D$ according to the model $\cal M$} if there is $\mathcal{B}\in\mathscr{B}_{{\cal M}}(D)$ such that $\cal B$ and $U$ rationalizes $D$.  Let $\mathscr{B}_{{\cal M}} (U;D)$ denote the subset of $T$-tuples in $\mathscr{B}_{\cal M}(D)$ that, together with $U$, rationalizes $D$.   

We say that $\mathcal{B} = \{ B^t \}_{t \leq T} \in \mathscr{B}_{\cal M}(D)$ is \emph{richer} than $\ti{\mathcal{B}} = \{ \ti{B}^t \}_{t \leq T} \in \mathscr{B}_{\cal M}(D)$ if $\ti{B}^t \subseteq B^t$ for all $t$. It is evident that if $U$ rationalizes $D$ with the richer $\mathcal{B}$ then it can also be rationalized with $U$ using the less rich $\mathcal{B}'$. A \emph{penalty function $\kappa(\cdot; D)$} is a map from $\mathscr{B}_{\cal M}(D)$ to $\mathbb{R}_{+}$ which satisfies the following two properties:
\begin{itemize}
\item[(i)]\, $\kappa( \mathcal{B}; D )=0$ if ${\cal B}={\cal B}^*$ and 
\item[(ii)]\,  $\kappa( \mathcal{B}; D ) \leq \kappa( \ti{\mathcal{B}}; D )$  if  $\mathcal{B}$ is richer than $\ti{\mathcal{B}}$.
\end{itemize}
In other words, there is no penalty if consideration sets coincide with the true budget sets and the penalty increases when consideration sets are smaller across all observations.  

The penalty function $\kappa$ and model $\cal M$ generates a loss function $Q_{\cal M}(U;D)$ given by 
\begin{equation*}
	Q_{\cal M}(U;D) = \inf_{ \mathcal{B} \in \mathscr{B}_{\cal M}(U;D) } \ \kappa(\mathcal{B}; D)
\end{equation*}
i.e., it is the smallest departure from full rationality (as measured by $\kappa$) that can explain the dataset $D$, using the utility function $U$ and the consideration sets permitted by the model $\cal M$.  Obviously, from this loss function, we can define the goodness-of-fit index 
$$\tau_{\cal M}(D)=\inf_{U\in{\cal U}}Q_{\cal M}(U;D),$$ 
which measures the departure from rationality that is required to explain $D$, within the $\cal M$-model of consideration sets.  It almost goes without saying that the general results developed in the previous sections on loss functions and goodness-of-fit indices apply to $Q_{\cal M}$ and $\tau_{\cal M}$.     

Finally, as an illustration, we note that we can interpret the Afriat loss function as arising from a particular model $\cal A$ of consideration sets.\footnote{It is not difficult to verify that the Varian, Swaps, and Houtman-Maks loss functions can also be interpreted as arising from specific models of consideration sets.}  In that case, ${\cal B}\in \mathscr{B}_{\cal A}(D)$ if there is an $e\in [0,1]$ such that 
$$B^t=\{\bs{q}^t\}\cup  B(\bs{p}^t,e\bs{p}^t\cdot\bs{q}^t)$$  
and define $\kappa ({\cal B}; D)=1-e$. Then it is clear that $Q_{\cal A}(U;D)=A(U;D)$. (Relatedly, \citet{dziewulksi20} shows that the Afriat index can be interpreted as a measure of limited perception.)

In the online appendix, we show how the approach in this section can be generalized to incorporate other models of bounded rationality (such as two-stage choice) into loss functions. 


\section{Other collections of utility functions} \label{sec:other}

So far in the paper we have mostly focused on well-behaved utility functions and we have shown that, for this class of utility functions, loss functions and their corresponding goodness-of-fit indices must necessarily be deficient in one aspect or another. The key reason behind this phenomenon is that {\em the collection of datasets which can be rationalized by well-behaved utility functions do not form a closed set.} This is clear from Example \ref{example:bad-news}, where a sequence of rationalizable datasets has a limit which does not admit a rationalization. 

There are some classes of utility functions $\cal U$ where the datasets which admit rationalization by a member of $\cal U$ {\em do} form a closed set; in these cases (and only in these cases), there is a continuous and accurate goodness-of-fit index generated by a minimizable and accurate loss function. The next result states this precisely.  

\begin{proposition} \label{prop:good-news}
	Let $\mathcal{U}$ be a non-empty collection of continuous utility functions and let $\mathcal{D}_{\mathcal{U}}$ denote the collection of purchase datasets which can be rationalized by some $U \in \mathcal{U}$. The following are equivalent.
	\begin{enumerate}
		\item For each $T$, the collection of $T$-observation purchase datasets $\mathcal{D}_{\mathcal{U}}^T = \{ (\bs{q}^t, \bs{p}^t)_{t \leq T} \in \mathcal{D}_{\mathcal{U}} \}$ is closed when considered as a subset of $( (\rlp \backslash \{ \bs{0} \}) \times \mathbb{R}_{++}^L )^T$.\footnote{More specifically, we mean that $\mathcal{D}_{\mathcal{U}}^T$ is closed in the relative topology of $( (\rlp \backslash \{ \bs{0} \}) \times \mathbb{R}_{++}^L )^T$.}
		\item There exists a goodness-of-fit index $\tau$ which is continuous and accurate for $\mathcal{U}$.
		\item There exists a loss function $Q$ which is accurate and minimizable for $\mathcal{U}$ and whose generated goodness-of-fit index (defined by \eqref{eq:tau-gen}) is continuous and accurate for $\mathcal{U}$.
	\end{enumerate}
\end{proposition}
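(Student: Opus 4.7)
I would prove the cycle $(3) \Rightarrow (2) \Rightarrow (1) \Rightarrow (3)$. The first implication is immediate from the statement of (3). For $(2) \Rightarrow (1)$, fix $T$ and observe that $\mathcal{D}_{\mathcal{U}}^T = \tau^{-1}(\{0\}) \cap (((\rlp \backslash \{\bs{0}\}) \times \mathbb{R}_{++}^L)^T)$ by accuracy, which is closed by continuity of $\tau$ restricted to $T$-observation datasets.

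The substantive direction is $(1) \Rightarrow (3)$. I would first verify that for any continuous $U$ the set $\mathcal{D}_U^T$ is itself closed in the relative topology: given $D_n \to D$ with each $D_n$ rationalized by $U$ and any $\bs{q}$ satisfying $\bs{p}^t \cdot \bs{q} < \bs{p}^t \cdot \bs{q}^t$, we have $\bs{p}_n^t \cdot \bs{q} < \bs{p}_n^t \cdot \bs{q}_n^t$ eventually, so $U(\bs{q}_n^t) \geq U(\bs{q})$, and continuity of $U$ gives $U(\bs{q}^t) \geq U(\bs{q})$ in the limit; a second continuity argument extends this to $\bs{q}$ on the boundary of the budget set (where we use $\bs{q}^t \neq \bs{0}$ and $\bs{p}^t \gg \bs{0}$ to construct an approximating sequence strictly inside the budget).

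The key construction is a metric $\rho$ on $((\rlp \backslash \{\bs{0}\}) \times \mathbb{R}_{++}^L)^T$ that generates the usual topology but makes the space a complete metric space whose bounded sets are relatively compact within the valid space, for instance
$$\rho(D, D') = d(D, D') + \sum_{t,\ell} \bigl|\log p_\ell^t - \log p_\ell'^t\bigr| + \sum_t \bigl|\log \|\bs{q}^t\| - \log \|\bs{q}'^t\|\bigr|,$$
where $d$ is the Euclidean distance. I would then define
$$Q(U; D) = \inf\{\rho(D, D') : D' \text{ is rationalized by } U\}.$$
Accuracy of $Q$ follows from the closedness of $\mathcal{D}_U^T$. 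The generated index $\tau(D) = \inf_U Q(U;D)$ equals the $\rho$-distance from $D$ to $\mathcal{D}_{\mathcal{U}}^T$, which is continuous in $D$ and, by (1), vanishes iff $D$ is rationalizable by some element of $\mathcal{U}$. For minimizability, a minimizing sequence $D_n \in \mathcal{D}_{\mathcal{U}}^T$ with $\rho(D, D_n) \to \tau(D)$ lies in a $\rho$-bounded set, and by the compactness property has a subsequence converging to some $D^* \in \mathcal{D}_{\mathcal{U}}^T$ with $\rho(D, D^*) = \tau(D)$; any $U^* \in \mathcal{U}$ rationalizing $D^*$ then satisfies $Q(U^*;D) \leq \rho(D,D^*) = \tau(D)$, so $U^*$ is a minimizer.

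The main obstacle is precisely this attainment step. A naive choice of the Euclidean distance can fail because a minimizing sequence of nearest rationalizable datasets may drift toward a boundary point with a zero price or a zero bundle, which is not itself a valid dataset; the modified metric $\rho$ is designed to rule this out by making any approach to the boundary infinitely costly, so that $\rho$-bounded sets remain uniformly bounded away from the excluded boundary and compactness can be invoked.
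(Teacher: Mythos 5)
Your proof is correct and follows the same overall architecture as the paper's: the cycle $(3)\Rightarrow(2)\Rightarrow(1)\Rightarrow(3)$, with the substantive direction handled by taking $Q(U;D)$ to be the distance from $D$ to the set $\mathcal{D}_U^T$ of datasets rationalized by $U$, and observing that the induced $\tau$ is the distance to $\mathcal{D}_{\mathcal{U}}^T$. The one genuine difference is your metric $\rho$, and it is a real improvement in rigor. The paper works with the plain Euclidean distance and asserts that $\tau(D)=\lVert D-\ti{D}\rVert$ is attained at some $\ti{D}\in\mathcal{D}_{\mathcal{U}}^T$ ``because the set is closed''; but hypothesis 1 only gives closedness in the relative topology of $((\rlp\backslash\{\bs{0}\})\times\mathbb{R}_{++}^L)^T$, so nothing a priori prevents a Euclidean minimizing sequence from converging to a point with a zero price coordinate or a zero bundle, which lies outside the space, in which case the infimum over $\mathcal{D}_{\mathcal{U}}^T$ need not be attained and minimizability of $Q$ would fail. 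Your $\rho$ makes any approach to that excluded boundary infinitely costly, so $\rho$-bounded subsets of the dataset space are relatively compact \emph{within} it, the minimizing subsequence converges to a valid dataset, and relative closedness then delivers the minimizer $U^*$. Your direct verification that each $\mathcal{D}_U^T$ is relatively closed (approximating boundary points of the budget set by $\lambda\bs{q}$, $\lambda\uparrow 1$, which is where $\bs{q}^t\neq\bs{0}$ and $\bs{p}^t\gg\bs{0}$ are used) is a sound elementary substitute for the paper's appeal to upper hemicontinuity of the demand correspondence. One shared loose end, which you might flag: if some $U\in\mathcal{U}$ rationalizes no valid dataset at all (e.g.\ its unique maximizer on every budget set is $\bs{0}$), then $Q(U;D)=\inf\emptyset=+\infty$ is not a non-negative real; replacing $Q$ by $\min\{Q,1\}$ fixes this without affecting accuracy, minimizability, or the generated index near zero.
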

\begin{proof}
	In what follows we treat purchase datasets with $T$ observations as elements of $(\mathbb{R}^L \times \mathbb{R}^L)^T$ and thus, for datasets $D = (\bs{q}^t, \bs{p}^t)_{t \leq T}$ and $\ti{D} = (\tbs{q}^t, \tbs{p}^t)_{t \leq T}$ we have 
$$|| D - \ti{D} ||^2 = \sum_{t=1}^T || \bs{q}^t - \tbs{q}^t ||^2 + || \bs{p}^t - \tbs{p}^t ||^2 $$ 
where $|| \cdot ||$ denotes the usual Euclidean norm. 
	
Clearly 3.\ implies 2. We shall show 2.\ implies 1.\ implies 3. So, suppose there exists a goodness-of-fit indices $\tau$ which is continuous and accurate for $\mathcal{U}$. Let $D_n$ be a sequence in $\mathcal{D}_{\mathcal{U}}^T$ which converges to a purchase dataset $D$. We have $\tau(D_n) \rightarrow \tau(D) = 0$ as $\tau$ is continuous and accurate and $D_n \in \mathcal{D}_{\mathcal{U}}^T$. Now, as $\tau$ is accurate we have $D \in \mathcal{D}_{\mathcal{U}}^T$ and so indeed $\mathcal{D}_{\mathcal{U}}^T$ is closed.
	
	Next, we show that item 1.\ implies 3. So, suppose $\mathcal{D}_{\mathcal{U}}^T$ is closed. Let $\mathcal{D}_U^T$ be the collection of datasets which have $T$ observations and which are rationalized by $U$. Let 
$$Q(U;D) = \inf \{ || D - \ti{D} ||: \ti{D} \in \mathcal{D}_U^T \}.$$ 
As $U$ is continuous we see that $\mathcal{D}_U^T$ is closed (in particular, the demand correspondence generated by $U$ is upper hemicontinuous and thus has a closed graph) and thus $Q(U;D) = 0$ if and only if $D$ is rationalized by $U$ and so $Q(U;D)$ is accurate for $\mathcal{U}$. Let $\tau$ be defined by $\tau(D) = \inf_{U \in \mathcal{U}} Q(U;D)$. Note that $\tau(D)$ just reports the Euclidean distance between $D$ and the closed set $\mathcal{D}_{\mathcal{U}}^T$. It follows that $\tau$ is continuous and further $\tau(D) = || D - \ti{D} ||$ for some $\ti{D} \in \mathcal{D}_{\mathcal{U}}^T$. From the definition of $\mathcal{D}_{\mathcal{U}}^T$ there exists a utility function $\ti{U} \in \mathcal{U}$ so that $\ti{D}$ is rationalized by $U$ and so $\tau(D) = Q( \ti{U}; D )$. Thus, we see that $Q$ is minimizable. Now, as $Q$ is accurate and minimizable Proposition \ref{prop:accurate} guarantees that $\tau$ is accurate.
\end{proof}

\begin{remark}
	The proof that statement 1 implies statement 3 in Proposition \ref{prop:good-news} is constructive. Specifically, let $Q(U;D)$ report the Euclidean distance between $D$ and the collection of $T$ observation datasets which are rationalized by $U$. In the proof of Proposition \ref{prop:good-news} we show that if the collection $\mathcal{D}_{\mathcal{U}}^T$ is closed then the loss function $Q(U;D)$ is accurate and minimizable for $\mathcal{U}$ and generates a goodness-of-fit indices for $\mathcal{U}$ via \eqref{eq:tau-gen} which is continuous and accurate for $\mathcal{U}$. 
\end{remark}

We can use Proposition \ref{prop:good-news} to show that certain classes of utility functions possess loss functions which are minimizable and accurate, and generate goodness-of-fit indices via \eqref{eq:tau-gen} which are continuous and accurate. We identify three such classes below.  

\paragraph{Objective Concave EU.} Let $\pi = (\pi_1, \pi_2, \ldots,\pi_L) \in (0,1]^L$ be a fixed probability vector (i.e.\ $\sum_{\ell=1}^L \pi_{\ell} = 1$). A utility function $U: \rlp \rightarrow \mathbb{R}$ is an objective concave expected utility (OCEU) function if $U(\bs{q}) = \sum_{\ell=1}^L \pi_{\ell} u(\bs{q}_{\ell})$ where $u: \mathbb{R}_+ \rightarrow \mathbb{R}$ is continuous, concave, and increasing.

Given a dataset $D = (\bs{q}^t, \bs{p}^t)_{t \leq T}$ a finite sequence of pairs $( q_{\ell_i}^{t_i}, q_{\ti{\ell}_i}^{\ti{t}_i} )_{i \leq I}$ is called an \emph{objective test sequence} if (i) $q_{\ell_i}^{t_i} > q_{\ti{\ell}_i}^{\ti{t}_i}$ for all $i \leq I$ and (ii) each observation index $t$ appears the same number of times as $t_i$ as it appears as $\ti{t}_i$. It is known that a dataset $D$ can be rationalized by an objective concave expected utility function if and only if, for any objective test sequence $( q_{\ell_i}^{t_i}, q_{\ti{\ell}_i}^{\ti{t}_i} )_{i \leq I}$ 
		\begin{equation} \label{eq:OCEU}
			\prod_{i=1}^I \dfrac{ \pi_{\ti{\ell}_i} p_{\ell_i}^{t_i} }{ \pi_{\ell_i} p_{\ti{\ell}_i}^{\ti{t}_i} } \leq 1
		\end{equation}
(see \citet{kubler-selden-wei14} and \citet{echenique-imai-saito23}).

\paragraph{Subjective Concave EU.} This is similar to the objective expected utility case but the probability vector is not given. So, a utility function $U: \rlp \rightarrow \mathbb{R}$ is a subjective concave expected utility (SCEU) function if $U(\bs{q}) = \sum_{\ell=1}^L \pi_{\ell} u(q_{\ell})$ where $\pi = (\pi_1, \ldots, \pi_L) \in (0,1]^L$ is a probability vector (i.e.\ $\sum_{\ell=1}^L \pi_{\ell} = 1$) and $u: \mathbb{R}_{+} \rightarrow \mathbb{R}$ is continuous, concave, and increasing. 

Given a dataset $D = (\bs{q}^t, \bs{p}^t)_{t \leq T}$, a finite sequence of pairs $( q_{\ell_i}^{t_i}, q_{\ti{\ell}_i}^{\ti{t}_i} )_{i \leq I}$ is called a \emph{subjective test sequence} if (i) $( q_{\ell_i}^{t_i}, q_{\ti{\ell}_i}^{\ti{t}_i} )_{i \leq I}$ is an objective test sequence and (ii) each $\ell$ appears the same number of times as $\ell_i$ as it appears as $\ti{\ell}_i$. \citet{echeniquekota15} show that $D$ can be rationalized by a subjective concave expected utility function if and only if, for any subjective test sequence $( q_{\ell_i}^{t_i}, q_{\ti{\ell}_i}^{\ti{t}_i} )_{i \leq I}$ 
		\begin{equation} \label{eq:SCEU}
			\prod_{i=1}^I \dfrac{ p_{\ell_i}^{t_i} }{ p_{\ti{\ell}_i}^{\ti{t}_i} } \leq 1
		\end{equation}

\paragraph{Homothetic Utility.} A utility function $U: \rlp \rightarrow \mathbb{R}$ is homothetic if it satisfies $U(\bs{q}) \geq U(\bs{q}')$ if and only if $U(k \bs{q}) \geq U(k \bs{q}')$ for all $\bs{q}, \bs{q}' \in \rlp$ and all $k > 0$. It is shown in \citet{varian83} that $D$ can be rationalized by a continuous, increasing, and homothetic utility function if and only if, for all sequences $t_1, t_2, \ldots, t_K$ with $t_1 = t_K$
		\begin{equation} \label{eq:homothetic}
			\prod_{ k =1 }^{K-1} \dfrac{ \bs{p}^{t_k} \cdot \bs{q}^{t_{k+1}} }{ \bs{p}^{t_k} \cdot \bs{q}^{t_k} } \geq 1  
		\end{equation}

\begin{proposition} \label{prop:various}
	For the class of OCEU, SCEU, or homothetic utility functions, there exists a minimizable and accurate loss function which generates an accurate and continuous goodness-of-fit index.
\end{proposition}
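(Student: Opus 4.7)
The plan is to invoke Proposition \ref{prop:good-news}: for each of the three classes $\mathcal{U}$ it suffices to show that, for every $T$, the set $\mathcal{D}_{\mathcal{U}}^T$ of rationalizable $T$-observation purchase datasets is closed in $((\rlp \setminus \{\bs{0}\}) \times \mathbb{R}_{++}^L)^T$. Once closedness is in hand, the constructive $1 \Rightarrow 3$ direction of Proposition \ref{prop:good-news} already supplies a loss function that is accurate and minimizable for $\mathcal{U}$ and whose generated index (via \eqref{eq:tau-gen}) is continuous and accurate, so nothing further needs to be built.

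For the homothetic case, the characterization \eqref{eq:homothetic} expresses rationalizability as the conjunction, over all finite cycles $t_1, \ldots, t_K$ with $t_1 = t_K$, of the inequality $\prod_{k=1}^{K-1} (\bs{p}^{t_k} \cdot \bs{q}^{t_{k+1}})/(\bs{p}^{t_k} \cdot \bs{q}^{t_k}) \geq 1$. Each such inequality cuts out a closed subset of the dataset space because the maps $(\bs{q}^t, \bs{p}^t)_t \mapsto \bs{p}^{t_k} \cdot \bs{q}^{t_{k+1}}$ and $(\bs{q}^t, \bs{p}^t)_t \mapsto \bs{p}^{t_k} \cdot \bs{q}^{t_k}$ are continuous and the latter is strictly positive (since $\bs{q}^{t_k} \neq \bs{0}$ and $\bs{p}^{t_k} \gg \bs{0}$). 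Hence $\mathcal{D}_{\mathcal{U}}^T$ is a (possibly infinite) intersection of closed sets, and is therefore closed.

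The OCEU and SCEU cases follow the same general idea but demand more care, because the characterizations \eqref{eq:OCEU} and \eqref{eq:SCEU} quantify over test sequences satisfying the strict inequalities $q_{\ell_i}^{t_i} > q_{\tilde{\ell}_i}^{\tilde{t}_i}$ in the \emph{specific} dataset under consideration, so the collection of test sequences shifts with $D$. To establish closedness, I would take a convergent sequence $D_n \to D$ with each $D_n \in \mathcal{D}_{\mathcal{U}}^T$, fix any test sequence $(q_{\ell_i}^{t_i}, q_{\tilde{\ell}_i}^{\tilde{t}_i})_{i \leq I}$ of the limit dataset $D$, and observe that each strict inequality $q_{\ell_i}^{t_i} > q_{\tilde{\ell}_i}^{\tilde{t}_i}$ is an open condition on the data, while the balance conditions on $t_i, \tilde{t}_i$ (and on $\ell_i, \tilde{\ell}_i$ in the subjective case) are purely combinatorial and depend only on the indices chosen. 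Thus for all sufficiently large $n$ the same index scheme is a valid test sequence for $D_n$; applying \eqref{eq:OCEU} (respectively \eqref{eq:SCEU}) to $D_n$ yields a product inequality whose factors (depending only on prices, and on the fixed $\pi$ in the OCEU case) depend continuously on $D$, so passing to the limit gives the desired inequality for $D$. Because this argument covers every test sequence of $D$, we conclude that $D$ is rationalizable, hence $\mathcal{D}_{\mathcal{U}}^T$ is closed.

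The main obstacle I anticipate is precisely this $D$-dependence of the family of test sequences in the OCEU and SCEU cases: one cannot write $\mathcal{D}_{\mathcal{U}}^T$ as the intersection of a single pre-specified family of closed constraints, as one can for homothetic utility. The resolution leans on the openness of the strict-inequality conditions that define a test sequence, which guarantees that any test sequence of the limit is eventually a test sequence of the approximating datasets, after which continuity of the product in \eqref{eq:OCEU}/\eqref{eq:SCEU} carries the inequality through to the limit.
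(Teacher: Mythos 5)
Your proposal is correct and follows essentially the same route as the paper: verify that $\mathcal{D}_{\mathcal{U}}^T$ is closed using the fact that the characterizing conditions \eqref{eq:OCEU}, \eqref{eq:SCEU}, \eqref{eq:homothetic} are weak inequalities in continuous functions of the data, handle the data-dependence of test sequences via the openness of the strict inequalities defining them, and then invoke the constructive direction of Proposition \ref{prop:good-news}. Your write-up simply spells out in more detail the one-line observation the paper makes about non-test-sequences being preserved under limits.
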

\begin{proof}
	Note that the conditions in \eqref{eq:OCEU}, \eqref{eq:SCEU}, and \eqref{eq:homothetic} involve weak inequalities and continuous functions of the purchase data and are thus preserved under limits. Also, note that a sequence of pairs $( q_{\ell_i}^{t_i}, q_{\ti{\ell}_i}^{\ti{t}_i} )_{i \leq I}$ which is not an objective (subjective) test sequence will have a limit which is also not an objective (subjective) test sequence. These observations imply that the set in item 1 of Proposition \ref{prop:good-news} is closed for these families of utility functions and, therefore, Proposition \ref{prop:good-news} establishes the result.
\end{proof}

\section{An Empirical Demonstration} \label{sec:empirical}

\subsection{The data}

In this section we demonstrate how we can calculate robust preferences (in the sense explained in Section \ref{sec:robust-prefs}) over healthy and unhealthy beverages for households in the ``Complete Journey'' dataset published by the market research firm DunnHumby. This dataset consists of scanner data on approximately 2.6 million purchases made by 2,500 households in the U.S. over a two-year period. Our study considers the welfare impact of a reduction in soda consumption. While the original dataset has thousands of products over many categories, we focus our analysis on non-alcoholic beverages which we aggregate into two products: (i) soda and (ii) non-soda beverages (such as fruit juices).  Product and price aggregation was performed using standard methods which we detail in the online appendix. Each household is represented by a dataset of the form $((q^t_s,q^t_o), (p^t_s,p^t_o))_{t\leq T}$, where $q^t_s$ and $p^t_s$ is the quantity and price of soda at observation $t$ and $q^t_o$ and $p^t_o$ is the quantity and price of other-beverages at observation $t$. Each observation $t$ covers consumption over a period of two weeks.
  
Our analysis is performed on a panel of 267 moderately rational households with significant soda consumption.  We arrive at this panel principally by dropping households with only infrequently observed purchases or which consume too little soda; in a small number of cases, households were also removed for scoring very poorly on the Varian index. A complete description of our filtering procedure is in the online appendix.

The number of observations for each of these 267 households vary between 39 and 52, with a median of 44. On average, these households spend \$8.03 on soda every two weeks (with a standard deviation across households of \$6.63) and spend \$14.35 on non-soda drinks every two weeks (with a standard deviation across households of \$8.80). Only 16 of these households passed GARP (a pass rate of 6\%) and so Proposition \ref{prop:popular-loss-functions} tells us that 94\% of households in our sample do not have best-fitting utility functions when using many of the standard loss functions.

We shall compute some robust preferences for these households using three indices: the Afriat index, the Varian index (with additive aggregator), and the Houtman-Maks index. For a first exercise we compute the rank correlations between these indices which we display in Table \ref{tab:correlation}.\footnote{To compute the Varian and Houtman-Maks indices we use the mixed integer programming approach from \citet{demuynck-rehbeck23}.}
\begin{table}[ht]
	\centering
	\caption{Spearman’s Rank Correlation across indices}
	\label{tab:correlation}
	\begin{tabular}{l|cccc}
		
		& $ A $ & $ V $ & $ H $ &  \\
		\hline
		$ A $ & 1.000  &       &       &       \\
		$ V $ & 0.922 & 1.000  &       &       \\
		$ H $ & 0.436  & 0.640  & 1.000  &       \\
		
	\end{tabular}
\end{table}
Each entry in Table \ref{tab:correlation} presents the rank correlation between two indices. It is interesting to note that, while the Afriat and Varian indices appear highly correlated (rank correlation 0.922) the Houtman-Maks index is considerably less correlated with the other two indices (rank correlation 0.4 and 0.6 for the Afriat and Varian indices, respectively). This finding seems sensible. Both the Afriat and Varian indices are based on the shared concept of wasted expenditure while Houtman-Maks is based on the idea that some choices are uninformative mistakes which should be removed. Additional, as we have seen in Proposition \ref{prop:popular-indices}, the Afriat and Varian indices share the theoretical properties of being continuous and inaccurate whereas Houtman-Maks is accurate but not continuous. 

\subsection{Robust Preferences} 

Our goal is to estimate the extent to which healthy beverage consumption must be increased to compensate consumers for a decrease in soda consumption of 25\%.  Suppose we order the bundles purchased by a single houeshold  $\bs{q}^1, \bs{q}^2, \ldots, \bs{q}^T$ by increasing soda consumption, i.e., $q_s^1 \leq q_s^2 \leq \ldots \leq q_s^T$. We denote the bundle with the median soda consumption by $\bs{q}^M=(q_s^M,q_o^M)$. 

We estimate the increase in non-soda drink consumption required to compensate households for a decrease in soda consumption of 25\%. For each $k \in [0,1]$ we compare the median bundle $\bs{q}^M$ to the healthier bundle 
$$\bs{q}^{M,k} = \left( \tfrac{3}{4} q_s^M, (1+k) q_o^M \right)$$ 
which consists of a reduction in soda consumption of 25\% and an increase in non-soda consumption of $k \%$ (compared to the median bundle). Note that when $k = 0$ then $\bs{q}^M > \bs{q}^{M,k}$ and so robust preferences require $\bs{q}^M \ \mathcal{R}_Q \ \bs{q}^{M,0}$ for any loss function $Q$.

\begin{figure}[h]
	\centering
	\includegraphics[width=.7\textwidth]{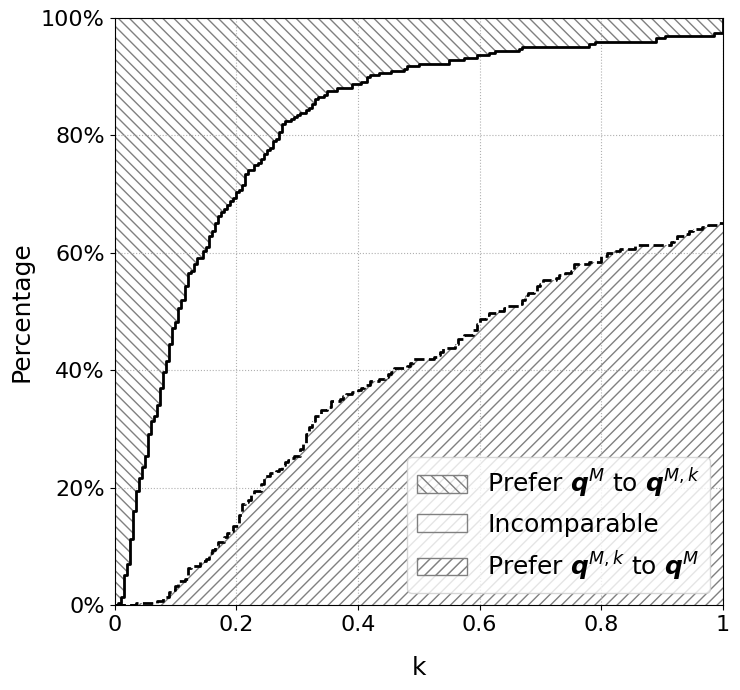}
	\caption{Weak and strong compensation levels for a 25\% reduction in soda consumption using the Varian index} \label{fig:switch-V}
\end{figure}
\vspace{0.15in}

Each household has two values of $k$ which we shall be interested in estimating. 
\begin{itemize}
\item [(i)] the {\bf weak compensation level} $k_w \in [0,1]$ is the smallest (infimum) number $k$ so that $\bs{q}^M \ \cancel{ \mathcal{R}_Q } \ \bs{q}^{M,k}$, i.e., $k_w$ is the smallest number so that we cannot conclude that the healthy bundle $\bs{q}^{M,k}$ is dominated by the median bundle $\bs{q}^M$. This number $k_w$ represents a lower bound on the amount of non-soda consumption which must be given to the household to compensate for the reduction in soda consumption; 
\item[(ii)] the {\bf strong compensation level} $k_s \in [0,1]$ is the smallest (infimum) number $k$ so that $\bs{q}^{M,k} \ \mathcal{R}_Q \ \bs{q}^M$. This number is an upper bound on the amount of non-soda consumption which must be given to the household to compensate for the reduction in soda consumption. Obviously $k_s \geq k_w$.
\end{itemize}
\vspace{10pt}

Figure \ref{fig:switch-V} displays the weak and strong compensation levels for the Varian index.\footnote{Details for computing the robust preference relation using the Varian loss function are in the online appendix.} The horizontal axis of this figure displays different compensation levels $k$ (i.e.\ different percentage increases in non-soda beverage consumption) while the vertical axis reports the percentage of households belonging to various categories. The figure is divided into three regions. The top region (shaded with diagonal falling lines) depicts households who prefer the baseline bundle $\bs{q}^{M}$ to the healthier bundle $\bs{q}^{M,k}$ (i.e.\ $\bs{q}^M \ \mathcal{R}_Q \ \bs{q}^{M,k}$). So for instance, at $k = 0.4$, we observe that roughly 10\% of households (the length of the upper region at $k= 0.4$) are found to prefer the unhealthy bundle $\bs{q}^M$ to the healthier $\bs{q}^{M,k}$. The bottom region depicts households who prefer the healthier bundle to the baseline bundle (i.e.\ $\bs{q}^{M,k} \ \mathcal{R}_Q \ \bs{q}^{M}$).  Taking $k=0.4$ we see that roughly 40\% of households prefer the healthier bundle $\bs{q}^{M,k}$ to the unhealthy bundle. The white region in the middle depicts households for whom comparison is not possible (recall that $\mathcal{R}_Q$ is not a complete relation). Again, considering $k=0.4$ we see that for roughly 50\% of households (the length of the white region at $k=0.4$), there is neither a robust preference for $\bs{q}^M$ nor a robust preference for $\bs{q}^{M,k}$.
 
The solid line in the figure (i.e.\ the boundary between the top and middle sections) represents the cumulative distribution function (CDF) of $k_w$. Note that this line does not end at 1.0 as there is no reason why a 100\% increase in non-soda consumption (the compensation level corresponding to the rightmost points in the figure) should be enough to compensate for a 25\% reduction in soda consumption. As we noted, $\bs{q}^M > \bs{q}^{M,0}$ and thus no household can have weak compensation level 0. That said, the figure depicts a sharp rise in the CDF of $k_w$ immediately to the right of 0 suggesting that a small amount of compensation is enough to move many households out of the region where the baseline bundle is robustly preferred to the healthier bundle. Indeed, an increase in non-soda beverage consumption by 10\% suffices to weakly compensate over 50\% of households for the considered 25\% reduction in soda consumption and a 25\% increase in non-soda beverage consumption is sufficient to weakly compensates about 80\% of households.

The dashed line in the figure (i.e.\ the boundary between the middle and bottom section) depicts the CDF for strong compensation level $k_s$. Of course, this CDF must lie below the CDF of the weak compensation level $k_w$. In contrast to the CDF for weak compensation we note that the CDF of $k_s$ rises at a fairly linear rate with a slope of about 0.6. In other words, for every 1\% increase in the compensation level in non-soda beverage consumption we win over an additional 0.6\% of households who now robustly prefer the healthier bundle to the baseline consumption level. 

For each household, there is typically a significant difference between the weak and strong compensation levels.  This is unsurprising, since (loosely speaking), the former represents that level of $k$ at which {\em there exists} a well-behaved utility function consistent with the data with a preference for the healthy bundle whereas the latter guarantees that {\em all} well-behaved utility functions consistent with the data have a preference for the healthy bundle.  In the population of households, the 25th, 50th, and 75th percentile values of $k$ for weak compensation are 0.05, 0.1, and 0.23 respectively; the corresponding values for strong compensation are 0.29, 0.63, and 1.44.


Figure \ref{fig:switching} depicts the strong and weak compensation levels for the Afriat and the Houtman-Maks indices.\footnote{We discussed how to compute robust preferences using the Afriat loss function in Section \ref{sec:robust-prefs} and further details are contained in the appendix. Computing robust preferences using the Houtman-Maks loss function is discussed in the online appendix.} From inspection, we see that the figures for the Varian, Afriat, and Houtman-Maks indices are very similar. To investigate more deeply, we calculate the correlation between these compensation levels in Table \ref{tab:cutoffs}. This table shows that the strong compensation levels using different loss functions are all highly correlated, though somewhat lower between Houtman-Maks and the other two loss functions.  The weak compensation levels using the Afriat and Varian loss functions are highly correlated but the correlation between the weak compensation level using Houtman-Maks and either the Afriat or Varian loss functions is somewhat lower.  This roughly aligns with the message in Table \ref{tab:correlation} which suggests that the Houtman-Maks index can respond quite differently to GARP violations compared to the Afriat and Varian indices.

\begin{figure}[h]
	\centering
	\begin{subfigure}[b]{0.49\textwidth}
		\centering
		\includegraphics[width=\textwidth]{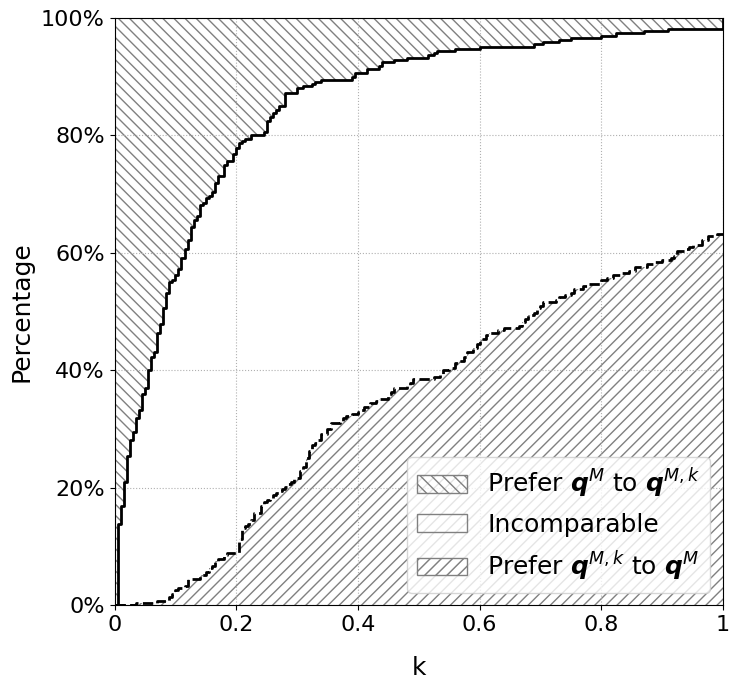}
		\caption{Afriat}
		\label{fig:switch-A}
	\end{subfigure}
	\hfill
	\begin{subfigure}[b]{0.49\textwidth}
		\centering
		\includegraphics[width=\textwidth]{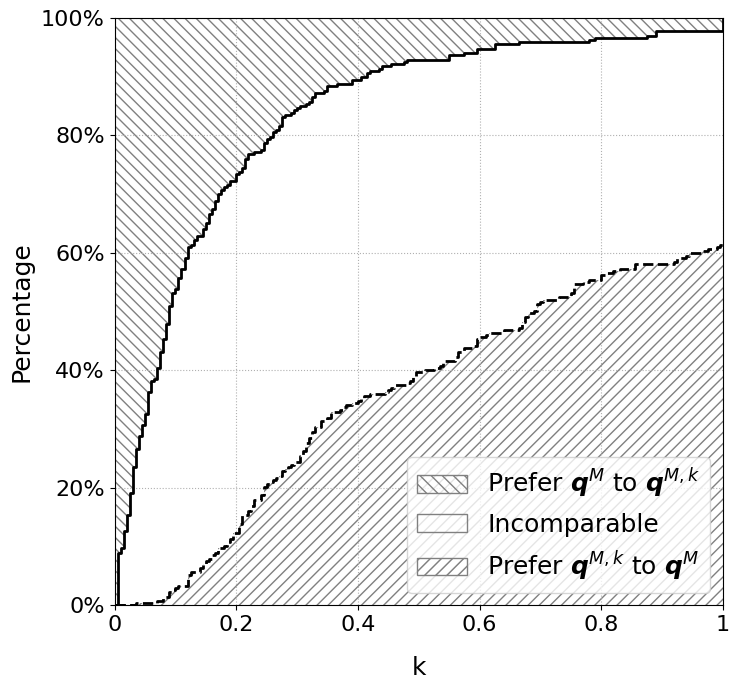}
		\caption{Houtman-Maks}
		\label{fig:switch-H}
	\end{subfigure}
	\caption{Weak and strong compensation levels for a 25\% reduction in soda consumption using the Varian Index}
	\label{fig:switching}
\end{figure}


\begin{table}[ht]
	\centering
	\caption{Correlations across indices weak and strong compensation levels}
	\label{tab:cutoffs}
	\begin{tabular}{l|ccc | ccc}
		& \multicolumn{3}{c|}{weak comp.} & \multicolumn{3}{c}{strong comp.} \\
		& $ A $ & $ V $ & $ H $ & $A$ & $V$ & $H$ \\
		\hline
		$ A $ & 1.000  &        &        & 1.000 &       & \\
		$ V $ & 0.996  & 1.000  &        & 0.975 & 1.000 &     \\
		$ H $ & 0.791  & 0.776  & 1.000  & 0.907 & 0.908 & 1.000 \\
		
	\end{tabular}
\end{table}

While the compensation levels are highly correlated across different loss functions, it may be that one loss function generally provides sharper welfare conclusions than the others. For a loss function $Q$ let $I_Q = [ k_w, k_s ]$ where $k_w$ and $k_s$ are the weak and strong compensation levels for some fixed household. Each $k \in ( k_w, k_s )$ represents a compensation level whose welfare content is unclear in the sense that $\bs{q}^{M} \ \cancel{ \mathcal{R}_{Q}} \ \bs{q}^{M,k}$ and $\bs{q}^{M,k} \ \cancel{\mathcal{R}_{Q}} \ \bs{q}^{M}$. For this reason, we say that loss function $Q$ is \emph{sharper} than $\ti{Q}$ if $I_Q \subseteq I_{\ti{Q}}$ and $I_{\ti{Q}} \not \subseteq I_Q$.

We find that the Varian and Houtman-Maks loss functions consistently provide sharper welfare bounds than the Afriat loss function. Specifically, the Varian loss function is sharper than the Afriat loss function for 91.76\% of households and the Houtman-Maks loss function is sharper than the Afriat loss function for 74.16\% of households. The Varian loss function appears to have a notable edge over Houtman-Maks as Varian is sharper for 19.85\% of households while Houtman-Maks is sharper for only 1.87\% of households (Varian and Houtman-Maks provide the exact same welfare bounds for 76.78\% of households and the bounds are incomparable for 1.50\% of households.)

\subsection{Takeaways}

We find strong correlations between the welfare predictions made by the Afriat, Varian, and Houtman-Maks loss functions.  For the average household, the weak compensation level is modest but the strong compensation level is considerably higher, i.e., only a moderate amount of compensation (in the form of `other beverages')  is required to give the household a healthy bundle (with a 25\% reduction in soda consumption) that is not robustly dominated by the original bundle, but considerably more compensation is required to ensure that the healthy bundle is robustly preferred to the original bundle.  The Varian loss function appears to be best performing in the sense that, more often than not, it gives the narrowest gaps between the weak and strong compensation levels.


\section{Conclusion} \label{sec:conclusion}

We have shown in this paper that, for the class of well-behaved utility functions, there are no continuous and accurate goodness-of-fit indices, nor are there any minimizable and accurate loss functions which generate continuous goodness-of-fit indices. The Afriat, Varian, Swaps, and LS loss functions are accurate but not minimizable; their corresponding indices are continuous and essentially accurate but not accurate.  The Houtman-Maks index loss function is accurate and minimizable but leads to a discontinuous goodness-of-fit index.  We introduced the robust preference relation $\mathcal{R}_Q$ and argued that it is suitable for making welfare statements even when the loss function is not minimizable (in other words, even when there is no best-fitting utility function according to that loss function). In an empirical demonstration, we calculated robust preference relations based on the Afriat, Varian, and Houtman-Maks loss functions. Lastly, we characterized those classes of utility functions which (unlike the class of well-behaved utility functions) {\em do} admit accurate and minimizable loss functions generating accurate and continuous goodness-of-fit indices.


\appendix

\section{Appendix} \label{sec:appendix}

The appendix is organized as follows. In Section \ref{sec:props} we prove most of the unproved propositions. Section \ref{sec:afriat-appendix} proves Proposition \ref{prop:afriat-welfare}. Section \ref{sec:e^*-calc} shows how to calculate the Afriat index exactly.

\subsection{Proof of Propositions \ref{prop:badnews2}, \ref{prop:popular-indices}, \ref{prop:essacc-indices}, \ref{prop:popular-loss-functions}, and \ref{prop:uniform-pref}} \label{sec:props}

First we present some prerequisite results.

\subsubsection{Lemmas} \label{sec:NOQ}

We will need a version of Theorem 2 in \citet{nishimura-oK-quah17}. A \emph{choice dataset} is a a finite collection of pairs $D = (\bs{q}^t, B^t)_{t \leq T}$ where $\bs{q}^t \in B^t$ and $B^t \subseteq \mathbb{R}_{+}^L$ is a compact set. We say that $D$ is \emph{rationalized} by $U$ if, for all $t$, we have $U(\bs{q}^t) \geq U(\bs{q})$ for all $\bs{q} \in B^t$. We write $\bs{q}^t \ R \ \bs{q}^s$ if there exist some $\tbs{q} \in B^t$ so that $\tbs{q} \geq \bs{q}^s$ and $\bs{q}^t \ P \ \bs{q}^s$ if there exists $\tbs{q} \in B^t$ so that $\tbs{q} > \bs{q}^s$. We refer to $R$ and $P$ as the direct and strict direct revealed preference relations for $D$. We say that $D$ satisfies \emph{cyclical consistency} if, for all $t_1, t_2, \ldots, t_K$ we have $\bs{q}^{t_1} \ R \ \bs{q}^{t_2} \ R \ \ldots \ R \ \bs{q}^{t_K}$ implies not $\bs{q}^{t_K} \ P \ \bs{q}^{t_1}$. The following is a version of Theorem 2 in \citet{nishimura-oK-quah17}.
\begin{lemma}[NOQ, 2017] \label{lemma:NOQ}
	The choice dataset $D = (\bs{q}^t, B^t)_{t \leq T}$ is rationalized by a well-behaved utility function if and only if it satisfies cyclical consistency.
\end{lemma}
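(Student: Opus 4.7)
The plan is to establish both directions of the equivalence, with necessity being routine and sufficiency requiring a preorder-extension construction.

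For necessity, suppose a well-behaved $U$ rationalizes $D$, and take any chain $\bs{q}^{t_1} R \bs{q}^{t_2} R \cdots R \bs{q}^{t_K}$. For each $k$, the definition of $R$ supplies a bundle $\tilde{\bs{q}}_k \in B^{t_k}$ with $\tilde{\bs{q}}_k \geq \bs{q}^{t_{k+1}}$. Rationalization gives $U(\bs{q}^{t_k}) \geq U(\tilde{\bs{q}}_k)$ and monotonicity gives $U(\tilde{\bs{q}}_k) \geq U(\bs{q}^{t_{k+1}})$, so the $U$-values are weakly decreasing along the chain. If $\bs{q}^{t_K} P \bs{q}^{t_1}$, then there is $\tilde{\bs{q}} \in B^{t_K}$ with $\tilde{\bs{q}} > \bs{q}^{t_1}$; strict monotonicity of $U$ then yields $U(\bs{q}^{t_K}) \geq U(\tilde{\bs{q}}) > U(\bs{q}^{t_1})$, which contradicts the weakly decreasing chain. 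Hence cyclical consistency holds.

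For sufficiency, I would build a preorder $\succsim$ on $\mathbb{R}_+^L$ as the transitive closure of the union of (i) the coordinate-wise order $\geq$ and (ii) the set of pairs $(\bs{q}^t,\bs{q})$ with $\bs{q} \in B^t$; the strict version $\succ$ is defined by requiring that some link in the chain be strict (either $>$ componentwise, or of type (ii) with the final bundle strictly dominated by something in $B^t$, mirroring the relation $P$). The combinatorial heart of the proof is a claim: cyclical consistency of $D$ implies $\succsim$ has no strict cycles, i.e., one cannot simultaneously have $\bs{q} \succsim \bs{q}'$ and $\bs{q}' \succ \bs{q}$. The argument is to take a putative cycle in $\succsim$ and collapse runs of type-(i) $\geq$-steps into single dominations between adjacent observed bundles; this reduces the cycle to a sequence of $R$-links among the $\bs{q}^t$, with strictness transporting (via monotonicity) into a $P$-link, directly contradicting cyclical consistency.

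The final step is to extend $\succsim$ to a complete, continuous, and monotone preorder on $\mathbb{R}_+^L$ and represent it by a continuous, increasing utility function $U$. One route is to apply a Szpilrajn-style extension that preserves monotonicity (straightforward, since $\geq$ is already embedded in $\succsim$) followed by a Debreu-style continuous representation, possibly after first constructing a monotone continuous utility directly from the closed upper- and lower-contour sets of $\succsim$ as in the NOQ argument. Once $U$ is in hand, rationalization is immediate: for each $t$ and each $\bs{q} \in B^t$, we have $\bs{q}^t \succsim \bs{q}$ by a single type-(ii) link, hence $U(\bs{q}^t) \geq U(\bs{q})$. The main obstacle is the last step: it is the passage from a monotone preorder without strict cycles to a \emph{continuous} and strictly increasing numerical representation that requires the most delicate work, since one must verify that the extension preserves both strict monotonicity and the relevant continuity property of $\succsim$ on the separable space $\mathbb{R}_+^L$.
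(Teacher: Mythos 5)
First, a point of comparison: the paper does not prove this lemma at all --- it is imported verbatim as (a version of) Theorem~2 of Nishimura, Ok, and Quah (2017), so there is no in-paper argument to measure yours against. Judged on its own terms, your necessity direction is complete and correct, and your sufficiency direction reproduces the right architecture --- form $\succsim\;=\tran(\geq\cup\,R)$, show that cyclical consistency rules out strict cycles, then extend and represent --- which is indeed how NOQ proceed.

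The genuine gap is the one you flag yourself, and it is not a small one: the passage from an acyclic monotone preorder to a \emph{continuous}, increasing utility function. A Szpilrajn-style extension delivers completeness but says nothing about continuity, and Debreu's representation theorem requires the complete preorder to already be continuous, so the two tools you name do not compose as stated. What is actually needed is (i) a verification that $\tran(\geq\cup\,R)$ has a closed graph on $\rlp$ --- this is where compactness of the sets $B^t$ and finiteness of $T$ enter, since the transitive closure of a closed relation need not be closed in general --- and (ii) an extension theorem for \emph{continuous} preorders (Levin's theorem, or the direct construction NOQ build from the closed upper and lower contour sets of $\succsim$) that simultaneously preserves the strict part, so that the resulting $U$ is increasing and satisfies $U(\bs{q}^t)\geq U(\bs{q})$ for all $\bs{q}\in B^t$. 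Your sketch gestures at both ingredients but proves neither, and since this step is precisely the content of the cited theorem, the proposal as written establishes only the easy direction. A smaller point in the cycle-collapsing argument: you should say explicitly that any strict cycle in $\succsim$ must contain at least one type-(ii) link (because $\geq$ alone admits no strict cycles), so that the cycle can be rotated to start at an observed bundle before being collapsed into an $R$-chain with a $P$-link among the $\bs{q}^t$.
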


We move on to proving an additional helpful lemma. Recall that $A$, $V$, and $S$ are the Afriat, Varian, and Swaps loss functions, respectively. Also, recall that $f$ is the function in the definition of the Varian loss function and $\mu$ is the Lebesgue measure.
\begin{lemma} \label{lemma:gof-rep}
	Let $D = (\bs{q}^t, \bs{p}^t)_{t \leq T}$ be a purchase dataset and let $\mathcal{Q} = \{ \bs{q}^t \}_{t \leq T}$. Let $\mathcal{P}_D$ be the collection of all transitive and complete binary relations $\succeq$ on $\mathcal{Q}$ which satisfy $\bs{q}^t \geq (>) \ \bs{q}^s$ implies $\bs{q}^t \succeq (\succ) \ \bs{q}^s$.\footnote{As usual $\bs{q}^t \succ \bs{q}^s$ means $\bs{q}^t \succeq \bs{q}^s$ and $\bs{q}^s \not\succeq \bs{q}^t$.} For $\succeq \ \in \mathcal{P}_D$ let $\bs{e}^{\succeq} = ( e_1^{\succeq}, e_2^{\succeq}, \ldots, e_T^{\succeq} )$ be defined by 
	\begin{equation*}
		e_{t}^{\succeq} = \min_{s \in \{s': \bs{q}^{s'} \succeq \bs{q}^t \} } \dfrac{ \bs{p}^t \cdot \bs{q}^s }{ \bs{p}^t \cdot \bs{q}^t }
	\end{equation*}
	Let $A(\succeq;D) = \min_t e_t^{\succeq}$, $V(\succeq;D) = f(\bs{e}^{\succeq})$, and $S(\succeq; D) = \sum_{t=1}^T \mu( \{ \bs{q} \in B(\bs{p}^t, \bs{p}^t \cdot \bs{q}^t): \exists s \text{ s.t.\ } \bs{q} \geq \bs{q}^s \succeq \bs{q}^t \} )$. Then, for any $Q \in \{ A, V, S \}$,
	\begin{IEEEeqnarray}{rcCcl}
		\inf_{U \in \mathcal{U}_{WB}} Q(U;D) & = & \inf_{\succeq \in \mathcal{P}_D} Q(\succeq; D) \label{eq:gof-rep}
	\end{IEEEeqnarray} 
\end{lemma}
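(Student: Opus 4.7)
Each $U \in \mathcal{U}_{WB}$ induces a preference order $\succeq_U \in \mathcal{P}_D$ on the finite set $\mathcal{Q}$ via $\bs{q}^t \succeq_U \bs{q}^s \iff U(\bs{q}^t) \geq U(\bs{q}^s)$; monotonicity of $U$ ensures $\succeq_U$ satisfies the refinement conditions in the definition of $\mathcal{P}_D$. Since $\mathcal{P}_D$ is finite, the right-hand infimum in \eqref{eq:gof-rep} is in fact a minimum. The plan is to prove \eqref{eq:gof-rep} by matching two inequalities: (i) $Q(U;D) \geq Q(\succeq_U; D)$ for every $U \in \mathcal{U}_{WB}$; and (ii) for every $\succeq \in \mathcal{P}_D$ and every $\epsilon > 0$, some $U \in \mathcal{U}_{WB}$ satisfies $Q(U;D) \leq Q(\succeq; D) + \epsilon$.

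For (i) with the Varian loss (and, as a scalar special case, Afriat): if $U$ $\bs{e}$-rationalizes $D$, then for each $t$ and each $s$ with $\bs{q}^s \succeq_U \bs{q}^t$, monotonicity of $U$ forces $\bs{p}^t \cdot \bs{q}^s \geq e_t \bs{p}^t \cdot \bs{q}^t$---otherwise a bundle slightly above $\bs{q}^s$ would lie in $B(\bs{p}^t, e_t \bs{p}^t \cdot \bs{q}^t)$ with strictly greater $U$-value than $\bs{q}^t$, contradicting $\bs{e}$-rationalization. Hence $e_t \leq e_t^{\succeq_U}$ for each $t$, and the decreasing monotonicity of $f$ transmits this into $V(\succeq_U;D) \leq V(U;D)$; the Afriat version uses the outer $\min_t$ in place of $f$. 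For Swaps, the set $\{\bs{q} \in B(\bs{p}^t, \bs{p}^t \cdot \bs{q}^t) : \exists s,\ \bs{q} \geq \bs{q}^s \succeq_U \bs{q}^t\}$ is contained in the weak upper contour set of $U$ at $\bs{q}^t$ by monotonicity, and differs from the strict upper contour set used in $S(U;D)$ only on the indifference locus $\{U = U(\bs{q}^t)\}$, which contributes no Lebesgue measure after a small order-preserving perturbation of $U$.

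For (ii), my main tool is Lemma \ref{lemma:NOQ}. In the Afriat case, fix $e'$ slightly below $\min_t e_t^\succeq$ and form the choice dataset with $B^t = \{\bs{q}^t\} \cup B(\bs{p}^t, e' \bs{p}^t \cdot \bs{q}^t)$ (paralleling the consideration-set model $\mathcal{A}$ from Section \ref{sec:bounded-rat}). The critical step is verifying cyclical consistency: any direct revealed preference $\bs{q}^t \, R \, \bs{q}^s$ in this dataset arises either from $\bs{q}^t \geq \bs{q}^s$ (so $\bs{q}^t \succeq \bs{q}^s$ because $\succeq$ refines $\geq$), or from $\bs{p}^t \cdot \bs{q}^s \leq e' \bs{p}^t \cdot \bs{q}^t < e_t^\succeq \bs{p}^t \cdot \bs{q}^t$, which by the definition of $e_t^\succeq$ forces $\bs{q}^s \not\succeq \bs{q}^t$ and hence $\bs{q}^t \succ \bs{q}^s$ by completeness; the same bookkeeping upgrades any $P$-relation to strict $\succ$. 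A putative revealed preference cycle $\bs{q}^{t_1} R \cdots R \bs{q}^{t_K} P \bs{q}^{t_1}$ therefore yields $\bs{q}^{t_1} \succeq \bs{q}^{t_K}$ by transitivity together with $\bs{q}^{t_K} \succ \bs{q}^{t_1}$, a contradiction. Lemma \ref{lemma:NOQ} then furnishes $U \in \mathcal{U}_{WB}$ rationalizing the auxiliary dataset, so this $U$ in particular $e'$-rationalizes $D$ and $A(U;D) \leq 1-e'$; letting $e' \uparrow \min_t e_t^\succeq$ closes the Afriat case. The Varian case is analogous, choosing a componentwise $\bs{e}' < \bs{e}^\succeq$ and using continuity of $f$ to pass to the limit.

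The hardest piece is Swaps in (ii), because its loss is a Lebesgue measure rather than an $\bs{e}$-rationalization tolerance. Here I would apply Lemma \ref{lemma:NOQ} to a choice dataset where each $B^t$ is a compact approximation of $B(\bs{p}^t, \bs{p}^t \cdot \bs{q}^t)$ with the upper-comprehensive hull of $\{\bs{q}^s : \bs{q}^s \succeq \bs{q}^t\}$ excised and $\bs{q}^t$ retained; cyclical consistency again follows from transitivity of $\succeq$ by the same reasoning as above. The $U$ so produced makes $\bs{q}^t$ weakly preferred to everything in $B^t$, so its strict upper contour set at $\bs{q}^t$, intersected with the original budget, lies up to a null boundary inside $\{\bs{q} : \exists s,\ \bs{q} \geq \bs{q}^s \succeq \bs{q}^t\}$. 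A final vanishingly small, strictly increasing perturbation of $U$ eliminates any positive-measure indifference sets while preserving $\succeq_U = \succeq$, yielding $S(U;D) \leq S(\succeq; D) + \epsilon$.
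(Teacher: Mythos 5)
Your proof is correct and follows essentially the same route as the paper's: the direction $\inf_{U} Q(U;D) \geq \inf_{\succeq} Q(\succeq;D)$ via the order that $U$ induces on $\mathcal{Q}$, and the reverse direction by building truncated consideration-set choice datasets, verifying cyclical consistency from the transitivity and completeness of $\succeq$, invoking Lemma \ref{lemma:NOQ}, and passing to the limit (the paper uses a single family $B^t_{\varepsilon}$ of consideration sets for all three losses where you split the Afriat/Varian construction from the Swaps one, but the constructions are interchangeable). Note that you implicitly read $A(\succeq;D)$ as $1-\min_t e_t^{\succeq}$, which is the normalization the statement must intend given that $A(U;D) = 1-\sup\{e\in[0,1]: D \text{ is } e\text{-rationalized by } U\}$.
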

\begin{proof}
	Let $Q \in \{A, V, S\}$ and let $\tau$ be defined by \eqref{eq:tau-gen}. It is obvious that $\inf_{U \in \mathcal{U}_{WB}} Q( U; D ) \geq \inf_{\succeq \in \mathcal{P}_{D}} Q( \succeq; D )$ and so, to complete the proof, we must show that $\inf_{\succeq \in \mathcal{P}_D} Q( \succeq; D ) \geq \inf_{U \in \mathcal{U}_{WB}} Q( U; D )$.
	
	As $\mathcal{P}_D$ contains only finitely many elements the infimum in \eqref{eq:gof-rep} is attained by some $\succeq^* \ \in \mathcal{P}_D$. For each $\varepsilon > 0$ and each $t$ let $B_{\varepsilon}^t = \{ \bs{q} \in B(\bs{p}^t, \bs{p}^t \cdot \bs{q}^t): \forall s \in \{1,2\ldots, T\} \text{ s.t.\ } \bs{q}^t \succeq^* \bs{q}^s \text{ we have } \bs{q} \ \cancel{\gg} \ (1-\varepsilon) \bs{q}^s \} \cup \{ \bs{q}^t\}$. Let $D_{\varepsilon} = ( \bs{q}^t, B_{\varepsilon}^t )_{t \leq T}$. It is easy to see that $D_{\varepsilon}$ satisfies cyclical consistency and so, by Theorem 2 in \citet{nishimura-oK-quah17} (Lemma \ref{lemma:NOQ} above), there exists a well-behaved $U_{\varepsilon}$ which rationalizes $D_{\varepsilon}$. It is easy to see (if $Q$ is $\swaps$ then use the continuity-from-below property of measures) that $Q(U_{\varepsilon};D) \rightarrow Q(\succeq^*; D)$ as $\varepsilon \rightarrow 0$ and so $\inf_{U \in \mathcal{U}_{WB}} Q(U;D) \leq \inf_{\succeq \in \mathcal{P}_D} Q(\succeq; D)$ and the proof is complete.
\end{proof}

\subsubsection{Proofs of the propositions}

\begin{proof}[Proof of Proposition \ref{prop:badnews2}.]
	We must show that for every dataset $D$ which contains no strong RP-cycles we can perturb the bundles so that the perturbed dataset is rationalizable. Clearly this perturbation is possible for datasets with one observation (indeed one observation datasets are always rationalizable and so the dataset need not be perturbed at all). So we assume that such perturbations are possible for all datasets with $T$ observations and will show that it is also possible for datasets with $T+1$ observations.
	
	Let $D$ be a $T+1$ observation dataset without strong RP-cycles. If $D$ contains no weak RP-cycles then the result is trivial so suppose $D$ contains some weak RP-cycles. So, for any sequence $t_1,t_2,\ldots, t_K$ with $t_1 = t_K$ there exists $k < K$ so that $\bs{q}^{t_k} \ \cancel{P} \ \bs{q}^{t_{k+1}}$. Thus, there must exist an observation $s$ such that $\bs{q}^s \ \cancel{ P } \ \bs{q}^t$ for all $t$ because if no such observation existed then we could find a sequence $t_1, t_2, \ldots , t_K$ with $t_1 = t_K$ which satisfied $\bs{q}^{t_k} \ P \ \bs{q}^{t_{k+1}}$ for all $k < K$.
	
	So, remove the observation $(\bs{q}^s, \bs{p}^s)$ from the set of $T+1$ observations. The remaining collection $D'$ has $T$ observations and, by our induction assumption, there is a perturbed dataset $\ti{D}'=\{( \tbs{q}^t,  \bs{p}^t)\}_{t\leq T}$ that is rationalizable and so there is a well-behaved utility function $U: \rlp \rightarrow \mathbb{R}$ which rationalizes $\ti{D}'$.
	
	Let $\hat{\bs{q}}^s$ be a bundle on $B(\bs{p}^s, \bs{p}^s \cdot \bs{q}^s)$ that maximizes utility function $U$. We claim that the dataset formed by appending $(\tbs{q}^s, \bs{p}^s )$ to $\ti{D}'$ is still rationalizable, where $\tbs{q}^s = \alpha \bs{q}^s + (1-\alpha) \hat{\bs{q}}^s$ for some $\alpha\in (0,1)$.
	
	Suppose not. Then $(\tbs{q}^s, \bs{p}^s)$ must be part of a GARP-violating cycle. Then there is another observation $r$ such that $\bs{p}^r \cdot \tbs{q}^r \geq \bs{p}^r \cdot \tbs{q}^s$.  Since $\bs{p}^r \cdot \tbs{q}^r \leq \bs{p}^r \cdot \bs{q}^s$ by our choice of observation $s$, we have $\bs{p}^r \cdot \tbs{q}^r \geq \bs{p}^r \cdot \hat{\bs{q}}^s$. But this means that if we replace $(\bs{p}^s, \tbs{q}^s)$ with $(\bs{p}^s, \hat{\bs{q}}^s)$, there will still be a GARP-violating cycle. But that's impossible since $\ti{D}'$ with observation $(\bs{p}^s, \hat{\bs{q}}^s)$ appended can be rationalized by $U$. By choosing $\alpha$ arbitrarily close to $1$ we find the desired perturbed dataset. Thus, any dataset $D$ without strong RP-cycles can be perturbed by an arbitrarily small amount in order to restore GARP. 
\end{proof}

\begin{proof}[Proof of Proposition \ref{prop:popular-indices}.]
	Houtman Maks is obviously accurate from its definition. The continuity of the Afriat, Varian, and Swaps indices follows from Lemma \ref{lemma:gof-rep}. Let $\mathcal{D} \subseteq ( (\rlp \backslash \{\bs{0}\}) \times \mathbb{R}_{++})^T$ be the collection of all $T$ observation datasets which satisfy GARP and let $\bar{\mathcal{D}}$ denote the topological closure of $\mathcal{D}$. For two $T$ observation datasets $D = (\bs{q}^t,\bs{p}^t)_{t \leq T}$ and $\ti{D} = (\tbs{q}^t, \tbs{p}^t)_{t \leq T}$ let $d( D, \ti{D} ) = \sum_{t=1}^T (\bs{q}^t - \tbs{q}^t)' \bs{W}_D^t (\bs{q}^t - \tbs{q}^t)$ where $\bs{W}_D^t$ is the matrix which appears in the definition of the LS loss function. Let $\Gamma( (\bs{q}^t, \bs{p}^t)_{t \leq T} ) = \{ (\tbs{q}^t, \bs{p}^t)_{ t \leq T} \in \bar{\mathcal{D}}: \tbs{q}^t \in B(\bs{p}^t, \bs{p}^t \cdot \bs{q}^t) \text{ for all } t \}$. Note that $\ls(D) = \inf_{ \ti{D} \in \Gamma(D) } d( D, \ti{D} )$. Now, the continuity of $\ls$ follows from the Berge Maximum Theorem.

	The negative results for the indices now follow from Proposition \ref{prop:badnews2}.
\end{proof}

\begin{proof}[Proof of Proposition \ref{prop:essacc-indices}.]
	Given Proposition \ref{prop:badnews2} we see that we need only show that the indices take on positive values when the data contains a strong GARP violation. This is essentially true by definition for the $\ls$ index and so we focus on the remaining indices.
	
	So, suppose $D$ contains a strong cycle. That is, there exists $t_1,t_2,\ldots, t_K$ so that
	\begin{equation} \label{eq:strict-cycle}
		\bs{q}^{t_1} \ P \ \bs{q}^{t_2} \ P \ \bs{q}^{t_3} \ P \ \ldots \ P \ \bs{q}^{t_K} \ P \ \bs{q}^{t_1}
	\end{equation}
	Also, let $t_{K+1} = t_1$. Of course, it is impossible to have a function $U$ satisfy $U(\bs{q}^{t_1}) > U(\bs{q}^{t_2}) > \ldots > U( \bs{q}^{t_K} ) > U( \bs{q}^{t_1} )$ and so, 
	\begin{equation} \label{eq:bad-k-exists}
		\forall U \in \mathcal{U}_{WB} \text{ there exists } k \text{ s.t. } U(\bs{q}^{t_{k+1}}) \geq U(\bs{q}^{t_k})
	\end{equation}
	Take 
	\begin{equation*}
		e^* = \max_{k} \ \dfrac{ \bs{p}^{t_k} \cdot \bs{x}^{t_{k+1}} }{ \bs{p}^{t_k} \cdot \bs{x}^{t_{k}} }
	\end{equation*}
	Because \eqref{eq:strict-cycle} holds we have $e^* < 1$ and from \eqref{eq:bad-k-exists} we have $0 < 1-e^* \leq A(U;D)$ for all $U$. So, $A$ is essentially accurate.  A similar argument demonstrates that $0 < V(D)$ and so $V$ is essentially accurate. Finally, from \eqref{eq:strict-cycle} and \eqref{eq:bad-k-exists}
	\begin{equation*}
		S(D) \geq \min_k \ \mu \Big\{ \bs{q} \in B(\bs{p}^{t_k}, \bs{p}^{t_k} \cdot \bs{q}^{t_k}): U(\bs{q}) \geq U(\bs{q}^{t_{k+1}}) \Big\} > 0
	\end{equation*}
	and so $S$ is also essentially accurate.
\end{proof}

\begin{proof}[Proof of Proposition \ref{prop:popular-loss-functions}.]
	In what follows, recall that $A$, $V$, $S$, and $\ls$ refers to the Afriat, Varian, Swaps, and LS loss functions and indices, respectively. We first prove that the loss functions are accurate and then we prove the empty argmin results. \vspace{7pt}
	
	\noindent \emph{Accuracy of loss functions.} Let $Q \in \{A, V, S, \ls\}$. If $D$ is rationalized by $U \in \mathcal{U}_{WB}$ then it is easy to see that $Q(U;D) = 0$. On the other hand, suppose $D$ is not rationalized by $U \in \mathcal{U}_{WB}$. As $U$ does not rationalize $D$ there exist some observation $t$ and bundle $\tbs{q} \in \mathbb{R}_{+}^L$ so that $U( \tbs{q} ) > U(\bs{q}^t)$ and $\bs{p}^t \cdot \bs{q}^t \geq \bs{p}^t \cdot \tbs{q}$. As $U$ is continuous there exists $\tbs{q}'$ so that $U( \tbs{q}' ) > U(\bs{q}^t)$ and $\bs{p}^t \cdot \bs{q}^t > \bs{p}^t \cdot \tbs{q}'$. It is now easy to show that $Q(U;D) > 0$ using the existence of $\tbs{q}'$.  \vspace{7pt}
	
	\noindent \emph{Varian and Swaps empty argmin.} Let $D = (\bs{q}^t, \bs{p}^t)_{t \leq T}$ be a purchase dataset which does not satisfy GARP and let $\mathcal{Q} = \{\bs{q}^t\}_{t \leq T}$. Let $Q \in \{ V,S \}$. Define $Q(\succeq; D)$ as in Lemma \ref{lemma:gof-rep}. For a contradiction suppose there exists $U^* \in \argmin_{U \in \mathcal{U}_{WB}} Q(U;D)$. Let $\succeq^*$ be the binary relation on $\mathcal{Q}$ generated by $U^*$ in the sense that $\bs{q}^t \succeq^* \bs{q}^s$ iff $U^*( \bs{q}^t ) \geq U^*(\bs{q}^s)$. From Lemma \ref{lemma:gof-rep} it is clear that $Q(U^*; D) = Q(\succeq^*; D)$. We write $\bs{q}^t \ P^* \ \bs{q}^s$ if either (i) $\bs{q}^t \ R \ \bs{q}^s$ and $U^*(\bs{q}^s) > U^*(\bs{q}^t)$ or (ii) $\bs{q}^t \ P \ \bs{q}^s$ and $U^*( \bs{q}^s ) = U^*( \bs{q}^t )$. As $D$ does not satisfy GARP there must be some $\ti{t}$ and $\ti{s}$ so that $\bs{q}^{\ti{t}} \ P^* \ \bs{q}^{\ti{s}}$. Without loss of generality we assume that $\ti{s}$ satisfies $\bs{q}^{\ti{t}} \ \cancel{P^*} \ \bs{q}^{r}$ for all $\bs{q}^r$ satisfying $\bs{q}^{\ti{s}} > \bs{q}^r$ (if $\ti{s}$ does not satisfy this property for some $r$ then just redefine $\ti{s}$ to be $r$ and repeat this process until $\ti{s}$ satisfies the desired property). Now, if case (i) in the definition of $P^*$ holds (i.e.\ $\bs{q}^{\ti{t}} \ R \ \bs{q}^{\ti{s}}$ and $U^*(\bs{q}^{\ti{s}}) > U^*(\bs{q}^{\ti{t}})$) then it is easy to see (as $U^*$ is continuous) that $Q(U^*; D) > Q(\succeq^*; D)$ which is a contradiction. We infer that case (ii) must hold and so $\bs{q}^{\ti{t}} \ P \ \bs{q}^{\ti{s}}$ and $U^*(\bs{q}^{\ti{s}}) = U^*( \bs{q}^{\ti{t}} )$. Now, let $\succeq^{**}$ be any linear order on $\mathcal{Q}$ which satisfies (i) $\bs{q}^t \succeq^{**} \bs{q}^s$ implies $\bs{q}^t \succeq^* \bs{q}^s$ and (ii) $\bs{q}^{\ti{t}} \succeq^{**} \bs{q}^{\ti{s}}$. Consider two cases; (a) $Q = V$ and (b) $Q = S$. Under case (a) let $\bs{e}^{\succeq}$ be defined as in Lemma \ref{lemma:gof-rep} and note that $e_{t}^{\succeq^{**}} \geq e_{t}^{\succeq^{*}}$ for all $t$ and $e_{\ti{t}}^{\succeq^{**}} > e_{ \ti{t} }^{\succeq^{*}}$. Therefore, as $f$ is decreasing, we have $V(\succeq^*;D) > V(\succeq^{**}; D)$ which contradicts equation \eqref{eq:gof-rep} in Lemma \ref{lemma:gof-rep}. We conclude that $\argmin_{U \in \mathcal{U}_{WB}} V(U;D)$ is empty. Under case (b) let $h_t(\succeq) = \mu\{ \bs{q} \in B(\bs{p}^t, \bs{p}^t \cdot \bs{q}^t): \bs{q} \geq \bs{q}^s \succeq \bs{q}^t \}$. Clearly, $h_t(\succeq^{*}) \geq h_t(\succeq^{**})$ for all $t$ and $h_{\ti{t}}(\succeq^*) > h_{ \ti{t} }( \succeq^{**} )$ and thus, $S(\succeq^{*};D) > S(\succeq^{**};D)$ which contradicts equation \eqref{eq:gof-rep} in Lemma \ref{lemma:gof-rep}. We conclude that $\argmin_{U \in \mathcal{U}_{WB}} S(U;D)$ is empty. \\ 
	
	\noindent \emph{LS empty argmin.} Let $D = (\bs{q}^t, \bs{p}^t)_{t \leq T}$ be a purchase dataset which does not satisfy GARP and let $\mathcal{Q} = \{\bs{q}^t\}_{t \leq T}$. Let $\mathcal{D} = \{ ( \tbs{q}^t, \bs{p}^t )_{t \leq T}: \bs{p}^t \cdot \tbs{q}^t = \bs{p}^t \cdot \bs{q}^t, \forall t \}$ (i.e.\ $\mathcal{D}$ is the collection of all $T$ observation purchase datasets with the same budget sets as in $D$). For any two datasets $\ti{D} = (\tbs{q}^t, \bs{p}^t)_{t \leq T}$ and $\hat{D} = (\hbs{q}^t, \bs{p}^t)_{t \leq T}$ in $\mathcal{D}$ we let $\alpha \ti{D} + (1-\alpha) \hat{D} = ( \alpha \tbs{q}^t + (1-\alpha) \hbs{q}^t, \bs{p}^t )_{t \leq T}$ for all $\alpha \in [0,1]$ and we let $d( \ti{D}, \hat{D} ) = \sum_{t=1}^T ( \tbs{q}^t - \hbs{q}^t )' \bs{W}_D^t ( \tbs{q}^t - \hbs{q}^t )$. Now, for a contradiction suppose that $U^* \in \argmin_{U \in \mathcal{U}_{WB}} \ls(U;D)$. As $U^*$ is continuous (and thus, the subset of $\mathcal{D}$ which is rationalized by $U^*$ is closed) there exists some purchase dataset $\ti{D} = (\tbs{q}^t, \bs{p}^t)_{t \leq T}$ which is rationalized by $U^*$ which satisfies $d(D, \ti{D}) = \ls( U^*; D )$. Let $R$ and $P$ be the revealed preference relations for $D$ and let $\ti{R}$ and $\ti{P}$ be the revealed preference relations for $\ti{D}$. Note that $d(D, \ti{D}) > 0$ as $D$ does not satisfy GARP and that $\alpha D + (1-\alpha) \ti{D}$ does not satisfy GARP for every $\alpha \in (0,1]$ and therefore there must exist some $\ti{t}$ and $\ti{s}$ so that $\tbs{q}^{\ti{t}} \ \ti{R} \ \tbs{q}^{\ti{s}}$ and $\bs{q}^{\ti{t}} \ P \ \bs{q}^{\ti{s}}$ (that is, a revealed preference relation which is strict in $D$ is weak in $\ti{D}$). For all $\alpha \in [0,1]$ define $\bs{q}_{\alpha}^t$ so that (i) $\bs{q}_{\alpha}^t = \alpha \bs{q}^{t} + (1-\alpha) \tbs{q}^{t}$ if $t = \ti{s}$ and (ii) $\bs{q}_{\alpha}^t = \tbs{q}^t$ if $t \neq \ti{s}$. Let $D_{\alpha} = (\bs{q}_{\alpha}^t, \bs{p}^t)_{t \leq T}$. Let $R_{\alpha}$ and $P_{\alpha}$ be the revealed preference relations for $D_{\alpha}$. Let $\alpha > 0$ be small enough so that $\bs{q}_{\alpha}^t \ R_{\alpha} \ \bs{q}_{\alpha}^{\ti{s}}$ implies $\tbs{q}^t \ \ti{R} \ \tbs{q}^{\ti{s}}$. It is easy to see that $D_{\alpha}$ satisfies $e$-GARP for all $e < 1$ and thus, by Proposition \ref{prop:AEI}, we have $A(D_{\alpha}) = 0$. As the Afriat index is essentially accurate (Proposition \ref{prop:essacc-indices}) this means $\inf_{U \in \mathcal{U}_{WB}} \ls( U; D_{\alpha} ) = 0$ and, as $d( D, D_{\alpha} ) < d( D, \ti{D} )$ we attain a contradiction and so $\argmin_{U \in \mathcal{U}_{WB}} \ls(U;D)$ is empty. \vspace{7pt}
	
	\noindent \emph{Afriat generic empty argmin.} Let $\mathcal{D}$ be the collection of all $T$ observation purchase datasets which do not satisfy GARP. Let $\mathcal{D}' \subseteq \mathcal{D}$ be the collection of all $T$ observation purchase dataset $D = (\bs{q}^t, \bs{p}^t)_{t \leq T}$ which (i) do not satisfy GARP and (ii) satisfy $\bs{p}^t \cdot \bs{q}^s / (\bs{p}^t\cdot \bs{q}^t) \neq \bs{p}^s \cdot \bs{q}^r / ( \bs{p}^s \cdot \bs{q}^s )$ for all $t,s,r$ such that $t \neq s$ and $s \neq r$ (note $t = r$ is allowed). It is easy to see that (a) $\argmin_{U \in \mathcal{U}_{WB}} A(U;D)$ is empty for all $D \in \mathcal{D}'$ (use Proposition \ref{prop:AEI} for this), (b) $\mathcal{D}'$ is open, and (c) $\mathcal{D}'$ is dense in $\mathcal{D}$. It follows that the datasets $D \in \mathcal{D}$ for which $\argmin_{U \in \mathcal{U}_{WB}} A(U;D)$ is non-empty is rare.
\end{proof}

\begin{proof}[Proof of Proposition \ref{prop:uniform-pref}.]
	It is obvious that $\mathcal{R}_Q$ is reflexive. To see that it is transitive, suppose that $\bs{q} \ \mathcal{R}_Q \ \bs{q}'$ and $\bs{q}' \ \mathcal{R}_Q \ \bs{q}''$. This means there exists $\ti{U}$ and $\ti{U}'$ in $\mathcal{U}$ so that $U( \bs{q} ) \geq U(\bs{q}')$ for all $U \in \mathcal{U}$ satisfying $Q(U;D) \leq Q(\ti{U};D)$ and $U(\bs{q}') \geq U(\bs{q}'')$ for all $U \in \mathcal{U}$ satisfying $Q(U;D) \leq Q(\ti{U}';D)$. Let $\ti{U}'' \in \mathcal{U}$ be any utility function which satisfies $Q(\ti{U}'';D) \leq Q(\ti{U};D)$ and $Q(\ti{U}'';D) \leq Q(\ti{U}';D)$ (in particular, we can let $\ti{U}''$ be the best-fitting utility function in $\{ \ti{U}, \ti{U}' \}$). Now, $U(\bs{q}) \geq U(\bs{q}') \geq U(\bs{q}'')$ for all $U$ satisfying $Q(U;D) \leq Q(\ti{U}'';D)$ and therefore $\bs{q} \ \mathcal{R}_Q \ \bs{q}''$ and so $\mathcal{R}_Q$ is transitive. 

	Next, suppose that $U(\bs{q}) \geq U(\tbs{q})$ for all $U \in \mathcal{U}$. Let $\ti{U}$ be any element of $\mathcal{U}$. Clearly, for any $U$ which satisfies $Q(U;D) \leq Q(\ti{U};D)$ we have $U(\bs{q}) \geq U(\tbs{q})$ and so item 1 is proved. Next, suppose that $U(\bs{q}) \geq U(\tbs{q})$ implies $U(\bs{q}') \geq U(\tbs{q}')$ for all $U \in \mathcal{U}$ and suppose that $\bs{q} \ \mathcal{R}_{Q} \ \tbs{q}$. This means that there is some $\ti{U}$ so that $U(\bs{q} \geq U(\tbs{q}))$ for all $U$ such that $Q(U;D) \leq Q(\ti{U};D)$. It follows that $U(\bs{q}') \geq U(\tbs{q}')$ for all $U$ such that $Q(U;D) \leq Q(\ti{U};D)$ and so $\bs{q}' \ \mathcal{R}_{Q} \ \tbs{q}'$.
\end{proof}

\subsection{Proof of Proposition \ref{prop:afriat-welfare}} \label{sec:afriat-appendix}

For convenience let $a(U;D) = \sup\{ e \in [0,1]: D \text{ is } e\text{-rationalized by } U \}$. Of course, $A(U;D) = 1 - a(U;D)$. Further, from the definition of $\mathcal{R}_A$ we have $\tbs{q} \ \mathcal{R}_A \ \tbs{q}'$ if and only if there exists a well-behaved $\ti{U}$ so that $U(\tbs{q}) \geq U(\tbs{q}')$ for all well-behaved utility functions $U$ satisfying $a(U;D) \geq a(\ti{U};D)$.
\begin{lemma} \label{lemma:e-garp-to-artificial-data}
	Let $e \in [0,1]$ and let $D = (\bs{q}^t, \bs{p}^t)_{t \leq T}$ be a purchase dataset which satisfies $e$-GARP. Let $R_e$ be the $e$-direct revealed preference relation for $D$. For each $t$ let $B_e^t = \{ \bs{q} \in \mathbb{R}_+^L: e \bs{p}^t \cdot \bs{q}^t \geq \bs{p}^t \cdot \bs{q} \} \cup \{\bs{q}^t \}$ and let $\ti{D}_e$ be the choice dataset $\ti{D}_e = ( \bs{q}^t, B_e^t )$ with direct revealed preference relation $\ti{R}_e$. The data $\ti{D}_e$ satisfies cyclical consistency and $\tran( \geq \cup R_e ) = \tran( \geq \cup \ti{R}_e )$.
\end{lemma}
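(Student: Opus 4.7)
The plan is to handle the two assertions separately; I will establish the transitive-closure equality first and then use the analysis to prove cyclical consistency.

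For the transitive-closure equality, I would show the stronger fact that $\geq \cup\, R_e$ and $\geq \cup\, \ti{R}_e$ coincide as relations on $\mathbb{R}_+^L$, which immediately yields equality of their transitive closures. The nontrivial inclusion $\ti{R}_e \subseteq\, \geq \cup\, R_e$ follows by case-splitting on the witness $\tbs{q} \in B_e^t$ for a pair with $\bs{q}^t \ \ti{R}_e \ \bs{q}$: if $\tbs{q} = \bs{q}^t$ then $\bs{q}^t \geq \bs{q}$; otherwise $\tbs{q}$ lies in $B(\bs{p}^t, e \bs{p}^t \cdot \bs{q}^t)$, and the chain $e\bs{p}^t\cdot\bs{q}^t \geq \bs{p}^t\cdot\tbs{q} \geq \bs{p}^t\cdot\bs{q}$ (using $\bs{p}^t \gg \bs{0}$) gives $\bs{q}^t \ R_e \ \bs{q}$. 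The reverse inclusion $R_e \subseteq \ti{R}_e$ is immediate, since any $\bs{q}$ with $\bs{q}^t \ R_e \ \bs{q}$ itself lies in $B_e^t$ and serves as its own witness.

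For cyclical consistency of $\ti{D}_e$, I would argue by contradiction: assume there is a cycle $\bs{q}^{t_1} \ \ti{R}_e \ \bs{q}^{t_2} \cdots \ \ti{R}_e \ \bs{q}^{t_K} \ \ti{P}_e \ \bs{q}^{t_1}$. Using the refinement from the first part, each intermediate $\ti{R}_e$-edge can be labelled as either $R_e$ or $\geq$, and the closing $\ti{P}_e$-edge as either $P_e$ or $>$. I would then shorten the cycle by absorbing every $\geq$- or $>$-edge into its cycle-predecessor via the following composition rules, each verified by direct inequality arithmetic using $\bs{p}^t \gg \bs{0}$: $R_e$ followed by $\geq$ yields $R_e$; $P_e$ followed by $\geq$ yields $P_e$; $\geq$ followed by $\geq$ yields $\geq$; $R_e$ followed by $>$ yields $P_e$; $\geq$ followed by $>$ yields $>$; and $>$ followed by $\geq$ yields $>$ (this last one is the wrap-around case, used when the cycle begins with a $\geq$-edge and the closing edge is $>$). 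Each merge shortens the cycle by one edge, so the process terminates either with a pure $R_e$-cycle closed by $P_e$---a direct $e$-GARP violation contradicting the hypothesis---or with a length-one self-loop, which must be $\bs{q}^{t} \ P_e \ \bs{q}^{t}$ (forcing $e > 1$) or $\bs{q}^{t} > \bs{q}^{t}$ (impossible); both contradict $e \in [0,1]$.

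The main subtlety is verifying that at every stage at least one composition rule is applicable. Inspecting the possible predecessor of any $\geq$- or $>$-edge (intermediate edges are of type $R_e$ or $\geq$; the closing edge is of type $P_e$ or $>$) shows that one of the six rules above always fires, making the termination dichotomy exhaustive. Notably, the argument depends only on the definitions of the four direct preference relations and the hypothesis $e \in [0,1]$; no appeal to Afriat-type existence results or to Lemma \ref{lemma:NOQ} is needed.
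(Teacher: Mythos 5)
Your proposal is correct and follows essentially the same route as the paper: both identify $\ti{R}_e$ with $\geq \cup R_e$ (which the paper asserts and you verify via the witness case-split), and both prove cyclical consistency by contracting an offending cycle, absorbing each dominance link into its predecessor until only a genuine $R_e$-chain closed by $P_e$ (an $e$-GARP violation) or a degenerate strict self-loop remains. Your six-rule composition table and explicit termination dichotomy simply spell out what the paper compresses into its ``without loss of generality $\bs{q}^{t_k} \ngeq \bs{q}^{t_{k+1}}$'' step.
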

\begin{proof}
	Let $P_e$ be the $e$ strict direct revealed preference relation for $D$ and let $\ti{P}_e$ be the strict direct revealed preference relation for $\ti{D}_e$. Suppose $\ti{D}_e$ does not satisfy cyclical consistency. This means there exists $t_1, t_2, \ldots, t_K$ so that $\bs{q}^{t_1} \ \ti{R}_e \ \bs{q}^{t_2} \ \ti{R}_e \ldots \ \ti{R}_e \bs{q}^{t_K}$ and $\bs{q}^{t_K} \ \ti{P}_e \ \bs{q}^{t_1}$. Note that if $\bs{q}^{t_k} \geq \bs{q}^{t_{k+1}}$ for some $k$ then $\bs{q}^{t_{k-1}} \ \ti{R}_e \ \bs{q}^{t_{k+1}}$ and so without loss of generality we may assume that $\bs{q}^{t_{k}} \ngeq \bs{q}^{t_{k+1}}$ for all $k$. But this implies $\bs{q}^{t_1} \ R_e \ \bs{q}^{t_2} \ R_e \ldots \ R_e \ \bs{q}^{t_K}$ and $\bs{q}^{t_K} \ P_e \ \bs{q}^{t_1}$ and so $D$ does not satisfy $e$-GARP. We conclude that if $D$ satisfies $e$-GARP then $\ti{D}_e$ satisfies cyclical consistency. The equality $\tran( \geq \cup R_e ) = \tran( \geq \cup \ti{R}_e )$ follows from the fact that $\ti{R}_e = (\geq \cup R_e)$. 
\end{proof}

\begin{lemma}  \label{lemma:afriat-e-garp1}
	Let $e \in [0,1]$ and let $D = (\bs{q}^t, \bs{p}^t)_{t \leq T}$ be a purchase dataset. There exists a well-behaved utility function $U$ which satisfies $a(U;D) \geq e$ if and only if $D$ satisfies $e$-GARP. 
\end{lemma}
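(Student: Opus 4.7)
My plan is to prove the two directions separately, leveraging the two prior lemmas for the ``$e$-GARP $\Rightarrow$ existence of $U$'' direction and a direct contradiction argument (via the strictly increasing property of well-behaved utilities) for the converse.

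First, I would prove the forward direction. Suppose $D$ satisfies $e$-GARP. Form the auxiliary choice dataset $\ti{D}_e = (\bs{q}^t, B_e^t)_{t \leq T}$ as in Lemma \ref{lemma:e-garp-to-artificial-data}. That lemma tells us $\ti{D}_e$ satisfies cyclical consistency, so Lemma \ref{lemma:NOQ} yields a well-behaved $U$ that rationalizes $\ti{D}_e$, i.e., $U(\bs{q}^t) \geq U(\bs{q})$ for every $\bs{q} \in B_e^t$. Since $B(\bs{p}^t, e\bs{p}^t\cdot\bs{q}^t) \subseteq B_e^t$, this says exactly that $D$ is $e$-rationalized by $U$, giving $a(U;D) \geq e$.

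Next, I would prove the converse. Suppose there is a well-behaved $U$ with $a(U;D) \geq e$. A small preliminary step is needed: by definition of $a$, the dataset $D$ is $e'$-rationalized by $U$ for every $e' < e$, but I need the conclusion at $e$ itself. For any $\bs{q} \in B(\bs{p}^t, e\bs{p}^t\cdot\bs{q}^t)$, consider $\bs{q}_n = (1-1/n)\bs{q}$; then $\bs{p}^t \cdot \bs{q}_n < e\bs{p}^t \cdot \bs{q}^t$, so $\bs{q}_n$ sits inside a strictly smaller budget and thus $U(\bs{q}^t) \geq U(\bs{q}_n)$. Passing to the limit using continuity of $U$ gives $U(\bs{q}^t) \geq U(\bs{q})$. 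Hence $D$ is in fact $e$-rationalized by $U$.

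Now, for a contradiction, assume $D$ does not satisfy $e$-GARP, so there exist $t_1, \ldots, t_K$ with $\bs{q}^{t_1} R_e \bs{q}^{t_2} R_e \cdots R_e \bs{q}^{t_K}$ and $\bs{q}^{t_K} P_e \bs{q}^{t_1}$. From each $R_e$ link and $e$-rationalization we get $U(\bs{q}^{t_1}) \geq U(\bs{q}^{t_2}) \geq \cdots \geq U(\bs{q}^{t_K})$. For the strict link $\bs{q}^{t_K} P_e \bs{q}^{t_1}$, we have $e \bs{p}^{t_K}\cdot \bs{q}^{t_K} > \bs{p}^{t_K}\cdot \bs{q}^{t_1}$, so there exists $\bs{q}' > \bs{q}^{t_1}$ (obtained by perturbing $\bs{q}^{t_1}$ upward in each coordinate by a small enough amount) still satisfying $\bs{p}^{t_K}\cdot \bs{q}' \leq e\bs{p}^{t_K}\cdot \bs{q}^{t_K}$. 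Since $U$ is increasing, $U(\bs{q}') > U(\bs{q}^{t_1})$, and $e$-rationalization gives $U(\bs{q}^{t_K}) \geq U(\bs{q}') > U(\bs{q}^{t_1})$. Chaining with the weak inequalities yields $U(\bs{q}^{t_1}) > U(\bs{q}^{t_1})$, a contradiction. Thus $D$ must satisfy $e$-GARP.

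The main obstacle, I expect, is the subtle point that $a(U;D) \geq e$ does not immediately say $D$ is $e$-rationalized (only that it is $e'$-rationalized for $e' < e$); the continuity of $U$ is essential for closing that gap. Everything else is a direct application of Lemmas \ref{lemma:e-garp-to-artificial-data} and \ref{lemma:NOQ} together with the strictly-increasing property of well-behaved utilities.
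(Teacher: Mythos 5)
Your proof is correct and follows essentially the same route as the paper's: the forward direction via Lemma \ref{lemma:e-garp-to-artificial-data} and the NOQ theorem applied to $\ti{D}_e$, and the converse by chaining the utility inequalities induced by $R_e$ and $P_e$ links to derive a contradiction. Your preliminary step showing that $a(U;D)\geq e$ actually yields $e$-rationalization (via continuity of $U$ and the scaled bundles $(1-1/n)\bs{q}$) is a point the paper passes over with ``from the definition of $a(U;D)$ we see that\ldots'', so making it explicit is a small improvement rather than a deviation.
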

\begin{proof}
	Suppose $U$ is a well-behaved utility function which satisfies $a(U;D) \geq e$. From the definition of $a(U;D)$ we see that $U(\bs{q}^t) \geq U(\bs{q})$ for all $\bs{q} \in \rlp$ satisfying $e \bs{p}^t \cdot \bs{q}^t \geq \bs{p}^t \cdot \bs{q}$ and therefore $\bs{q}^t \ R_e \ (P_e) \ \bs{q}$ implies $U(\bs{q}^t) \geq (>) \ U(\bs{q})$. From this it is clear that $D$ satisfies $e$-GARP.
	
	For the ``if'' part of the proof let $\ti{D}_e$ denote the choice dataset in Lemma \ref{lemma:e-garp-to-artificial-data}. As $\ti{D}_e$ satisfies cyclical consistency (we know this from Lemma \ref{lemma:e-garp-to-artificial-data}) we may use the NOQ Theorem (Lemma \ref{lemma:NOQ}) to see that there exists a well-behaved utility function $U$ which rationalizes $\ti{D}_e$. This utility function satisfies $U(\bs{q}^t) \geq U(\bs{q})$ for all $\bs{q} \in \rlp$ satisfying $e \bs{p}^t \cdot \bs{q}^t \geq \bs{p}^t \cdot \bs{q}$ and therefore $a(U;D) \geq e$. 
\end{proof}

\begin{lemma} \label{lemma:afriat-e-garp2}
	Let $e \in [0,1]$ and let $D = (\bs{q}^t, \bs{p}^t)_{t \leq T}$ be a purchase dataset which satisfies $e$-GARP. For any two bundles $\tbs{q}$ and $\tbs{q}'$ we have $\tbs{q} \ \tran(\geq \cup R_e) \ \tbs{q}'$ if and only if $U( \tbs{q} ) \geq U(\tbs{q}')$ for all well-behaved $U$ satisfying $a(U;D) \geq e$. 
\end{lemma}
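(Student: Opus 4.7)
For the ``only if'' direction, suppose $\tbs{q}\,\tran(\geq\cup R_e)\,\tbs{q}'$ and fix a well-behaved $U$ with $a(U;D)\geq e$. Along any finite chain $\tbs{q}=\bs{a}_0\,(\geq\cup R_e)\,\bs{a}_1\,\ldots\,(\geq\cup R_e)\,\bs{a}_n=\tbs{q}'$, each $\geq$-link yields $U(\bs{a}_i)\geq U(\bs{a}_{i+1})$ by the increasingness of $U$, while each $R_e$-link (where $\bs{a}_i=\bs{q}^t$ for some $t$) yields the same inequality from $a(U;D)\geq e$, since $\bs{a}_{i+1}\in B_e^t$. Chaining these inequalities delivers $U(\tbs{q})\geq U(\tbs{q}')$.

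For the ``if'' direction I argue contrapositively: assume $\tbs{q}\,\cancel{\tran(\geq\cup R_e)}\,\tbs{q}'$ and construct a well-behaved $U$ with $a(U;D)\geq e$ and $U(\tbs{q})<U(\tbs{q}')$. The plan is to augment $\ti{D}_e$ (from Lemma \ref{lemma:e-garp-to-artificial-data}) by appending the single observation $(\tbs{q}',\{\tbs{q}',\hbs{q}\})$ where $\hbs{q}=\tbs{q}+\epsilon\bs{1}$ for a sufficiently small $\epsilon>0$, and then invoke the NOQ theorem (Lemma \ref{lemma:NOQ}). If the augmented choice dataset $\ti{D}_e^+$ satisfies cyclical consistency, Lemma \ref{lemma:NOQ} delivers a well-behaved $U$ rationalizing it; such a $U$ rationalizes $\ti{D}_e$ in particular, so Lemma \ref{lemma:afriat-e-garp1} yields $a(U;D)\geq e$; and rationalization of the new observation together with the monotonicity of $U$ gives $U(\tbs{q}')\geq U(\hbs{q})>U(\tbs{q})$, as required.

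The substantive obstacle is verifying that $\ti{D}_e^+$ satisfies cyclical consistency for all sufficiently small $\epsilon$. Because $\ti{D}_e$ is itself cyclically consistent by Lemma \ref{lemma:e-garp-to-artificial-data}, any violating cycle must pass through the new observation. Assuming the generic case $\tbs{q}'\notin\{\bs{q}^t\}_{t\leq T}$ (the coincident case is handled by an analogous argument), I would rotate the cycle to read $\tbs{q}'\,R\,\bs{x}_2\,R\,\ldots\,R\,\bs{x}_K\,R\,\tbs{q}'$ with at least one $P$-link and every $\bs{x}_i\in\{\bs{q}^t\}_{t\leq T}$. The interior links together with the final link $\bs{x}_K\,\ti{R}_e\,\tbs{q}'$ immediately give $\bs{x}_2\,\tran(\geq\cup R_e)\,\tbs{q}'$, while the first link must be witnessed either by $\tbs{q}'\geq\bs{x}_2$ or by $\hbs{q}\geq\bs{x}_2$. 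I then choose $\epsilon$ small enough that, on the finite set $\{\bs{q}^t\}_{t\leq T}$, the relations $\bs{q}^t\leq\hbs{q}$ and $\bs{q}^t\leq\tbs{q}$ coincide (and likewise for strict $<$). In the $\hbs{q}\geq\bs{x}_2$ branch this forces $\tbs{q}\geq\bs{x}_2$, hence $\tbs{q}\,\tran(\geq\cup R_e)\,\tbs{q}'$, contradicting our standing assumption. In the $\tbs{q}'\geq\bs{x}_2$ branch, wherever the strict link of the cycle sits---at the first position (whose sub-case $\hbs{q}>\bs{x}_2$ collapses to the previous branch), at an interior position, or at the last position---combining with $\tbs{q}'\geq\bs{x}_2$ via $\bs{x}_K\,\ti{R}_e\,\tbs{q}'$ turns the cycle into a strict $\ti{R}_e$-cycle among the original observations of $\ti{D}_e$, contradicting Lemma \ref{lemma:e-garp-to-artificial-data}. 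Either way we reach a contradiction, so $\ti{D}_e^+$ is cyclically consistent and the proof is complete.
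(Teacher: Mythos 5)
Your proof is correct and follows essentially the same route as the paper's: the forward direction by chaining the $\geq$ and $R_e$ links, and the converse by contraposition, appending to $\ti{D}_e$ an observation in which $\tbs{q}'$ is chosen over $\tbs{q}+\varepsilon\bs{1}$, verifying cyclical consistency for small $\varepsilon$, and invoking Lemma \ref{lemma:NOQ}. The only cosmetic quibble is that $a(U;D)\geq e$ for a $U$ rationalizing $\ti{D}_e$ follows directly from the definition of $B_e^t$ rather than from Lemma \ref{lemma:afriat-e-garp1} (which is an existence statement), but the underlying fact is exactly what you use.
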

\begin{proof}
	Note that if $U$ is a well-behaved utility function satisfying $a(U;D) \geq e$ then clearly $\bs{q}^t \ R_e \ \bs{q}$ implies $U(\bs{q}^t) \geq U( \bs{q} )$ and so the ``only if'' part of the proof is straightforward. So, suppose that it is not the case that $\tbs{q} \ \tran(\geq \cup R_e) \ \tbs{q}'$. Let $\ti{D}_e$ be the choice dataset defined in Lemma \ref{lemma:e-garp-to-artificial-data} and let $\ti{R}_e$ be the direct revealed preference relation for $\ti{D}_e$. From Lemma \ref{lemma:e-garp-to-artificial-data} we see that it is not the case that $\tbs{q} \ \tran(\geq \cup \ti{R}_e) \ \tbs{q}'$. For each $\varepsilon>0$ let $\ti{D}_e^{\varepsilon}$ be the $T+1$ observation choice dataset formed by appending the observation $( \tbs{q}', \tbs{q} + (\varepsilon, \varepsilon,\ldots,\varepsilon) )$ to $\ti{D}_e$. Let $\ti{R}_e^{\varepsilon}$ and $\ti{P}_e^{\varepsilon}$ be the direct and strict direct revealed preference relations for $\ti{D}_e^{\varepsilon}$. By selecting $\varepsilon > 0$ sufficiently small we may ensure that for all $t$ we have $\tbs{q}' \ \ti{R}_{e}^{\varepsilon} \ \bs{q}^t$ implies either $\tbs{q} \geq \bs{q}^t$ or $\tbs{q}' \geq \bs{q}^t$. We claim that $\ti{D}_{e}^{\varepsilon}$ satisfies cyclical consistency. For a contradiction suppose there is a $t_1, t_2, \ldots, t_K$ satisfying $\bs{q}^{t_1} \ \ti{R}_e^{\varepsilon} \ \bs{q}^{t_2} \ \ti{R}_{e}^{\varepsilon} \ \ldots \ti{R}_{e}^{\varepsilon} \ \bs{q}^{t_K}$ and $\bs{q}^{t_K} \ \ti{P}_e^{\varepsilon} \ \bs{q}^{t_1}$. As $\ti{D}_e$ satisfies cyclical consistency there must be some $k$ so that $\bs{q}^{t_k} = \tbs{q}'$ and $\tbs{q} \geq \bs{q}^{t_{k+1}}$. From this it is clear that $\tbs{q} \ \tran(\geq \cup \ti{R}_e) \ \tbs{q}'$ but we have already shown that this is not the case and so we have arrived at a contradiction. We conclude that $\ti{D}_e^{\varepsilon}$ satisfies cyclical consistency and so by the NOQ Theorem (Lemma \ref{lemma:NOQ}) there exists a well-behaved utility function $\ti{U}$ which rationalizes $\ti{D}_e^{\varepsilon}$. Clearly $\ti{U}(\tbs{q}') > \ti{U}(\tbs{q})$ and $\ti{U}(\bs{q}^t) \geq \ti{U}( \bs{q} )$ for all $\bs{q} \in \rlp$ satisfying $e \bs{p}^t \cdot \bs{q}^t \geq \bs{p}^t \cdot \bs{q}$ and so $a(U;D) \geq e$. Thus, we see it is not the case that $U( \tbs{q} ) \geq U(\tbs{q}')$ for all well-behaved $U$ satisfying $a(U;D) \geq e$ and the proof is complete.
\end{proof}

\begin{proof}[Proof of Proposition \ref{prop:afriat-welfare}.]
	Suppose $D$ satisfies $e^*$-GARP. From Lemma \ref{lemma:afriat-e-garp1} there exists a well-behaved $\ti{U}$ satisfying $a(U;D) = e^*$. From Lemma \ref{lemma:afriat-e-garp2} we see that $\tbs{q} \ \tran(\geq \cup R_{e^*}) \ \tbs{q}'$ if and only if $U(\tbs{q}) \geq U(\tbs{q}')$ for all well-behaved $U$ satisfying $A(U;D) \leq A(\ti{U};D)$ and so indeed we have $\tbs{q} \ \tran(\geq \cup R_{e^*}) \ \tbs{q}'$ if and only if $\tbs{q} \ \mathcal{R}_A \ \tbs{q}'$. Next, let us suppose that $D$ does not satisfy $e^*$-GARP. 
	
	Suppose that $\tbs{q} \ \tran( \geq \cup \ P_{e^*} ) \ \tbs{q}'$. Let $\ti{e} < e^*$ be some number chosen large enough so that $\tbs{q} \ \tran(\geq \cup R_{\ti{e}} ) \ \tbs{q}'$. As $D$ satisfies $\ti{e}$-GARP Lemma \ref{lemma:afriat-e-garp1} shows that there exists a well-behaved $\ti{U}$ satisfying $a(\ti{U};D) \geq \ti{e}$. Lemma \ref{lemma:afriat-e-garp2} shows that for any well-behaved utility function $U$ satisfying $a(U;D) \geq a(\ti{U};D)$ we have $U(\tbs{q}) \geq U(\tbs{q}')$ and so indeed $\tbs{q} \ \mathcal{R}_A \ \tbs{q}'$. 
	
	Next, suppose that it is not the case that $\tbs{q} \ \tran( \geq \cup \ P_{e^*} ) \ \tbs{q}'$, let $\ti{U}$ be any well-behaved utility function, and let $\ti{e} = a(\ti{U};D)$. Because $D$ does not satisfy $e^*$-GARP Lemma \ref{lemma:afriat-e-garp1} shows that $e^* > \ti{e}$. Because it is not the case that $\tbs{q} \ \tran( \geq \cup \ P_{e^*} ) \ \tbs{q}'$ it is also not the case that $\tbs{q} \ \tran( \geq \cup \ R_{\ti{e}} ) \ \tbs{q}'$ and therefore Lemma \ref{lemma:afriat-e-garp2} guarantees that there exists a well-behaved $U$ so that $a(U;D) \geq \ti{e}$ and $U(\tbs{q}') > U(\tbs{q})$. As $\ti{U}$ was taken to be any well-behaved utility function we see that indeed $\tbs{q} \ \cancel{\mathcal{R}_A} \ \tbs{q}'$ as required.
\end{proof}

\subsection{Exact Afriat Index Computation} \label{sec:e^*-calc}

In Proposition \ref{prop:afriat-welfare} we defined $e^*$ as
\begin{equation*}
	e^* = \sup \{ e \in [0,1]: D \text{ satisfies } e\text{-GARP} \}
\end{equation*}
and so $A(D) = 1-e^*$. As noted, it is common in empirical applications to calculate an approximation of $e^*$ by performing a binary search over the interval $[0,1]$. Here we show how $e^*$ can be calculated exactly. 

Recall that $\bs{q}^t \ P_e \ \bs{q}^s$ means $e \bs{p}^t \cdot \bs{q}^t > \bs{p}^t \cdot \bs{q}^s$. We say that $D$ is \emph{$e$-acyclic} if, for all $t_1, t_2, \ldots, t_K$, it is not the case that $\bs{q}^{t_1} \ P_e \ \bs{q}^{t_2} \ P_e \ \bs{q}^{t_3} \ P_e \ \ldots \ P_e \ \bs{q}^{t_K} \ P_e \ \bs{q}^{t_1}$. Define a set $G \subseteq [0,1]$ by
\begin{equation*}
	G = \left\{ \dfrac{ \bs{p}^t \cdot \bs{q}^s }{ \bs{p}^t \cdot \bs{q}^t } \in [0,1]: t,s \in \{1,2,\ldots, T\} \right\}
\end{equation*}
Clearly, $G$ has no more than $T^2$ elements. Note that a dataset $D$ which satisfies $e$-GARP is $e$-acyclic but the reverse implication need not hold. However, the next result shows that the largest value of $e \in G$ at which $D$ is $e$-acyclic coincides with $e^*$. 
\begin{proposition} \label{prop:e^*-formula}
	For any purchase dataset $D = (\bs{q}^t, \bs{p}^t)_{t \leq T}$ we have
	\begin{equation} \label{eq:e^*-formula}
		e^* = \max\{ e \in G: D \text{ is $e$-acyclic} \}
	\end{equation}
\end{proposition}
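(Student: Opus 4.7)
The plan is to prove $e^* = e^{**}$, where $e^{**}$ denotes the right-hand side of \eqref{eq:e^*-formula}, by establishing both inequalities. As a preliminary step, I would observe that the maximum defining $e^{**}$ is attained: $G$ is finite and non-empty (it contains $1$, since $\bs{p}^t \cdot \bs{q}^t / (\bs{p}^t \cdot \bs{q}^t) = 1$), and at its smallest element $g_{\min}$ the relation $P_{g_{\min}}$ is empty (no ratio is strictly smaller than $g_{\min}$), so $D$ is vacuously $g_{\min}$-acyclic.

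For $e^* \geq e^{**}$, I would set $e' = e^{**}$ and show $D$ satisfies $e$-GARP for every $e < e'$. Suppose a GARP violation $\bs{q}^{t_1} \ R_e \ \bs{q}^{t_2} \ R_e \cdots R_e \ \bs{q}^{t_K} \ P_e \ \bs{q}^{t_1}$ exists for some $e < e'$. Each $R_e$-link corresponds to a ratio at most $e$, and the closing $P_e$-link corresponds to a ratio strictly less than $e$; since $e < e'$, every such ratio is strictly below $e'$, upgrading every link to $P_{e'}$. This produces a $P_{e'}$-cycle, contradicting $e'$-acyclicity. Hence $e^* = \sup\{e : D\text{ satisfies }e\text{-GARP}\} \geq e'$.

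For $e^* \leq e^{**}$, I would argue by contradiction. Assuming $e^* > e^{**}$, there is $e \in (e^{**}, e^*)$ such that $e$-GARP holds (the set of $e$'s satisfying $e$-GARP is downward closed because both $R_e$ and $P_e$ are monotone in $e$), and hence $e$-acyclicity holds. Let $g^* = \min\{g \in G : g \geq e\}$, which is well-defined because $1 \in G$. The key observation is that no ratio can lie in $[e, g^*)$: any such ratio would itself belong to $G$ and contradict the minimality of $g^*$. Consequently $P_{g^*} = P_e$, so $D$ is $g^*$-acyclic, and since $g^* \geq e > e^{**}$, this contradicts the maximality of $e^{**}$.

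The main obstacle is the second inequality, because $e$-GARP and $e$-acyclicity are genuinely different conditions: the former forbids cycles built from any mix of $R_e$ and $P_e$ (with at least one $P_e$), whereas the latter forbids only cycles of pure $P_e$-links. The trick is to reduce the former to the latter via the upgrading argument in Part 1, and to exploit the finite structure of $G$ in Part 2 by moving from an interior witness $e$ to the nearest grid point above it without changing the relation $P_e$. All remaining work is routine algebraic manipulation of the ratios defining $G$.
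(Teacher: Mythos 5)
Your proposal is correct and follows essentially the same route as the paper's proof: both directions rest on the same two facts, namely that $P_e$ is unchanged as $e$ moves between consecutive points of the grid $G$, and that every link of an $e$-GARP violation becomes a strict $P_{e'}$ link once $e$ drops strictly below $e'$, so the violation collapses to a pure $P_{e'}$-cycle. If anything, your upgrading argument for $e^* \geq e^{**}$ is stated a little more carefully than the paper's corresponding step, which works with a single violating cycle of a particular form, but the underlying idea is identical.
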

\begin{proof}
	Let $\ti{e} = \max\{ e \in G: D \text{ is } e\text{-acyclic} \}$ and let $\ti{e}' = \sup\{ e \in [0,1]: D \text{ is } e\text{-acyclic} \}$. We first show that $\ti{e} = \ti{e}'$. To this end, let us enumerate the elements of $G$ as $e_1 > e_2 > \ldots > e_N$ and let $e_{N+1} = 0$. Let $\mathcal{M} = \{ ( e_{k+1}, e_{k}] \subseteq [0,1]: k \in \{1,2,\ldots, N\} \}$. From the definition of $P_e$ we see that, for any $M \in \mathcal{M}$, we have $P_{e} = P_{e'}$ for all $e, e' \in M$. It follows that the supremum in the definition of $\ti{e}'$ is obtained at some point in $G$ and thus  $\ti{e} = \ti{e}'$.
	
	As every dataset which satisfies $e$-GARP is $e$-acyclic we see that $e^* \leq \ti{e}'$ and as we have just seen that $\ti{e} = \ti{e}'$ we conclude that $e^* \leq \ti{e}$. So, to complete the proof we show that $e^* \geq \ti{e}$.
	
	Clearly if $D$ satisfies $\ti{e}$-GARP then $e^* \geq \ti{e}$ so let us suppose that $D$ does not satisfy $\ti{e}$-GARP. Thus, there must be some $t_1, t_2, \ldots, t_K$ so that $\bs{q}^{t_1} \ P_{\ti{e}} \ \bs{q}^{t_2} \ P_{\ti{e}} \ \bs{q}^{t_3} \ P_{\ti{e}} \ \ldots \ P_{\ti{e}} \ \bs{q}^{t_K}$, $\bs{q}^{t_K} \ R_{\ti{e}} \ \bs{q}^{t_1}$. As $D$ is $\ti{e}$-acyclic it must be that $\bs{q}^{t_K} \ \cancel{P_{\ti{e}}} \ \bs{q}^{t_1}$ or, in other words, $\ti{e} \bs{p}^{t_K} \cdot \bs{q}^{t_K} = \bs{p}^{t_K} \cdot \bs{q}^{t_1}$. Therefore, for any $e < \ti{e}$ we have $\bs{q}^{t_K} \ \cancel{R_e} \ \bs{q}^{t_1}$. From this it follows that $D$ satisfies $e$-GARP for any $e < \ti{e}$ and therefore $e^* = \ti{e}$.
\end{proof}
From Proposition \ref{prop:e^*-formula} we see that we can use $\max\{ e \in G: D \text{ is } e\text{-acyclic} \}$ as a formula for calculating $e^*$. Thus, $e^*$ can be computed by determining if $D$ is $e$-acyclic or not (this can be accomplished with Warshall's algorithm) for each $e \in G$ and then taking the largest value of $e$ for which $D$ is $e$-acyclic. This approach is computationally feasible because $G$ has at most $T^2$ elements. In general, one does not have to work out whether $D$ is $e$-acyclic for each $e \in G$. One could start from the largest element in $G$ and work their way down until a value of $e$ was found for which $D$ is $e$-acylic.

\bibliographystyle{myplainnat}
\bibliography{EconReferences}

@article{chambers25,
	author = {Chambers, Christopher P. and Echenique, Federico},
	title = {Empirical welfare economics},
	journal = {Theoretical Economics},
	volume = {20},
	number = {3},
	pages = {911-939},
	keywords = {Revealed preference theory, Afriat's theorem, D61},
	doi = {https://doi.org/10.3982/TE5845},
	url = {https://onlinelibrary.wiley.com/doi/abs/10.3982/TE5845},
	eprint = {https://onlinelibrary.wiley.com/doi/pdf/10.3982/TE5845},
	abstract = {Welfare economics relies on access to agents' utility functions: we revisit classical questions in welfare economics, assuming access to data on agents' past choices instead of their utilities. Our main result considers the existence of utilities that render a given allocation Pareto optimal. We show that a candidate allocation is efficient for some utilities consistent with the choice data if and only if it is efficient for an incomplete relation derived from the revealed preference relations and convexity. Similar ideas are used to make counterfactual choices for a single consumer, policy comparisons by the Kaldor criterion, and offer bounds on the degree of inefficiency in a Pareto suboptimal allocation.},
	year = {2025}
}

@article{hu25,
	author = {Gaoji Hu and Jingtao Li and John Quah and Rui Tang},
	title = {A Theory of Revealed Indirect Preference},
	year = {2025},
	journal = {Theoretical Economics},
	volume = {Forthcoming}
}

@article{choi-kariv14,
	Author = {Choi, Syngjoo and Kariv, Shachar and Müller, Wieland and Silverman, Dan},
	Title = {Who Is (More) Rational?},
	Journal = {American Economic Review},
	Volume = {104},
	Number = {6},
	Year = {2014},
	Month = {June},
	Pages = {1518–50},
	DOI = {10.1257/aer.104.6.1518},
	URL = {https://www.aeaweb.org/articles?id=10.1257/aer.104.6.1518}}

@article{deClippel23,
	author = {de Clippel, Geoffroy and Rozen, Kareen},
	title = {Relaxed Optimization: How Close Is a Consumer to Satisfying First-Order Conditions?},
	journal = {The Review of Economics and Statistics},
	volume = {105},
	number = {4},
	pages = {883-898},
	year = {2023},
	month = {07},
	abstract = {We propose relaxing the first-order conditions in optimization to approximate rational consumer choice. We assess the magnitude of departures with a new, axiomatically founded measure that admits multiple interpretations. Standard inequality tests of rationality for any given reference class of preferences can be conveniently repurposed to measure goodness-of-fit with that class. Another advantage of our approach is that it is applicable in any context where the first-order approach is meaningful (e.g., convex budget sets arising from progressive taxation). We apply these ideas to shed new light on existing portfolio-choice data.},
	issn = {0034-6535},
	doi = {10.1162/rest_a_01088},
	url = {https://doi.org/10.1162/rest_a_01088},
	eprint = {https://direct.mit.edu/rest/article-pdf/105/4/883/2142651/rest_a_01088.pdf},
}

@article{dziewulksi20,
	title = {Just-noticeable difference as a behavioural foundation of the critical cost-efficiency index},
	journal = {Journal of Economic Theory},
	volume = {188},
	pages = {105071},
	year = {2020},
	issn = {0022-0531},
	doi = {https://doi.org/10.1016/j.jet.2020.105071},
	url = {https://www.sciencedirect.com/science/article/pii/S0022053120300673},
	author = {Pawel Dziewulski},
	keywords = {Utility maximisation, Generalised axiom of revealed preference, Critical cost-efficiency index, Interval order, Just-noticeable difference},
	abstract = {Despite its ad hoc nature and lack of an appealing economic interpretation, the critical cost-efficiency index (or CCEI) proposed in Afriat (1973) is one of the most widespread measures of departures from rationality. In this paper, we provide a behavioural foundation for this index by showing that it is equivalent to a notion of the just-noticeable difference — a measure of dissimilarity between alternatives that is sufficient for the agent to tell them apart.}
}

@article{masatlioglu12,
	Author = {Masatlioglu, Yusufcan and Nakajima, Daisuke and Ozbay, Erkut Y.},
	Title = {Revealed Attention},
	Journal = {American Economic Review},
	Volume = {102},
	Number = {5},
	Year = {2012},
	Month = {May},
	Pages = {2183–2205},
	DOI = {10.1257/aer.102.5.2183},
	URL = {https://www.aeaweb.org/articles?id=10.1257/aer.102.5.2183}}

@article{blundell-browning-crawford03,
	author = {Blundell, Richard W. and Browning, Martin and Crawford, Ian A.},
	title = {Nonparametric Engel Curves and Revealed Preference},
	journal = {Econometrica},
	volume = {71},
	number = {1},
	pages = {205-240},
	keywords = {consumer demands, nonparametric regression, revealed preference},
	doi = {https://doi.org/10.1111/1468-0262.00394},
	url = {https://onlinelibrary.wiley.com/doi/abs/10.1111/1468-0262.00394},
	eprint = {https://onlinelibrary.wiley.com/doi/pdf/10.1111/1468-0262.00394},
	abstract = {This paper applies revealed preference theory to the nonparametric statistical analysis of consumer demand. Knowledge of expansion paths is shown to improve the power of nonparametric tests of revealed preference. The tightest bounds on indifference surfaces and welfare measures are derived using an algorithm for which revealed preference conditions are shown to guarantee convergence. Nonparametric Engel curves are used to estimate expansion paths and provide a stochastic structure within which to examine the consistency of household level data and revealed preference theory. An application is made to a long time series of repeated cross–sections from the Family Expenditure Survey for Britain. The consistency of these data with revealed preference theory is examined. For periods of consistency with revealed preference, tight bounds are placed on true cost of living indices.},
	year = {2003}
}

@article{deb-kitamura-quah-stoye22,
	author = {Deb, Rahul and Kitamura, Yuichi and Quah, John K H and Stoye, J{\"o}rg},
	title = "{Revealed Price Preference: Theory and Empirical Analysis}",
	journal = {The Review of Economic Studies},
	volume = {90},
	number = {2},
	pages = {707-743},
	year = {2022},
	month = {08},
	abstract = "{To determine the welfare implications of price changes in demand data, we introduce a revealed preference relation over prices. We show that the absence of cycles in this relation characterizes a consumer who trades off the utility of consumption against the disutility of expenditure. Our model can be applied whenever a consumerâ€™s demand over a strict subset of all available goods is being analysed; it can also be extended to settings with discrete goods and non-linear prices. To illustrate its use, we apply our model to a single-agent data set and to a data set with repeated cross-sections. We develop a novel test of linear hypotheses on partially identified parameters to estimate the proportion of the population who are revealed better off due to a price change in the latter application. This new technique can be used for non-parametric counterfactual analysis more broadly.}",
	issn = {0034-6527},
	doi = {10.1093/restud/rdac041},
	url = {https://doi.org/10.1093/restud/rdac041},
	eprint = {https://academic.oup.com/restud/article-pdf/90/2/707/49435126/rdac041.pdf},
}

@article{kitamura-stoye18,
	author = {Kitamura, Yuichi and Stoye, J{\"o}rg},
	title = {Nonparametric Analysis of Random Utility Models},
	journal = {Econometrica},
	volume = {86},
	number = {6},
	pages = {1883-1909},
	keywords = {Stochastic rationality},
	doi = {https://doi.org/10.3982/ECTA14478},
	url = {https://onlinelibrary.wiley.com/doi/abs/10.3982/ECTA14478},
	eprint = {https://onlinelibrary.wiley.com/doi/pdf/10.3982/ECTA14478},
	abstract = {This paper develops and implements a nonparametric test of random utility models. The motivating application is to test the null hypothesis that a sample of cross-sectional demand distributions was generated by a population of rational consumers. We test a necessary and sufficient condition for this that does not restrict unobserved heterogeneity or the number of goods. We also propose and implement a control function approach to account for endogenous expenditure. An econometric result of independent interest is a test for linear inequality constraints when these are represented as the vertices of a polyhedral cone rather than its faces. An empirical application to the U.K. Household Expenditure Survey illustrates computational feasibility of the method in demand problems with five goods.},
	year = {2018}
}

@article{cherchye-demuynck-derock19,
	title = {Bounding counterfactual demand with unobserved heterogeneity and endogenous expenditures},
	journal = {Journal of Econometrics},
	volume = {211},
	number = {2},
	pages = {483-506},
	year = {2019},
	issn = {0304-4076},
	doi = {https://doi.org/10.1016/j.jeconom.2019.03.002},
	url = {https://www.sciencedirect.com/science/article/pii/S0304407619300491},
	author = {Laurens Cherchye and Thomas Demuynck and Bram {De Rock}},
	keywords = {Unobserved heterogeneity, WARP, Endogenous expenditures, Counterfactual demand},
	abstract = {We propose a method to predict rational counterfactual demand responses from repeated cross-sections. We derive bounds on the distribution of counterfactual demands that are consistent with the Weak Axiom of Revealed Preferences, without putting any restriction on the preference heterogeneity across consumers. In contrast to existing methods, we also allow for endogeneity of total expenditures. In addition, the method can readily incorporate restrictions on the income elasticities of the consumption goods, which further enhances its identifying power (i.e. tighter bounds). We illustrate our approach through an application to data drawn from the U.S. Consumer Expenditure Survey.}
}

@article{adams23,
	ISSN = {19457669, 19457685},
	URL = {https://www.jstor.org/stable/26888566},
	abstract = {Revealed preference restrictions are increasingly used to predict demand behavior at new budgets of interest and as shape restrictions in nonparametric estimation exercises. However, the restrictions imposed are not sufficient for rationality when predictions are made at multiple budgets. I highlight the non-convexities in the set of predictions that arise when making multiple predictions. I develop a mixed integer programming characterization of the problem that can be used to impose rationality on multiple predictions. The approach is applied to the UK Family Expenditure Survey to recover rational demand predictions with substantially reduced computational resources compared to known alternatives.},
	author = {Abi Adams},
	journal = {American Economic Journal: Microeconomics},
	number = {1},
	pages = {pp. 42--74},
	publisher = {American Economic Association},
	title = {Mutually Consistent Revealed Preference Demand Predictions},
	urldate = {2023-08-29},
	volume = {12},
	year = {2020}
}

@article{chambers-echenique-lambert21,
	author = {Chambers, Christopher P. and Echenique, Federico and Lambert, Nicolas S.},
	title = {Recovering Preferences From Finite Data},
	journal = {Econometrica},
	volume = {89},
	number = {4},
	pages = {1633-1664},
	keywords = {Revealed preference, choice data, consistency},
	doi = {https://doi.org/10.3982/ECTA17845},
	url = {https://onlinelibrary.wiley.com/doi/abs/10.3982/ECTA17845},
	eprint = {https://onlinelibrary.wiley.com/doi/pdf/10.3982/ECTA17845},
	abstract = {We study preferences estimated from finite choice experiments and provide sufficient conditions for convergence to a unique underlying “true” preference. Our conditions are weak and, therefore, valid in a wide range of economic environments. We develop applications to expected utility theory, choice over consumption bundles, and menu choice. Our framework unifies the revealed preference tradition with models that allow for errors.},
	year = {2021}
}

@article{diewert73,
	ISSN = {00346527, 1467937X},
	URL = {http://www.jstor.org/stable/2296461},
	author = {W. E. Diewert},
	journal = {The Review of Economic Studies},
	number = {3},
	pages = {419--425},
	publisher = {[Oxford University Press, Review of Economic Studies, Ltd.]},
	title = {Afriat and Revealed Preference Theory},
	urldate = {2023-08-28},
	volume = {40},
	year = {1973}
}

@article{andreoni-miller02,
	ISSN = {00129682, 14680262},
	URL = {http://www.jstor.org/stable/2692289},
	author = {James Andreoni and John Miller},
	journal = {Econometrica},
	number = {2},
	pages = {737--753},
	publisher = {[Wiley, Econometric Society]},
	title = {Giving According to GARP: An Experimental Test of the Consistency of Preferences for Altruism},
	urldate = {2023-08-23},
	volume = {70},
	year = {2002}
}

@article{kubler-selden-wei14,
	Author = {Kubler, Felix and Selden, Larry and Wei, Xiao},
	Title = {Asset Demand Based Tests of Expected Utility Maximization},
	Journal = {American Economic Review},
	Volume = {104},
	Number = {11},
	Year = {2014},
	Month = {November},
	Pages = {3459-80},
	DOI = {10.1257/aer.104.11.3459},
	URL = {https://www.aeaweb.org/articles?id=10.1257/aer.104.11.3459}
}

@article{varian90,
	title = {Goodness-of-fit in optimizing models},
	journal = {Journal of Econometrics},
	volume = {46},
	number = {1},
	pages = {125-140},
	year = {1990},
	issn = {0304-4076},
	doi = {https://doi.org/10.1016/0304-4076(90)90051-T},
	url = {https://www.sciencedirect.com/science/article/pii/030440769090051T},
	author = {Hal R. Varian},
	abstract = {Conventional econometric tests of optimizing models typically involve embedding the optimizing model in a parametric specification and then examining the parametric restrictions imposed by the optimization hypothesis. The optimization hypothesis is rejected if the estimated parameters are significantly different, in the statistical sense, from the values implied by optimization. I argue that a more fruitful approach to testing optimizing behavior is to measure the departure from optimization using the estimated objective function, and see whether this departure is significant in an economic sense. I discuss procedures for doing this that can be used in several sorts of optimizing models, and give a detailed illustration in the case of aggregate demand estimation.}
}

@article{demuynck-rehbeck23,
	author = {Thomas Demuynck and John Rehbeck},
	journal = {Economic Theory},
	title = {Computing revealed preference goodness-of-fit measures with integer programming},
	year = {2023}
}

@article{houtman1985,
	author = {M. Houtman and J. A. H. Maks},
	date-added = {2021-11-06 15:09:19 +0000},
	date-modified = {2021-11-06 15:15:54 +0000},
	journal = {Kwantitatieve Methoden},
	number = {19},
	pages = {89-104},
	title = {Determining All Maximal Data Subsets Consistent with Revealed Preference},
	volume = {6},
	year = {1985}}

@article{murphy-banerjee15,
	title = {A caveat for the application of the critical cost efficiency index in induced budget experiments},
	journal = {Experimental Economics},
	volume = {18},
	pages = {356-365},
	year = {2015},
	doi = {https://doi.org/10.1007/s10683-014-9407-y},
	author = {James H. Murphy and Samiran Banerjee}
}

@article{matzkin-richter91,
	title = {Testing strictly concave rationality},
	journal = {Journal of Economic Theory},
	volume = {53},
	number = {2},
	pages = {287-303},
	year = {1991},
	issn = {0022-0531},
	doi = {https://doi.org/10.1016/0022-0531(91)90157-Y},
	url = {https://www.sciencedirect.com/science/article/pii/002205319190157Y},
	author = {Rosa L Matzkin and Marcel K Richter},
	abstract = {We prove that the Strong Axiom of Revealed Preference tests the existence of a strictly quasiconcave (or strictly concave, strictly monotone) utility function generating finitely many demand observations. This sharpens earlier results of Afriat, Diewert, and Varian that tested (“nonparametrically”) existence of a piecewise linear utility function. For finite data sets, one implication of our result is that even some weak types of rational behavior—maximization of pseudotransitive or semitransitive preferences—are observationally equivalent to maximization of continuous, strictly concave, and strictly monotone utility functions. And for infinitely many observations, our result is the basis of several new rationality theorems.}
}

@article{echenique-imai-saito23,
	author = {Echenique, Federico and Imai, Taisuke and Saito, Kota},
	title = "{Approximate Expected Utility Rationalization}",
	journal = {Journal of the European Economic Association},
	pages = {jvad028},
	year = {2023},
	month = {04},
	abstract = "{We propose a new measure of deviations from expected utility theory. For any positive number e, we give a characterization of the datasets with a rationalization that is within e (in beliefs, utility, or perceived prices) of expected utility (EU) theory, under the assumption of risk aversion. The number e can then be used as a measure of how far the data is to EU theory. We apply our methodology to data from three large-scale experiments. Many subjects in these experiments are consistent with utility maximization, but not with EU maximization. Our measure of distance to expected utility is correlated with the subjectsâ€™ demographic characteristics.}",
	issn = {1542-4766},
	doi = {10.1093/jeea/jvad028},
	url = {https://doi.org/10.1093/jeea/jvad028},
	eprint = {https://academic.oup.com/jeea/advance-article-pdf/doi/10.1093/jeea/jvad028/50597187/jvad028.pdf},
}

@article{dean-martin16,
	author = {Dean, Mark and Martin, Daniel},
	title = "{Measuring Rationality with the Minimum Cost of Revealed Preference Violations}",
	journal = {The Review of Economics and Statistics},
	volume = {98},
	number = {3},
	pages = {524-534},
	year = {2016},
	month = {07},
	abstract = "{We introduce a new measure of how close a set of choices is to satisfying the observable implications of rationality and apply it to a large, balanced panel of household level consumption data. This new measure, the minimum cost index, is the minimum cost of breaking all revealed preference cycles found in choices from budget sets. Unlike existing measures of rationality, it responds to both the number and severity of revealed preference violations.}",
	issn = {0034-6535},
	doi = {10.1162/REST_a_00542},
	url = {https://doi.org/10.1162/REST\_a\_00542},
	eprint = {https://direct.mit.edu/rest/article-pdf/98/3/524/1918266/rest\_a\_00542.pdf},
}

@article{apesteguia-ballester15,
	ISSN = {00223808, 1537534X},
	URL = {https://www.jstor.org/stable/10.1086/683838},
	abstract = {Evidence showing that individual behavior often deviates from the classical principle of preference maximization has raised at least two important questions: (1) How serious are the deviations? (2) What is the best way to analyze choice behavior in order to extract information for the purpose of welfare analysis? This paper addresses these questions by proposing a new way to identify the preference relation that is closest, in terms of welfare loss, to the revealed choice.},
	author = {Jose Apesteguia and Miguel A. Ballester},
	journal = {Journal of Political Economy},
	number = {6},
	pages = {1278--1310},
	publisher = {The University of Chicago Press},
	title = {A Measure of Rationality and Welfare},
	urldate = {2023-08-21},
	volume = {123},
	year = {2015}
}

@article{echenique11,
	ISSN = {00223808, 1537534X},
	URL = {http://www.jstor.org/stable/10.1086/665011},
	abstract = {We introduce a measure of the severity of violations of the revealed preference axioms, the money pump index (MPI). The MPI is the amount of money one can extract from a consumer who violates the axioms. It is also a statistical test for the hypothesis that a consumer is rational when behavior is observed with error. We present an application using a panel data set of food expenditures. The data exhibit many violations of the axioms. Mostly, the MPI for these violations is small. The MPI indicates that the hypothesis of consumer rationality cannot be rejected.},
	author = {Federico Echenique and Sangmok Lee and Matthew Shum},
	journal = {Journal of Political Economy},
	number = {6},
	pages = {1201--1223},
	publisher = {The University of Chicago Press},
	title = {The Money Pump as a Measure of Revealed Preference Violations},
	urldate = {2022-10-18},
	volume = {119},
	year = {2011}
}

@article{afriat67,
	ISSN = {00206598, 14682354},
	URL = {http://www.jstor.org/stable/2525382},
	author = {Sydney Afriat},
	journal = {International Economic Review},
	number = {1},
	pages = {67--77},
	publisher = {[Economics Department of the University of Pennsylvania, Wiley, Institute of Social and Economic Research, Osaka University]},
	title = {The Construction of Utility Functions from Expenditure Data},
	volume = {8},
	year = {1967}
}

@article{afriat73,
	ISSN = {00206598, 14682354},
	URL = {http://www.jstor.org/stable/2525934},
	author = {Afriat, Sydney},
	journal = {International Economic Review},
	number = {2},
	pages = {460--472},
	publisher = {[Economics Department of the University of Pennsylvania, Wiley, Institute of Social and Economic Research, Osaka University]},
	title = {On a System of Inequalities in Demand Analysis: An Extension of the Classical Method},
	volume = {14},
	year = {1973}
}

@article{echeniquekota15,
	author = {Echenique, Federico and Saito, Kota},
	title = {Savage in the Market},
	journal = {Econometrica},
	volume = {83},
	number = {4},
	pages = {1467-1495},
	keywords = {Revealed preference theory, expected utility, uncertainty},
	doi = {10.3982/ECTA12273},
	url = {https://onlinelibrary.wiley.com/doi/abs/10.3982/ECTA12273},
	eprint = {https://onlinelibrary.wiley.com/doi/pdf/10.3982/ECTA12273},
	abstract = {We develop a behavioral axiomatic characterization of subjective expected utility (SEU) under risk aversion. Given is an individual agent's behavior in the market: assume a finite collection of asset purchases with corresponding prices. We show that such behavior satisfies a “revealed preference axiom” if and only if there exists a SEU model (a subjective probability over states and a concave utility function over money) that accounts for the given asset purchases.},
	year = {2015}
}

@Article{fisman07,
	Author = {Fisman, Raymond and Kariv, Shachar and Markovits, Daniel},
	title = {Individual Preferences for Giving},
	journal = {American Economic Review},
	year = {2007},
	pages = {1858-1876},
	volume = {97},
	number = {5}
}

@Article{halevypersitzzrill18,
	Author = {Halevy, Yoram and Persitz, Dotan and Zrill, Lanny},
	title = {Parametric Recoverability of Preferences},
	journal = {Journal of Political Economy},
	year = {2018},
	month = {July},
	pages = {1558-1593},
	volume = {126},
	number = {4}
}

@Article{nishimura-oK-quah17,
	Author = {Nishimura, Hiroki and Ok, Efe A. and Quah, John K.-H.},
	Title = {A Comprehensive Approach to Revealed Preference Theory},
	Journal = {American Economic Review},
	Volume = {107},
	Number = {4},
	Year = {2017},
	Month = {April},
	Pages = {1239-1263},
	DOI = {10.1257/aer.20150947},
	URL = {http://www.aeaweb.org/articles?id=10.1257/aer.20150947}}

@Article{varian82,
	author = {Varian, Hal R.},
	title = {The Non-Parametric Approach to Demand Analysis},
	journal = {Econometrica},
	year = {1982},
	volume = {50},
	number = {4},
	pages = {945-973},
	month = {July}
}

@Article{varian83,
	author = {Varian, Hal R.},
	title = {Non-Parametric Tests of Consumer Behaviour},
	journal = {The Review of Economic Studies},
	year = {1983},
	volume = {50},
	number = {1},
	pages = {99-110},
	month = {January}
}

\end{document}